\newif\ifproofs\proofsfalse\ifproofs\RequirePackage[displaymath,mathlines]{lineno}\fi
\newif\ifsubsections
    \definecolor{linkred}{rgb}{0.7,0.2,0.2}
    \definecolor{linkblue}{rgb}{0,0.2,0.6}
    \definecolor{linkred}{cmyk}{0.0,0.0,0.0,1.0}
    \definecolor{linkblue}{cmyk}{0,0.0,0.0,1.0}
\theoremstyle{plain}
  \newtheorem{theorem}[equation]{Theorem} 
  \newtheorem{proposition}[equation]{Proposition} 
  \newtheorem{conjecture}[equation]{Conjecture}
\theoremstyle{definition}
 \newtheorem{definition}[equation]{Definition}
  \newtheorem{example}[equation]{Example} 
  \newtheorem{remark}[equation]{Remark}
    \newtheorem{problem}[]{Problem}
\tikzset{->-/.style={decoration={
  markings,
  mark=at position #1 with {\arrow{>}}},postaction={decorate}}}
 \newcommand{\A}{\mathcal{A}_{reg}}
\newcommand{\M}{\mathcal{M}_{reg}}
\newcommand{\MGC}{\ensuremath{\mathcal{M}_{G_{\mathbb{C}}}}}
\newcommand{\MG}{\ensuremath{\mathcal{M}_{G}}}
 \newcommand{\PP}{\mathbb{P}^1}
 \newcommand{\CE}{\mathcal{E}}
 \newcommand{\CO}{\mathcal{O}}
  \newcommand{\C}{\mathbb{C}}
\newcommand{\CP}{\ensuremath{\mathcal P}}
\newcommand{\CX}{\ensuremath{\mathcal X}}
 \newcommand{\R}{\mathbb{R}}
  \newcommand{\SL}{\textrm{SL}}
  \newcommand{\SO}{\textrm{SO}}
   \newcommand{\Sp}{\textrm{Sp}}
\newcommand{\ENa}{\end{equation}}
\begin{document}

%%%%%%%%%%%%%%%%%%%%%%%%%%%%%%%%%%%%%%%%%%%%%%%%%%%%%%%%%%%%%%%%%%%%%
%
% Filling in title page, running head and author fields.
%
% Each point in this template at which you will need to fill in your own
%     information is indicated by a dummy \replace macro (shown above). 
%     Replace this macro and its argument by your corresponding information.
%
%     For example, Gauss would replace  
%         \author{\replace{Ian Morrison}}
%     by 
%         \author{Carl Friedrich Gauß}
%
%     Note that you may add a short title for running heads, if needed as an optional
%         argument as in the example below
%         \title[Guide for PCMI Authors]{Guide for Lecturers in the Park City Mathematics Institute}
%
%
%%%%%%%%%%%%%%%%%%%%%%%%%%%%%%%%%%%%%%%%%%%%%%%%%%%%%%%%%%%%%%%%%%%%%

\title[Mathematics and physics of Higgs bundles]{ Advanced topics in gauge theory: \\ mathematics and physics of Higgs bundles}

%%%%%%%%%%%%%%%%%%%%%%%%%%%%%%%%%%%%%%%%%%%%%%%%%%%%%%%%%%%%%%%%%%%%%
%    
%    Author information--add further authors as needed
%    
%%%%%%%%%%%%%%%%%%%%%%%%%%%%%%%%%%%%%%%%%%%%%%%%%%%%%%%%%%%%%%%%%%%%%
\author{Laura P. Schaposnik}
\address{Department of Mathematics, University of Illinois at Chicago, IL 60607, USA}
\email{schapos@uic.edu}

%%%%%%%%%%%%%%%%%%%%%%%%%%%%%%%%%%%%%%%%%%%%%%%%%%%%%%%%%%%%%%%%%%%%%
%    
%    Classification and abstract
%    
%%%%%%%%%%%%%%%%%%%%%%%%%%%%%%%%%%%%%%%%%%%%%%%%%%%%%%%%%%%%%%%%%%%%%
\subjclass[2010]{Primary ????; Secondary ????}
\keywords{Park City Mathematics Institute}

\begin{abstract}
These notes have been prepared as reading material for the mini-course   given by the author  at the {\em 2019 Graduate Summer School} at \href{https://www.ias.edu/pcmi}{Park City Mathematics Institute} -  Institute for Advanced Study. We   begin by introducing Higgs bundles and their main properties (Lecture 1), and then we  discuss the Hitchin fibration and its different uses (Lecture 2).
The second half of the course is dedicated to studying different types of subspaces (branes) of the moduli space of complex Higgs bundles, their appearances in terms of flat connections and representations (Lecture 3), as well as correspondences between them (Lecture 4).\end{abstract}  

%%%%%%%%%%%%%%%%%%%%%%%%%%%%%%%%%%%%%%%%%%%%%%%%%%%%%%%%%%%%%%%%%%%%%
%    
%    Make the title page
%    
%%%%%%%%%%%%%%%%%%%%%%%%%%%%%%%%%%%%%%%%%%%%%%%%%%%%%%%%%%%%%%%%%%%%%
\maketitle
\thispagestyle{empty}

%%%%%%%%%%%%%%%%%%%%%%%%%%%%%%%%%%%%%%%%%%%%%%%%%%%%%%%%%%%%%%%%%%%%%
%    
%    Your lecture notes replace the remainder of this document.
%    
%%%%%%%%%%%%%%%%%%%%%%%%%%%%%%%%%%%%%%%%%%%%%%%%%%%%%%%%%%%%%%%%%%%%%

% Authors: insert the body of your lectures here

\tableofcontents

%\sectionstyle{ell}
  
%%%%%%%%%%%%%%%%%%%%

\section{Introduction}
%\addcontentsline{toc}{section}{Introduction}

Higgs bundles provide a unifying structure for diverse branches of mathematics and physics. The \textit{Dolbeault Moduli space} of $G_\C$-Higgs bundles  $\mathcal{M}_{G_\C}$ has a hyperk\"ahler structure, and through different complex structures it  can be understood as different moduli spaces:
 \begin{itemize}
 \item
Via the non-abelian Hodge correspondence  the moduli space   is   diffeomorphic as a real manifold to the \textit{de Rham moduli space}   of flat connections \cite{cor,6,N1,simpson88,yau}.% on   holomorphic vector bundles. 
\item Via the Riemann-Hilbert correspondence there is an analytic correspondence between the   de Rham moduli space and the \textit{Betti moduli space}   of surface group representations.   

\end{itemize}

%The above correspondences provide remarkable links to different areas of mathematics and physics.

\noindent
Through these correspondances, Higgs bundles manifest as both flat connections and representations, fundamental objects which are ubiquitous in contemporary mathematics, and closely related to theoretical physics. Some   examples  are:
\begin{itemize}
\item  Through the Hitchin fibration, $\MGC$ gives examples of hyperk\"ahler manifolds which are \textit{integrable systems} \cite{N2},  leading to remarkable applications in physics, for instance in the works of Gukov, Hitchin, Donagi, Pantev, and Witten (e.g., see \cite{Don93,Don95,dopa,sergei0,witten1,N3,witten}). 
\item  Hausel-Thaddeus \cite{Tamas1}  related Higgs bundles to \textit{mirror symmetry}, and   with Donagi-Pantev \cite{dopa}, presented $\MGC$ as a fundamental  example of mirror symmetry for Calabi-Yau manifolds, whose geometry and topology continues to be  studied. 
\item   Kapustin-Witten \cite{Kap} used  Higgs bundles to obtain a physical derivation of the \textit{geometric Langlands correspondence} through mirror symmetry. Soon after, Ng\^{o} found Higgs bundles key ingredients when proving the fundamental lemma in \cite{ngo}. 
 \end{itemize}
In these Lecture Notes we will focus on Higgs bundles and their moduli spaces, by considering three main aspects of the theory:
\begin{itemize}
\item \textbf{The Hitchin fibration}. We shall review the Hitchin fibration as a tool to understand Higgs bundles, and describe the abelianization introduced by Hitchin in \cite{N2}, and the nonabelianization appearing for certain subspaces of Higgs bundles as in \cite{nonabelian}. 
\item \textbf{Construction of branes}. We shall   construct and study families of branes in the moduli space of Higgs bundles.  In particular,   branes are obtained by imposing different structures on the Riemann surface and the gauge group.   We shall also consider
 geometric structures appearing through   branes, among which are  for example \textit{the  spaces of  hyperpolygons}.
\item \textbf{Correspondences}.   Finally, we will also consider correspondences between Higgs bundles and between branes, in relation to different areas of maths and physics: e.g.,  mirror symmetry,  correspondences arising through group automorphisms, and 3-manifolds.  %  
  \end{itemize}

For a longer introduction to spectral data for Higgs bundles the reader may refer to the lecture notes in \cite{NUS}     and \cite{thesis}. Other references shall be mentioned across this Lecture Notes. \\

%\noindent \textbf{Exercises}.
%  Each lecture contains problems  whose solutions can be found mostly in \cite{thesis} as well as in the other papers cited in each lecture, and also commented in the .tex of these notes. Open problems  which might be tackled with methods similar to the ones introduced in the lectures shall also be mentioned, and appear indicated with \textbf{((*))}.  For each of these problems we  suggest references which feature approaches that may be useful.  \\

\noindent \textbf{Bibliography}.  We shall highlight the main references considered, as well as the precise places where the methods used were developed. Since it proves to be very difficult to give a comprehensive and exhaustive account of research in tangential areas, we   restrain ourselves to mentioning related work only when it directly involves methods mentioned in the lectures. The reader should  refer to references within the bibliography for further information (e.g., see references in \cite{ap, ana1, thesis}).  \\

%color 

\section{The geometry of the moduli space of Higgs bundles} 

\epigraph{\textit{Proof is the end product of a long interaction between
creative imagination and critical reasoning. Without
proof the program remains incomplete, but without the
imaginative input it never gets started.}}{Sir Michael Atiyah}

\begin{comment}
}\textit{\hfill{\vbox{\hbox{ Proof is the end product of a long interaction between
} \hbox{creative imagination and critical reasoning. Without
} \hbox{proof the program remains incomplete, but without the
} \hbox{imaginative input it never gets started.} \hbox{\noindent\rule{6cm}{0.4pt}}}}}

 \hfill{\vbox{\hbox{{Sir Michael Atiyah}}}
\end{comment}
%
%\hfill{\vbox{\hbox\textit{A research mathematician, like a creative artist, has
%} \hbox\textit{to be passionately interested in the subject and fully
%} \hbox\textit{dedicated to it. Without strong internal motivation you
%} \hbox\textit{cannot succeed, but if you enjoy mathematics 
%} \hbox\textit{the satisfaction you can get from solving 
%} \hbox\textit{hard problems is immense. } \hbox{\it\noindent\rule{6cm}{0.4pt}}}}
%
%  \hfill{\vbox{\hbox{{Sir Michael Atiyah}}}

%\bigskip
%\hfill{\vbox{\hbox\textit{But most of all a good example is }\hbox\textit{a thing of beauty. It shines and }\hbox\textit{convinces. It gives insight and}\hbox\textit{understanding. It provides the}\hbox\textit{bedrock of belief.}\hbox{\it\noindent\rule{6cm}{0.4pt}}}}
% 
% \hfill{\vbox{\hbox{Sir Michael Atiyah}}}

%\bigskip
We shall dedicate this first  chapter  to review the basic concepts of Higgs bundles and some of their generalizations to principal $G_{\C}$-Higgs bundles,  real Higgs bundles, 
 parabolic Higgs bundles and wild Higgs bundles (for more details refer to \cite{N1,N2,6,cor,simpson88,nit,simpson}). 
\subsection{Higgs bundles for complex groups.} 
Unless specified otherwise, we shall let $\Sigma$ be a compact Riemann surface of genus $g\geq 2$ with canonical bundle $K=T^*\Sigma$.  

  \begin{definition}\label{def:clasical}\label{clas}
   A {\em Higgs bundle} is a pair $(E,\Phi)$ for $E$ a holomorphic vector bundle on $\Sigma$, and  the {\em  Higgs field} $\Phi\in H^{0}(\Sigma,\text{End}(E)\otimes K)$.    \end{definition}

Along these notes, we shall refer to \textit{classical Higgs bundles} as the above Higgs bundles in Definition \ref{clas}. Recall that holomorphic vector bundles $E$ on $\Sigma$  are topologically classified by their rank $rk(E)$  and their  degree $deg(E)$, though which one may define their \textit{slope} as \[\mu(E):= deg(E)/  rk(E).\] 
Then, a vector bundle $E$ is   {\em stable} ({\em or semi-stable}) if for any proper, non-zero sub-bundle $F\subset E$ one has $\mu(F)<\mu(E)$ (or $\mu(F)\leq \mu(E)$).
 It is {\em  polystable} if it is a direct sum of stable bundles whose slope is  $\mu(E)$.
The moduli space of stable bundles of fixed degree $d$ and rank $n$ is an algebraic variety $\mathcal{N}(n,d)$ which can be constructed via  Mumford's Geometric Invariant Theory.
% It is known that the space of holomorphic bundles of fixed rank and fixed degree, up to isomorphism, is not a Hausdorff space.  
Moreover, for coprime $n$ and $d$, the space is a smooth projective algebraic variety of dimension $n^{2}(g-1)+1$.

\begin{example}
Since line bundles are stable, the space $\mathcal{N}(1,d)$ contains all line bundles of degree $d$, and is isomorphic to the Jacobian variety  $\text{Jac}^{d}(\Sigma)$, an abelian variety of dimension $g$.
\end{example}
Generalising the stability condition of vector bundles to Higgs bundles, we shall consider the following:\begin{definition}
 A vector subbundle $F$ of $E$ for which $\Phi(F)\subset F\otimes K$ is said to be a $\Phi$-{\em invariant subbundle} of $E$.  
 A Higgs bundle $(E,\Phi)$ is said to be
\begin{itemize}
 \item {\em  stable} ({\em semi-stable}) if for each proper $\Phi$-invariant  $F\subset E$ one has \[\mu(F)<~\mu(E)~(equiv. \leq);\]
  \item {\em   polystable} if \[(E,\Phi)=(E_{1},\Phi_{1})\oplus (E_{2},\Phi_{2})\oplus \ldots \oplus (E_{r},\Phi_{r}),\] where $(E_{i},\Phi_{i})$ is stable with $\mu (E_{i})=\mu(E)$ for all $i$.
\end{itemize}
 
 \end{definition}

%\subsection{Moduli space of classical Higgs bundles}

\begin{remark}\label{invariantsubbundle}
 The characteristic polynomial of $\Phi$ restricted to an invariant subbundle  divides the characteristic polynomial of $\Phi$. Hence, if the characteristic polynomial of the Higgs field is irreducible, one knows that the corresponding Higgs bundle is automatically stable. \label{remstable}
\end{remark}

\begin{example} \label{exa}\label{classic example} Choose a square root $K^{1/2}$ of the canonical bundle $K$, and  a section $\omega$ of $K^{2}$. A family of classical Higgs bundles $(E,\Phi_\omega)$ (as in Definition \ref{clas}) may be obtained by considering the vector bundle  $E=K^{\frac{1}{2}}\oplus K^{-\frac{1}{2}}$ and the Higgs bundle   $\Phi_{\omega}$  given by
\begin{align}
 \Phi_{\omega}=\left(\begin{array}{cc}
0&\omega\\1&0
            \end{array}\right) \in H^{0}(\Sigma, \text{End}(E)\otimes K)
\nonumber
\end{align} 
\end{example}
 
Consider a strictly semi-stable Higgs bundle $(E, \Phi)$. Since $E$ is not stable, it admits a minimal $\Phi$-invariant subbundle $F\subset E$ for which $\mu(F)=\mu(E)$. Then,  the induced pair  $(F,\Phi)$ is stable and the induced quotient the quotient $(E/F, \Phi')$ is semistable. By induction,  one obtains a flag of subbundles
\[F_{0}=0\subset F_{1}\subset \ldots \subset F_{r}=E\]
where $\mu(F_{i}/F_{i-1})=\mu(E)$ for $1\leq i\leq r$, and where the induced Higgs bundles $(F_{i}/F_{i-1}, \Phi_{i})$ are stable. This is the \textit{Jordan-H\"{o}lder filtration} of $E$, and whilst it  is not unique, there is an induced graded object which is unique up to isomorphism
\[\text{Gr}(E,\Phi):=\bigoplus_{i=1}^{r}(F_{i}/F_{i-1},\Phi_{i})\]

Two semi-stable Higgs bundles $(E,\Phi)$ and $(E',\Phi')$ are said to be $S${\em -equivalent } if $\text{Gr}(E,\Phi)\cong \text{Gr}(E',\Phi')$. 
%
%\begin{exercise} If a pair $(E,\Phi)$ is strictly stable, what is the induced Jordan-H\"{o}lder filtration? 
%\end{exercise}
 % \begin{solution} It is trivial, and so the isomorphism class of the graded object is the isomorphism class of the original pair. \end{solution}
%
Following   \cite[Theorem 5.10]{nit} we let $\mathcal{M}(n,d)$ be the moduli space of $S$-equivalence classes of semi-stable Higgs bundles of fixed degree $d$ and fixed rank $n$. The moduli space $\mathcal{M}(n,d)$  is a quasi-projective scheme, and has an open subscheme $\mathcal{M}'(n,d)$ which is the moduli scheme of stable pairs. Thus, every point is represented by either a stable or a polystable Higgs bundle. When $d$ and $n$ are coprime, the moduli space $\mathcal{M}(n,d)$  is a smooth non-compact  variety which has complex dimension $ 2n^{2}(g-1)+2$ and the cotangent space of $\mathcal{N}(n,d)$ over the stable locus is contained in $\mathcal{M}(n,d)$ as a Zariski  open subset.   

One of the most important characterisations of stable Higgs bundles on a compact Riemann surface $\Sigma$ of genus $g\geq2$ is given in the work of Hitchin \cite{N2} and Simpson \cite{simpson88},  and which carries through to more general settings: 

\begin{theorem}
 If a Higgs bundle $(E,\Phi)$ is stable and  $\text{deg} ~E = 0$, then there is a unique unitary connection $A$ on $E$, compatible with the holomorphic structure, such that
\begin{equation}
 F_{A}+ [\Phi,\Phi^{*}]=0~\in \Omega^{1,1}(\Sigma, \text{End}~E), \label{2.1}%\\
%d_{A}''\Phi&=&0 \label{2.1.bis}
\end{equation}
 where $F_{A}$ is the curvature of the connection.  
\end{theorem}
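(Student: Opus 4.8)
The plan is to recast \eqref{2.1} as an existence-and-uniqueness statement for a Hermitian metric on $E$. Giving a unitary connection $A$ compatible with the holomorphic structure $\bar\partial_{E}$ is the same datum as giving a Hermitian metric $h$ on $E$: then $A$ is the Chern connection of $(\bar\partial_{E},h)$ and $\Phi^{*}$ in \eqref{2.1} means the adjoint $\Phi^{*_{h}}$ taken with respect to $h$. Since $\mathrm{tr}\,[\Phi,\Phi^{*}]=0$, the trace of \eqref{2.1} forces the induced connection on $\det E$ to be flat, and this can be fixed once and for all because $\deg E=0$; so it suffices to solve the trace-free part of \eqref{2.1}. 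Finally, on a Riemann surface the $(1,1)$-forms form a line bundle, so \eqref{2.1} is equivalent to its contraction against the K\"ahler form,
\[ \Lambda\bigl(F_{h}+[\Phi,\Phi^{*_{h}}]\bigr)=0, \]
a second-order quasilinear elliptic equation for the metric $h$.

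To produce a solution I would run the continuity method. Fix a background metric $H_{0}$ (chosen so that $\det E$ is flat) and write $h=H_{0}e^{s}$, with $s$ a trace-free endomorphism self-adjoint with respect to $H_{0}$; for $\epsilon\geq 0$ consider the perturbed equation
\[ \Lambda\bigl(F_{H_{0}e^{s}}+[\Phi,\Phi^{*_{H_{0}e^{s}}}]\bigr)+\epsilon\,s=0. \]
For $\epsilon>0$ this is the Euler--Lagrange equation of a strictly convex functional on the space of metrics (Donaldson's functional together with an $\epsilon\|s\|^{2}$ term), hence solvable; more precisely, the set of $\epsilon>0$ admitting a solution is nonempty (take $\epsilon$ large), open by the implicit function theorem---the linearization is $\Delta+\epsilon+(\text{nonnegative zeroth order})$, hence invertible---and, granted the a priori bound below, closed. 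One then lets $\epsilon\to 0$.

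The analytic core, which I expect to be the real obstacle, is a bound $\sup_{\Sigma}|s_{\epsilon}|_{H_{0}}\leq C$ uniform in $\epsilon$, and this is precisely where stability enters. Integrating the equation yields an $L^{1}$ bound on $s_{\epsilon}$, and Simpson's ``main estimate'' (an elliptic/Moser iteration) upgrades this to the $C^{0}$ bound \emph{unless} $\sup|s_{\epsilon}|\to\infty$. In that case one rescales $u_{\epsilon}=s_{\epsilon}/\sup|s_{\epsilon}|$ and, by the Uhlenbeck--Yau argument, extracts a nonzero weak limit whose eigenvalues are constant and whose eigenspaces cut out a weakly holomorphic subbundle $F\subset E$; since the entire construction is equivariant for $\Phi$, this $F$ is $\Phi$-invariant, and its slope contradicts $\mu(F)<\mu(E)=0$, contradicting stability. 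With the uniform estimate in hand, elliptic regularity bootstraps the higher derivatives and a subsequence of the $s_{\epsilon}$ converges to a solution $s_{\infty}$ of the Hitchin equation; the metric $h=H_{0}e^{s_{\infty}}$ then yields the desired connection $A$. (Alternatively one may replace the continuity method by the Hitchin--Simpson heat flow $\partial_{t}h=-h\,\Lambda(F_{h}+[\Phi,\Phi^{*_{h}}])$: long-time existence is a routine parabolic argument, and convergence as $t\to\infty$ again rests on the same stability-based estimate.)

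For uniqueness, suppose $h_{1},h_{2}$ both solve the equation and write $h_{2}=h_{1}\sigma$ with $\sigma$ positive and self-adjoint with respect to $h_{1}$. Subtracting the two equations and applying a maximum-principle/Bochner argument (as in Simpson) shows that $\sigma$ is an automorphism of the Higgs bundle $(E,\Phi)$, i.e.\ it is $\bar\partial_{E}$-holomorphic and commutes with $\Phi$. But a stable Higgs bundle is simple---by the standard slope argument, any such endomorphism has $\Phi$-invariant kernel and image and must therefore be a constant multiple of the identity---so $\sigma$ is scalar and $h_{1},h_{2}$ induce the same Chern connection. Hence $A$ is unique, completing the plan.
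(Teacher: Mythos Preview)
The paper does not actually prove this theorem: it is stated as a result quoted from Hitchin \cite{N1} and Simpson \cite{simpson88}, with no argument given in the text. So there is no ``paper's own proof'' to compare against.

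That said, your outline is a faithful sketch of Simpson's proof in \cite{simpson88}: reformulating \eqref{2.1} as a Hermitian--Einstein--type equation for the metric, running the continuity method on the $\epsilon$-perturbed equation, and invoking the Uhlenbeck--Yau style blow-up to extract a $\Phi$-invariant weakly holomorphic subbundle contradicting stability if the $C^{0}$ bound fails. The uniqueness argument via simplicity of stable Higgs bundles is also the standard one. Two small caveats worth keeping in mind if you ever fill in the details: (i) the limiting object in the blow-up is a priori only a saturated subsheaf, and one needs the regularity theorem for weakly holomorphic subbundles (or the fact that on a curve any torsion-free sheaf is locally free) to get an honest subbundle; (ii) in the uniqueness step, simplicity only pins down the metric up to a positive scalar, so the Chern connection $A$ is unique but the metric itself is unique only up to scale --- your last sentence gets this right, but the penultimate one (``$\sigma$ is scalar'') should not be read as $\sigma=1$. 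Hitchin's original argument in rank two proceeded somewhat differently, via an infinite-dimensional moment-map and limiting-sequence argument rather than the continuity method, but the role of stability is morally the same.
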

Equation (\ref{2.1}) together with the holomorphicity condition
\begin{equation}
d_{A}''\Phi=0 \label{hit2}
\end{equation}
are called the \textit{Hitchin equations}, where  $d_{A}''\Phi$ is the anti-holomorphic part of the covariant derivative of the Higgs field $\Phi$. Moreover, Hitchin showed that the moduli space of Higgs bundles is a hyperk\"ahler manifold with natural symplectic form $\omega$ defined on the infinitesimal deformations
 $(\dot A,\dot \Phi)$ of a Higgs bundle   
\begin{equation}\label{ch2:2.1}
 \omega((\dot{A}_{1},\dot{\Phi}_{1}),(\dot{A}_{2},\dot{\Phi}_{2}))=\int_{\Sigma}\text{tr}(\dot{A}_{1}\dot{\Phi}_{2}-\dot{A}_{2}\dot{\Phi}_{1}),
\end{equation}
 where $\dot A \in \Omega^{0,1}(\text{End}_{0} E)$ and $\dot\Phi\in \Omega^{1,0}(\text{End}_0 E)$ 
(see \cite{N1, N2} for details). For simplicity, we shall fix $n$ and $d$ and write $\mathcal{M}$ for $\mathcal{M}(n,d)$.

%\subsection{Moduli space of $ G_\C$-Higgs bundles}
 
The notion of Higgs bundle can be generalized to encompass principal $ G_\C$-bundles, for $ G_\C$ a complex  semi-simple Lie group. For more details, the reader should refer  to \cite{N2}.
  \begin{definition}\label{principalLie}\label{defHiggs}\label{complex}
   A $ G_\C${\em -Higgs bundle}  is a pair $(P,\Phi)$ where  $P$ is a principal $G_\C$-bundle over $\Sigma$, and the Higgs field $\Phi$ is a holomorphic section of the vector bundle  {\em ad}$P\otimes_{\mathbb{C}} K$, for {\em ad}$P$ the vector bundle associated to the adjoint representation. 
  \end{definition}

When $ G_\C\subset GL(n,\mathbb{C})$, a $ G_\C$-Higgs bundle gives rise to a Higgs bundle in the classical sense,  with
some extra structure reflecting the definition of $ G_\C$. In particular, classical Higgs bundles are given by $GL(n,\mathbb{C})$-Higgs bundles.  %In the following sections we shall see what these conditions are when $ G_\C$ is a classical semi-simple Lie group.  
\begin{example} The Higgs bundles in Example \ref{exa} have traceless Higgs field, and the determinant  $\Lambda^{2}E$ is trivial. Hence, for each quadratic differential $\omega$ one has an $SL(2,\mathbb{C})$-Higgs bundle $(E,\Phi_\omega)$.
\end{example}
%
  %In particular, by looking at parabolic subgroups one can define the corresponding  moduli space for $ G_\C$-Higgs bundles.
For  $ G_\C$ be a complex semisimple Lie group, we denote by $\mathcal{M}_{ G_\C}$  the moduli space of $S$-equivalence classes of   polystable $ G_\C$-Higgs bundles.
Following \cite{ram} one can define stability for principal $ G_\C$-bundles, which when considering groups of type $A,B,C,D$, can be expressed in terms of stability for classical Higgs bundles (see \cite[Section 1.1]{ap} for a comprehensive study). Moreover, the notion of polystability may be carried over to principal $ G_\C$-bundles, allowing one to construct the moduli space of isomorphism classes of polystable principal $ G_\C$-bundles of fixed topological type over the compact Riemann surface $\Sigma$. Since in these notes we shall be working with Higgs pairs which are automatically stable (as in Remark \ref{remstable}),   we shall not dedicate time to recall the main study of stability for principal Higgs bundles. For details about the corresponding constructions, the reader should refer for example to \cite[Section 3]{biswas} and  \cite[Section 1]{ap}.

\begin{example}
An $SO(2n,\mathbb{C})$-Higgs bundle is a pair $(E,\Phi)$, for $E$ a holomorphic vector bundle of rank $2n$ with a non-degenerate symmetric bilinear form $(~,~)$, and    $\Phi\in H^{0}(\Sigma,\textrm{End}_{0}(E)\otimes K)$ the Higgs field  satisfying
\begin{equation}\label{soex}(\Phi v,w)=-(v,\Phi w).\end{equation}
Equivalently, an $SO(2n+1,\mathbb{C})$-Higgs bundle is a pair $(E,\Phi)$ for $E$ a holomorphic vector bundle of rank $2n+1$ with  a non-degenerate symmetric bilinear form $(v,w)$, and  $\Phi$ a Higgs field  in $H^{0}(\Sigma,\textrm{End}_{0}(E)\otimes K)$ which satisfies \eqref{soex}.

When considering the structure of an orthogonal Higgs field $\Phi$, one can see that for a generic matrix $A\in \mathfrak{so}(2n+1,\mathbb{C})$ (and also for $A\in \mathfrak{so}(2n,\mathbb{C})$), its  distinct eigenvalues occur in $\pm \lambda_{i}$ pairs. Thus, the characteristic polynomial of $A$, which shall become important in the next chapters,  must be of the form
\begin{equation}\textrm{det}(x-A)=x(x^{2n}+a_{1}x^{2n-2}+\ldots +a_{n-1}x^{2}+a_{n}).\label{charpolso}\end{equation}
 \end{example}

\subsection{Real Higgs bundles.}

 Higgs bundles for real forms were first studied by N. Hitchin in \cite{N1}, and the results for $SL(2,\mathbb{R})$ were generalised in \cite{N5}, where Hitchin studied the case of $G=SL(n,\mathbb{R})$ appearing through involutions on the Lie algebra, a perspective we shall come back to in later chapters. Using Higgs bundles he counted the number of connected components and, in the case of split real forms, he identified a component homeomorphic to $\mathbb{R}^{\text{dim} G(2g-2)}$ and which naturally contains a copy of a  Teichm\"uller space.

Following \cite{N2}, in order to obtain a $G$-Higgs bundle for a non-compact real form $G$ of $G_\C$ and real structure $\tau$,    the connection $A$ which solves Hitchin equations \eqref{2.1}-\eqref{hit2} needs to satisfy
\begin{equation}
\nabla=\nabla_A+\Phi-\rho(\Phi)
\end{equation}
to have  holonomy in  $G$, where $\rho$ is the compact real structure of $G_\C$.

 Since the connection $A$ has holonomy in the compact real form of $G_\C$, we have $\rho(\nabla_A)=\nabla_A$.   Hence, requiring  $\nabla=\tau(\nabla)$ is equivalent to  $\nabla_{A}=\tau(\nabla_{A})$
and $\Phi-\rho(\Phi)=\tau(\Phi-\rho(\Phi)).$
In terms of the involution $\sigma:=\rho\tau$, these two equalities are given by 
\begin{align}\sigma(\nabla_{A})&=\nabla_{A}\label{invol1}\\
            \sigma(\rho(\Phi)-\Phi)&= \Phi-\rho(\Phi).\label{invol2}
\end{align}

From the above, $\nabla$ has holonomy in the real form $G$ if $\nabla_{A}$ is invariant under  $\sigma$, and $\Phi$ anti-invariant. 
 Hence for $G$ a real form of a complex semisimple Lie group $G_\C$, we may construct $G$-Higgs bundles as follows. For $H$ the maximal compact subgroup of $G$, we have seen that the Cartan decomposition of $\mathfrak{g}$ is given by
$\mathfrak{g}=\mathfrak{h}\oplus \mathfrak{m},$
for $\mathfrak{h}$ the Lie algebra of $H$, and $\mathfrak{m}$ its orthogonal complement. This induces a decomposition for the  complex Lie algebras as  
$\mathfrak{g}_\C=\mathfrak{h}_\C \oplus \mathfrak{m}_\C.$
Through the induced isotropy representation 
$\textrm{Ad}|_{H^{\mathbb{C}}}: H^{\mathbb{C}}\rightarrow GL(\mathfrak{m}^{\mathbb{C}}), $
one has  a concrete description of $G$-Higgs bundles (for more details, see for example \cite{brad1}):

\begin{definition}\label{real}
 A \textrm{principal} $G$\textrm{-Higgs bundle } is a pair $(P,\Phi)$ where
 $P$ is a holomorphic principal $H^{\mathbb{C}}$-bundle on $\Sigma$, and 
  $\Phi$ is a holomorphic section of the bundle $P\times_{Ad}\mathfrak{m}^{\mathbb{C}}\otimes K$.
 \end{definition}

\begin{example}\label{expq}
Consider the non-compact real form $\textrm{SU}(p,q)$ of $\SL(p+q,\C)$ with Lie algebra 
\[\mathfrak{su}(p,q)=\left\{
\left(
\begin{array}{cc}
 Z_{1}&Z_{2}\\
\overline{Z}^{t}_{2}&Z_{3}
\end{array}
\right)~\left|
\begin{array}{c}
Z_{1}, Z_{3}~ {~\rm ~ skew~Hermitian~of ~order  ~}p \textrm{~and ~} q, \\
\textrm{Tr}Z_{1}+\textrm{Tr}Z_{3}=0~,~Z_{2}{~\rm arbitrary}
\end{array}\right.
\right\},\] whose complexification has Cartan decomposition  %
\[\mathfrak{su}(p,q)^{\mathbb{C}}=\mathfrak{sl}(p+q,\mathbb{C})=\mathfrak{h}^{\mathbb{C}}\oplus \mathfrak{m}^{\mathbb{C}},\]
where $\mathfrak{m}^{\mathbb{C}}$ corresponds to the off diagonal elements of $\mathfrak{sl}(p+q,\mathbb{C})$. 
The centre of $SU(p,q)$ is $U(1)$, and its  maximal compact subgroup is    $H=S(U(p)\times U(q)),$ whose complexified  Lie group is 
\[H^{\mathbb{C}}=\{(X,Y)\in GL(p,\mathbb{C})\times GL(q,\mathbb{C}): \textrm{det} Y = (\textrm{det}X)^{-1}\}.\] Hence,   an $SU(p,q)$-Higgs bundle over $\Sigma$ is a pair $(E,\Phi)$ where
 $E=V_{p}\oplus V_{q}$ for $V_{p},V_{q}$ vector bundles over $\Sigma$ of rank $p$ and $q$ such that $\Lambda^{p}V_{p}\cong \Lambda^{q}V_{q}^{*}$, and the Higgs field $\Phi$ is given by
        \begin{equation}
	        \Phi={ \left( \begin{array}
	          {cc} 0&\beta\\
	\gamma&0
	         \end{array}\right), \label{hig}}
        \end{equation}
for $\beta:V_{q}\rightarrow V_{p}\otimes K$ and $\gamma:V_{p}\rightarrow V_{q} \otimes K$.  
%Finally, the involution $\sigma=\rho\tau$ in \eqref{invol1}-\eqref{invol2} is built though  $\rho(X)=-\overline{X}^{t}$ and   
% $\tau(X)=-I_{p,q}\overline{X}^{t}I_{p,q}.$ 
 
\end{example}

%
%The state of the art in the study of the number of connected components of $\mathcal{M}_{G}$ for different groups is as follows, where some of the results only apply to particular values of the associated invariant $d$ (the degree of an associated vector bundle): 
% 

%  \begin{center}
%\begin{tabular}{||c|c|c|}
%\hline G & Constraints & Reference \\ \hline
%  $PSL(n, \mathbb{R})$  &         -    & \cite{N5}           \\ \hline
%$SU(p, q)$, $PU(p, q)$   &    $d = 0, d_{max}$         &  \cite{brad, xia2, Richard, xia}  \\ \hline
%$SU(p, p)$, $U(p, p)$   & -         &  \cite{ umm, thesis,peon}  \\ \hline
% $Sp(4, \mathbb{R})$   &      -       &       \cite{gothen1, Go2}   \\ \hline
% $Sp(2n, \mathbb{R})$, $n > 2$  &    $d = 0, d_{max}$          &       \cite{gothen1, Go2}     \\ \hline
%       $Sp(p,q)$   &         -    &          \cite{andre1,nonabelian, thesis,sonn} \\ \hline
%%      
%%
%%\end{tabular}
%% \end{center}
%% \begin{center}\begin{tabular}{||c|c|c||}
%%\hline G & Constraints & Reference \\ \hline
%%  $SO^{*} (2n)$   & $ d=0, d_{max}$             &  \cite{brad3}       \\ \hline
%    $SO^{*}(2n)$   &         -    &          \cite{brad3,nonabelian}  \\ \hline
% $SO_{0} (1, 2n+1)$  &        -     &       \cite{ap,sonn}    \\ \hline
%  $SO(p, q)$  &      -   &       \cite{sonn,ortlau,brad19}    \\ \hline
%
% $U^{*}(2n)$   &         -    &         \cite{andre2,nonabelian}   \\ \hline
% $GL(n, \mathbb{R})$   &  -            &   \cite{brad2,sonn}
%        \\ \hline
%\end{tabular}
%       \end{center}

For more details about the construction of real Higgs bundles and the study of their geometric properties, the reader might be interested to see the lecture notes and thesis \cite{Go2,thesis,NUS,peon,ap}, as well as the following -- certainly non-exhaustive- list of papers which we have found useful in the past (and references therein): \cite{N5,brad,xia2,Richard,xia,umm,gothen1,andre1,nonabelian,cayley,brad3,ortlau,brad19,andre2,brad1,brian2,brian3,bra1}
In particular, the reader should refer to \cite{GP09} for the Hitchin-Kobayashi type correspondence for real forms.
 
\subsection{Parabolic   Higgs bundles.}\label{secpara}

The notion of Higgs  bundles can be generalized in several directions, and one which has caught the attention of mathematicians and physicists during the last decades is the case of \textit{parabolic Higgs bundle}, following the generalization of vector bundles to parabolic bundles \cite{meta}. 
 On a compact Riemann surface $\Sigma$ of genus $g$ with marked points $p_1, \ldots, p_n$ satisfying  $2g-2+n\geq 0$, we denote by $D$ the effective divisor $D=p_1+\ldots+p_n$.  Then, from \cite{meta} a \textit{parabolic vector bundle}  $\boldsymbol{E}$  on $\Sigma$ is an algebraic rank $r$ vector bundle $E$ on $\Sigma$ together with a \textit{parabolic structure}, which is a (not necessarily full) flag for the fibre of $E$ over the marked points
\begin{equation}0=E_{p_i, 0}\subset E_{p_i, 1}\subset E_{p_i,2}\subset\ldots\subset E_{p_i, r_i}=E|_{p_{i}},\end{equation}
together with a set $\alpha_i=(\alpha_{i,0},\ldots, \alpha_{i,r_i})$ of \textit{parabolic weights} associated to each marked point $p_i$ given by real numbers satisfying
\begin{equation}1= \alpha_{i,0}>\alpha_{i,1}>\ldots>\alpha_{i,r_i}\geq 0.\label{parwei}\end{equation}

A \emph{parabolic} holomorphic map is a map $f:E\to E'$ between parabolic bundles  for which 
$\alpha_{i}(p_j)>\alpha'_{k}(p_j)$ implies
$f(E_{p_j,i})\subset E'_{p_j,k+1}$  for all $p_j\in D$, where  
 $\alpha'_{k}(p_j)$ are the weights on $E'$.  
The parabolic degree   of a parabolic bundle $E$ are then defined as
\begin{equation} \textrm{par}\deg(E)= \deg(E)+\sum_{j=1}^n\sum_{i=1}^{r_j}m_{j,i}\alpha_{j,i},\end{equation}
where  $m_{j,i} =\dim (E_{p_j,i}/E_{p_j,i+1})$, and the parabolic slope of the parabolic bundle is defined as $\textrm{par}\mu(E)=  \textrm{par}\deg(E)/\textrm{rk}(E)$. 
A parabolic bundle is called (semi)-stable if for every parabolic
subbundle $F$ of $E$, the parabolic slope satisfies $ \textrm{par}\mu(F)\le
 \textrm{par}\mu(E)$ (resp. $ \textrm{par}\mu(F)< \textrm{par}\mu(E)$). Through the parabolic  stability for $\boldsymbol{E}$ one can construct the moduli space of semi-stable parabolic vector bundles \cite{meta}, usually denoted by $\mathcal{N}_\alpha$, which is a normal projective
variety, non-singular when the weights are generic.

One can generalize the notion of parabolic vector bundles and consider \textit{parabolic Higgs bundles}, or PHG, as was first done by C.~Simpson in \cite{carlos}.  
\begin{definition}A \textit{(classical) parabolic Higgs bundle} is a pair $(\boldsymbol{E}, \Phi)$ where\begin{small}
\begin{itemize}
\item $\boldsymbol{E}$ is a parabolic vector bundle as defined above, 
\item the Higgs field $\Phi\in H^{0}(\Sigma, \textrm{End}(E)\otimes K_{\Sigma}(D))$ satisfies
$ \Phi|_{p_{i}} (E_{i,j})\subset E_{i, j}\otimes  K_{\Sigma}(D)|_{p_{i}}.$\end{itemize}
 \end{small}
 \end{definition}

The notions of stability extend to parabolic Higgs bundles, and through these one obtains  the moduli space $\mathcal{P}_{\alpha}$ of $\alpha$-semi-stable parabolic Higgs bundles of degree $d$ and rank $r$, which was first constructed by Yokogawa in \cite{Yoko} (see also \cite{Yoko2}). This space hasdimension
$  (2g-2)r^2 +2 + nr(r-1),$  is a normal quasi-projective variety, and was
constructed as a hyperk\"ahler quotient using gauge theory  in \cite{konno}, and topological properties of this moduli
space for rank  $r=2$ parabolic Higgs bundles were studied through the Morse theoretic approach of Hitchin in \cite{Yoko}.
\begin{example}\label{expqpara}
A \emph{$U(p,q)$-parabolic Higgs bundle} on $\Sigma$ is a parabolic
Higgs bundle $(E,\Phi)$ such that $E=V\oplus W$, where $V$ and $W$
are parabolic vector bundles of rank $p$ and $q$ respectively, and
  \[
  \Phi=\left(\begin{array}{ll}  0 & \beta \\ \gamma & 0
  \end{array}\right) :(V\oplus W)\to (V\oplus W)\otimes K(D),
  \]
where the non-zero components $\beta:W\to V\otimes K(D)$ and
$\gamma: V\to W\otimes K(D)$ are  parabolic morphisms and $D$ is an effective divisor on $\Sigma$ (the reader should compare this to Example \ref{expq}).
\end{example}
 
Parabolic Hitchin systems can be considered to allow simple poles in the gauge and Higgs fields at marked points on a Riemann surface, whilst higher order poles lead us \textit{Wild Higgs bundles}, which we shall consider next.

%%%%%%%%%%%%%%%%%%%%%%%%
%%%%%%%%%%%%%%%%%%%%%%%%
%%%%%%%%%%%%%%%%%%%%%%%%
%%%%%%%%%%%%%%%%%%%%%%%%
%%%%%%%%%%%%%%%%%%%%%%%%
%%%%%%%%%%%%%%%%%%%%%%%%
%%%%%%%%%%%%%%%%%%%%%%%%
%%%%%%%%%%%%%%%%%%%%%%%%
%%%%%%%%%%%%%%%%%%%%%%%%
%%%%%%%%%%%%%%%%%%%%%%%%
%%%%%%%%%%%%%%%%%%%%%%%%
%%%%%%%%%%%%%%%%%%%%%%%%
%%%%%%%%%%%%%%%%%%%%%%%%
%%%%%%%%%%%%%%%%%%%%%%%%
%%%%%%%%%%%%%%%%%%%%%%%%
%%%%%%%%%%%%%%%%%%%%%%%%
%%%%%%%%%%%%%%%%%%%%%%%%
%%%%%%%%%%%%%%%%%%%%%%%%
%%%%%%%%%%%%%%%%%%%%%%%%
%%%%%%%%%%%%%%%%%%%%%%%%
 \subsection{Wild Higgs bundles.}
Across these notes we shall   limit ourselves to the study of Wild Higgs bundles  $(E,\Phi)$ over the projective line $\PP$. For a fixed effective divisor $D=p_1+\ldots+p_n$ in $\mathbb{P}^1$, we take these pairs  as  composed of  a vector bundle $E$ and a  $\CO_{\PP}(D)$ valued endomorphism  $\Phi:\CE\rightarrow \CE \otimes \CO_{\PP}(D)$ called the Higgs field. 
 The divisor $D$ controls where $\Phi$ is allowed to have poles, so it is sometimes referred to as the \textit{polar divisor}, and in this setting one   considers  stable, integrable
connections with irregular singularities of the form 
\begin{equation}
\mathcal{A}=d+T_n \frac{dz}{z^n}+ \ldots +T_1 \frac{dz}{z}, 
\end{equation}
for $n>1$, and stable parabolic Higgs bundles $(E,\Phi)$ as in the previous section,  where the 
Higgs field has polar parts, e.g., 
\begin{equation}\label{tes}
T_n \frac{dz}{z^n}+ \ldots + T_1 \frac{dz}{z},
\end{equation}
and is defined such that the fibre of $E$ at $p_j$ is preserved by the residue $\textrm{res}(\Phi,p_j)$, and where  $T_1, T_2, \ldots , T_n$ are elements of the Lie algebra of the group\footnote{One usually assumes that the matrix $T_n$ is diagonalizable with distinct eigenvalues; this rules out nilpotence which rules out the possibility of a higher order regular singularity and also allows one to define the formal diagonalizing gauge transformation taking a connection to its irregular type given by (1.15).}.  When considering Higgs bundles of low rank, we shall sometimes take the notation of   \cite{BoalchIso} and write a complexified connection as:
\begin{equation}
\mathcal{A}= d Q +T_1 \frac{dz}{z},\label{par1}
\end{equation}
for $Q$ a square diagonal matrix of meromorphic connections\footnote{One typically applies a formal diagonalizing procedure before defining $Q$.}, where varying $Q$ in \ref{par1}, is equivalent to varying $(T_n,T_{n-1}, \ldots, T_2)$ in \eqref{tes}.

\begin{definition}\label{mono11}
The \textbf{formal monodromy} of $\mathcal{A}$ is
$M_0:=e^{2\pi i T_1}.$
Moreover, we say that $T_1$ is the \textbf{exponent} of the formal monodromy.
\end{definition}

Following  SimpsonÕs construction in \cite{simpson} for parabolic Higgs bundles, it is natural to assume  
that the connections and Higgs fields are holomorphically gauge equivalent to ones with diagonal polar
parts.   Similar to the regular case, the moduli space of these wild Higgs bundles  can be obtained as a hyperk\"ahler quotient  \cite{Boa12,Boa14}. 
In these notes we shall follow the notation of  \cite{BoalchAnnal} where a wild Higgs bundle of type $(m, r_1 , r_2 , \ldots)$ is a Higgs bundle with $m$ poles of orders $k_i:=r_i+1$. To understand the moduli spaces of Higgs bundles, in what follows we shall restrict our attention to rank 2 Higgs bundles with poles over $\mathbb{P}^1$.

\begin{remark}It is interesting to consider   Higgs bundles whose number of poles with multiplicities is exactly 4 since those are closely related to solutions to Painlev\'e  equations:
 (A)  type $(1,3)$ gives Painlev\'e  II -- see Problem I.(5).iii.; 
 (B)  type $(2,1,1)$ gives  Painlev\'e  III -- see Example \ref{ex22};
 (C) type $(4,0,0,0,0)$ gives  Painlev\'e  VI -- see Section \ref{secpara}). \label{Prem}
 \end{remark}

 For rank two Higgs bundles, from  \cite[Remark 9.12]{BoalchAnnal} one obtains moduli spaces of complex dimension two when
 $(m, r_1 , r_2 , \ldots)=(4, 0, 0, 0, 0)$ as in (C) of Remark \ref{Prem}, $(3, 1, 0, 0), (2, 1, 1)$ as in (B) of Remark \ref{Prem}, $(2,
2, 0),$ and  $(1, 3)$ as  in (A) of Remark \ref{Prem}. In order to illustrate the main ingredients in the study of wild Higgs bundles, in what remains of the chapter we shall consider some particular examples. The reader should refer to Boalch's papers mentioned above for more details and complete proofs, as well as \cite[Appendix A]{tbrane} for further examples appearing in relation to String theory.  Finally, for open problems in the area, a non-exhaustive list of questions relating Wild Higgs bundles to singular geometry is given in \cite{SIGMA}, and for more on the relation with Painlev\'e's equations see for example \cite{SZ} and references therein.

\subsubsection*{Higgs bundles with one pole}
In what follows we shall give a construction of the main ingredients introduced in \cite{BoalchIso} by first studying rank 2 Higgs bundles on $\mathbb{P}^1$ with one double pole on a fixed point $p_1=0$ ( see also \cite[Appendix A]{Anderson:2017rpr}). In this setting, the connection in \eqref{par1} is
\begin{equation}
\mathcal{A}=  dz \left(\frac{T_2}{z^2}+\frac{T_1}{z}\right), \label{L_wit1}
\end{equation}
where $T_2=\frac{dQ}{dz} z^2$ and $T_1$  is a constant diagonal matrix.  In particular, we write \begin{equation}
Q=\left(\begin{array}
{cc}
q_1(z)&0\\0&q_2(z)
\end{array}\right), \label{par2}
\end{equation}
and define $q_{ij}(z)$ as the leading term of $q_i(z)-q_j(z)$. For $q_1(z) -q_2(z)=a/z+b$, one has that 
 \begin{equation}
 q_{12}(z)=a/z,~\textrm{and }~
 q_{21}(z)=-a/z.\end{equation}

The set of \textit{anti-Stokes directions} $\mathbb{A} \subset S^1$ is composed of the directions $\textrm{d} \in S^1$ for which either $q_{12}(z) \in \mathbb{R}_{<0}, ~{~\rm or~}q_{21}(z) \in \mathbb{R}_{<0}
$ 
for $z$ on the ray specified by $\textrm{d}$.
%%Consider directed lines in $\mathbb{C}$ through the origin parametrised by $S^1$. For $\textrm{d}_1,\textrm{d}_2\in S^1$ we let $\textrm{Sect}(\textrm{d}_1, \textrm{d}_2)$ be the (open) sector swept out by rays rotating   from $\textrm{d}_1$ to $\textrm{d}_2$.
These are the directions along which $e^{(a/z + b)}$ (respectively $e^{-(a/z + b)}$) decays most rapidly as $z$ approaches 0, 
%It follows that $\mathbb{A}$ is independent of the coordinate choice $z$.
and since
 if
$\pm ~\frac{a}{z} \in \mathbb{R}_{<0}$ then $\mp~\frac{a}{z} e^{-i\pi}\in \mathbb{R}_{<0}$, and thus $\# \mathbb{A}:=r= 2.$
Note that if $q_{12}(z)\in  \mathbb{R}_{<0}$ for a direction $\textrm{d}$, then $q_{21}(z)\not \in  \mathbb{R}_{<0}$ for that direction (and vice versa). We shall refer to a direction as a \textit{half-period}:  for $l:=r/2$,   a \textit{half-period}  is  an $l$-tuple $\textrm{d}=(\textrm{d}_1,\ldots, \textrm{d}_l)\subset \mathbb{A}$ of consecutive anti-Stokes directions. The \textit{multiplicity} $\textrm{Mult}(d)$ of $\textrm{d}$ is the number of roots supporting $\textrm{d}$.  When weighted by their multiplicities, the number
of anti-Stokes directions in any half-period is $1=\textrm{Mult}(\textrm{d}_1)+\ldots+\textrm{Mult}(\textrm{d}_l).$  
In our setting of \eqref{L_wit1},  letting $a=\tilde r e^{i \tilde \theta}$ there are only two half periods: 
\begin{itemize}
\item From $q_{12}$  a direction $\textrm{d}_1:= e^{i( \tilde \theta -\pi)},$ for which taking $z=  r' e^{i(  \tilde \theta-\pi)}$ one has $q_{12}(z)=-\frac{\tilde r}{ r'}\in \mathbb{R}_{<0};$
\item From $q_{21}$   a direction $\textrm{d}_2:=e^{-i\pi} \textrm{d}_1=e^{i \tilde \theta},$ for which taking $z=  r' e^{i\tilde \theta}$ gives $q_{21}(z)= -\frac{\tilde r }{r'}\in \mathbb{R}_{<0}.$
\end{itemize}
 
For $d$ an anti-Stokes direction, the \textit{roots} of $\textrm{d}$ are
\[\textrm{Roots}(\textrm{d}) :=\{(ij)~ | ~q_{ij}(z) \in  \mathbb{R}_{<0}~{\rm~ along~} \textrm{d}\}.\] Note that  the Stokes directions $ \textrm{d}_{1}$ and $ \textrm{d}_{2}$ define the orderings
$q_1<_{ \textrm{d}_{1}} q_2,$ and
$q_2<_{ \textrm{d}_{2}}q_1.
$
Moreover, the directions define the \textit{Stokes sectors} ${\textrm{Sect}_i} :=(\textrm{d}_i , \textrm{d}_{i+1} )$ for $i$ mod $\ell$.  %
 For the  Higgs bundle arising from \eqref{L_wit1} the roots are $
\textrm{Roots}(\textrm{d}_1) =\{(12)\}$ and $
\textrm{Roots}(\textrm{d}_2) =\{(21)\}$.

 %%%%%%%%%%%%%%%%%%%%%%
\begin{remark}\label{rem4}For a Higgs bundle arising from \eqref{par1} with one pole of order 3 at zero, writing
$q_1 -q_2=\frac{a}{z^4}+\frac{b}{z^3}+\frac{c}{z^2}+\frac{d}{z}+e,$ leads to   $q_{12}(z)=\frac{a}{z^3}$ and $q_{21}(z)=-\frac{a}{z^3}.$ This system has 6 anti-Stokes directions, and we shall come back to it in the problem set. \end{remark}
  
%  \begin{figure}[H]
% \centering
%  \includegraphics[width=0.4\textwidth]{wild4_4.pdf} 
%\caption{Anti-Stokes directions.}\label{wild10}
%\end{figure}
% 

 The \textit{group of Stokes factors} associated to a direction $d$  is the group
\[\mathbb{S}to_d(\mathcal{A}):=\{K\in  G ~| ~(K)_{ij}=\delta_{ij} ~{~\rm unless~} ~(ij) ~{\rm~is~ a ~root ~of~} d\},\]
  which for rank 2 Higgs bundles is a unipotent subgroup of $G=GL(2,\mathbb{C})$ of dimension 1.
For all directions $ \textrm{d}_{1}$ and $ \textrm{d}_{2}$ that we have, the group of \textit{Stokes matrices}\footnote{
It should be noted that Boalch and other authors (e.g. see \cite{charles}) call \textit{Stokes matrices }  a product of a half-period's worth of \textit{Stokes factors}, the objects in \eqref{factor} --  in the rank 2 case these two definitions coincide. 
} are 1-dimensional subgroup  
\begin{equation}\mathbb{S}to_{ \textrm{d}_{1}}(\mathcal{A})=\left\{  \left(\begin{array}{cc}
1& \tilde z\\
0&1
\end{array}\right)~\textrm{for ~} \tilde z \in \mathbb{C}\right\}~ {~\rm and~}~\mathbb{S}to_{ \textrm{d}_{2}}(\mathcal{A})=\left\{  \left(\begin{array}{cc}
1& 0\\
\tilde z&1
\end{array}\right)~\textrm{for ~} \tilde z \in \mathbb{C}\right\}.\label{factor}\end{equation}
  
  To study the moduli space of Wild Higgs bundles one needs to consider fundamental solutions in each Stokes sector, which are then extended using the Stokes matrices, which can then be thought of as  the transition matrices between the canonical fundamental solutions. However,  we shan't deepen into this, and rather refer the reader to Boalch's papers mentioned before (e.g. see \cite{Boa12,BB04,Boa14,BoalchAnnal,BoalchIso}).

\subsubsection{Higgs bundles with multiple poles.}\label{multiple}

When considering   Higgs bundles with different poles over a divisor $D$, more data needs to be taken into account in order to build their moduli space. In what follows we shall mention a few ingredients to give the reader a flavour of the theory by studying rank 2 Higgs bundles with four poles. We shall consider here a divisor $D=\{p_1,\ldots,p_4\}$,  take $V$ a homomorphically trivial bundle over $\mathbb{P}^1$, and  let $\nabla=d-\mathcal{A}$ be a meromorphic connections with poles on $D$, for 
\begin{equation}
\mathcal{A}=\sum_{i=1}^{4}\left( T^i_{k_i} \frac{dz}{(z-p_i)^{k_i}}+\cdots +T^i_{1} \frac{dz}{(z-p_i)^{1}}\right),
\end{equation}
where $T^i_j$ are diagonal $2\times 2$ matrices. 
Here the connection $\nabla$ has a pole of order $k_i$ at $p_i$, and assuming that $T^1_1+\cdots+T^4_1=0$, it has no other poles.   
To deal with this type of Higgs bundles  we choose for each $i$ disjoint open discs $\textrm{D}_i$ on $\mathbb{P}^1$ with center $p_i$, and  coordinate $z_i$ on $\textrm{D}_i$ vanishing at $p_i$. Then, the local theory described before for one single pole  is repeated on each such disc:  Stokes matrices corresponding to distinct poles are related to each other by \textit{connection matrices}, which are transition matrices  defined in terms of local fundamental solutions.  These connection matrices are associated to what is known as a \textit{tentacle}.

To complete the description of the monodromy data of an irregular connection with multiple poles, one needs the Stokes matrices and formal monodromy at each pole along with the connection matrices relating fundamental solutions in neighborhoods of distinct poles,  making choices of points and paths between them. These choices are encoded in a choice of the so-called \textit{\textbf{tentacle}}:
\begin{enumerate}
\item A point $a_j$ in some sector at $p_j$ between two anti-Stokes rays\footnote{Note that a choice of tentacle implies a labelling $P_j$ of the permutation associated to $p_j$ (which for simple poles is the identity matrix). Hence, one must either fix various conventions about how those choices are to be made, or sacrifice some of the geometry by allowing for cyclic permutations of the monodromy data. },
 for each  $j=1, \ldots, 4$.
%\item A lift $\tilde a_j$ of each $a_j$ to the universal cover of the punctured disc
%$\textrm{D}_j /\{p_j\}$,  for $j=1, \ldots, 4$.
\item A base-point $p_0 \in \mathbb{P}^1/\{p_1,\ldots, p_4\}$ and a path \[\gamma_j: [0, 1] \rightarrow \mathbb{P}^1/\{p_1,\ldots, p_4\},\] from $p_0$ to $a_j$ for each  $j$, such that the loop
\[
(\gamma^{-1}_4\beta_4\gamma_4)(\gamma^{-1}_3\beta_3\gamma_3)(\gamma^{-1}_2\beta_2\gamma_2)(\gamma^{-1}_1\beta_1\gamma_1)
\] based at $p_0$ is contractible in $\mathbb{P}^1/\{p_1,\ldots, p_4\}$, for $\beta_j$   a loop in $\textrm{D}_j/\{p_j\}$ based at $a_j$ around $p_j$ once in a positive sense.
\end{enumerate}

 As an example, consider   a rank 2 Higgs bundles with 2 poles of order 2 and its tentacle in Figure~\ref{wild10}, where the grey labels and dotted paths   have been chosen to define the theory:
  \begin{figure}[H]
 \centering
  \includegraphics[width=0.8\textwidth]{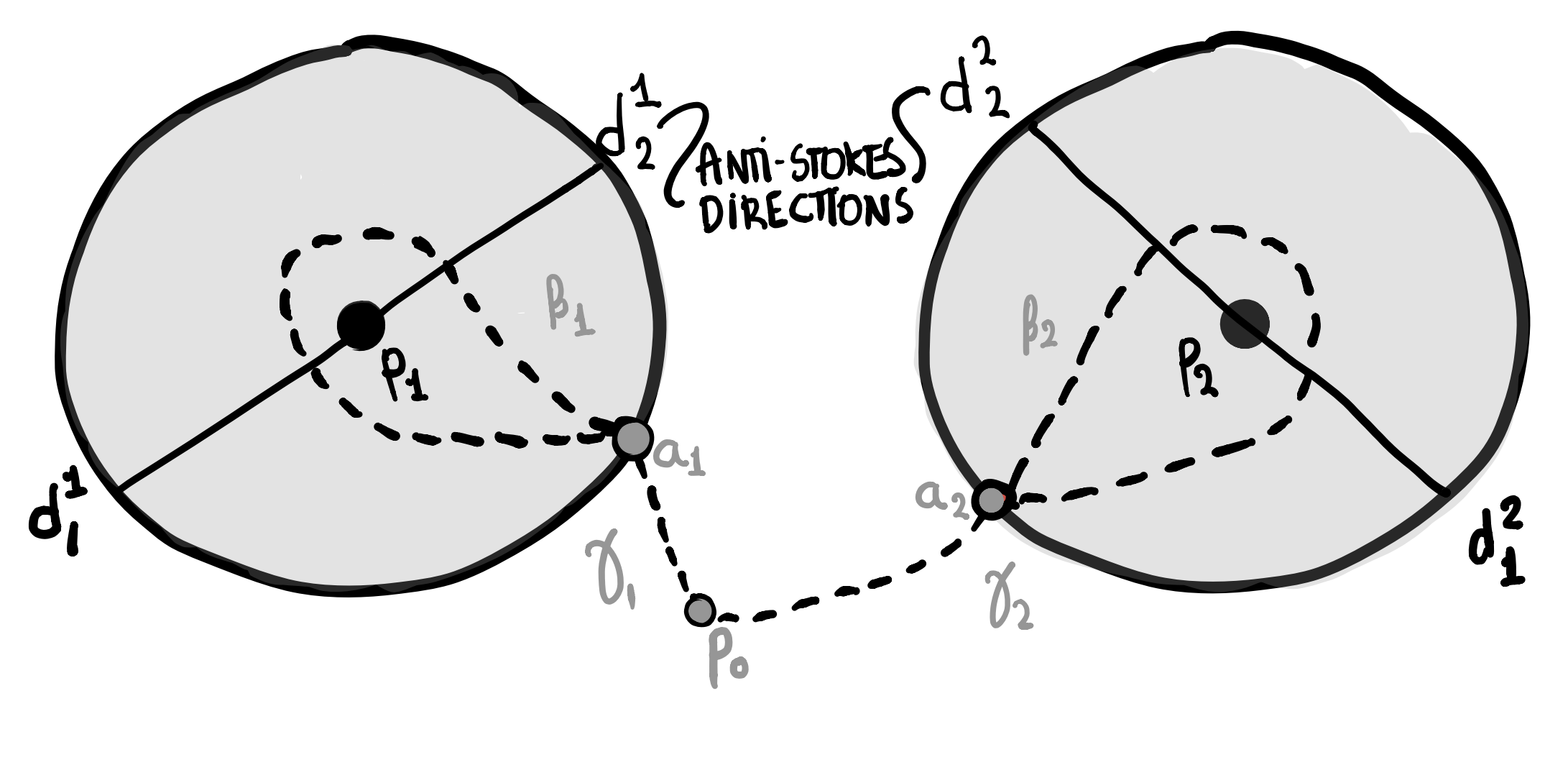} 
\caption{Tentacles for two poles of order 2.}\label{wild10}
\end{figure}

 \label{ex22}
  
We shall conclude by noting that  considering the collision of points, one has the following conjecture about the relation of Wild Higgs bundles and parabolic Higgs bundles:
\begin{conjecture}[\cite{Anderson:2017rpr}]
There exists a flat morphism between the parabolic moduli space  and the wild moduli space in the case of singular parabolic Higgs bundle with $n$-simple poles in their connection, and that of  wild Higgs bundles with a single, higher order pole of order $n$.\end{conjecture}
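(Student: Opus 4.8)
The plan is to exhibit the conjectural map as the specialization morphism of an explicit \emph{confluence family} over the affine line $\mathbb{A}^1_\epsilon$, and to reduce flatness to an equidimensionality statement together with reducedness of the total space along the central fibre. Fix on $\PP$ a divisor $D_\epsilon=\{\zeta_1(\epsilon),\dots,\zeta_n(\epsilon)\}$ so that the points $\zeta_j(\epsilon)$ are distinct for $\epsilon\neq 0$ and coalesce to a single point $p$ as $\epsilon\to 0$. Following the classical confluence procedure (\`a la Jimbo--Miwa--Ueno, in the form used by Boalch to deform irregular types), one chooses the residues $T_1^j(\epsilon)$ at the simple poles $\zeta_j(\epsilon)$ so that the principal part at $z=p$ of $\sum_j T_1^j(\epsilon)\,dz/(z-\zeta_j(\epsilon))$ converges as $\epsilon\to 0$ to a prescribed polar part $T_n\,dz/z^n+\cdots+T_1\,dz/z$ of the desired irregular type; here the formal-monodromy exponent $T_1$ of Definition \ref{mono11} is the natural bookkeeping device recording which residue feeds which piece of the limit. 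This produces, for $\epsilon\neq 0$, the moduli problem of parabolic Higgs bundles on $(\PP,D_\epsilon)$ with $n$ simple poles, and for $\epsilon=0$ the moduli problem of wild Higgs bundles on $\PP$ with one pole of order $n$ at $p$.

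Next I would assemble these into a relative moduli space $\pi\colon\mathcal{X}\to\mathbb{A}^1_\epsilon$ whose fibre over $\epsilon\neq 0$ is $\mathcal{P}_\alpha$ and whose central fibre is the wild moduli space $\mathcal{M}_{\mathrm{wild}}$. The cleanest route is to pass first to the Betti side: via the Riemann--Hilbert correspondence both moduli spaces become (wild) character varieties, and Boalch's quasi-Hamiltonian ``fission'' description (\cite[\S 9]{BoalchAnnal}) presents the $n$-punctured character variety and the one with a single order-$n$ irregular singularity as multiplicative symplectic quotients of a common ambient fission space. The confluence is then the assertion that an $n$-fold product of simple-pole factors degenerates, as $\epsilon\to 0$, to one higher fission factor; performed in families this realizes $\mathcal{X}$ as an algebraic family of affine GIT quotients of a smooth total space, hence with Cohen--Macaulay (generically symplectic, smooth) fibres. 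Transporting back through (wild) non-abelian Hodge theory and the hyperk\"ahler quotient picture of \cite{Boa12,Boa14} gives the Dolbeault family, the parabolic weights $\alpha$ being carried along a path chosen so that every member stays stable; as in Remark \ref{remstable} the Higgs fields we use are automatically stable once their characteristic polynomials stay irreducible, which simplifies this point.

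For flatness itself, since $\pi$ has smooth $1$-dimensional base it suffices to check that no irreducible component of $\mathcal{X}$ is contained in the central fibre, equivalently that $\mathcal{X}$ is reduced along $\{\epsilon=0\}$ and $\dim\mathcal{M}_{\mathrm{wild}}=\dim\mathcal{P}_\alpha$. The dimension check is explicit: $\dim\mathcal{P}_\alpha=(2g-2)r^2+2+nr(r-1)$, which for $g=0$ and $r=2$ equals $2n-6$, matching Boalch's count (\cite[Remark 9.12]{BoalchAnnal}) for a rank $2$ pole of order $n$; in the four-marked-point case of Remark \ref{Prem} both sides equal $2$, compatibly with the Painlev\'e picture there. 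A completely hands-on alternative in rank $2$ is to run the argument over the Hitchin base: the parabolic and wild Hitchin bases form a flat family of affine spaces over $\mathbb{A}^1_\epsilon$ (fibrewise the space of meromorphic quadratic differentials with the prescribed pole orders, degenerating to those with fixed leading terms), the associated family of spectral curves in $\mathrm{Tot}\,K_{\PP}(D_\epsilon)$ is flat with the wild spectral curve arising as a cuspidal limit, and flatness of the relative compactified Jacobian then follows; assembling these over the base gives the flat family, hence the morphism.

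The main obstacle is precisely the central-fibre analysis, because confluence is a \emph{renormalized} limit: the monodromies around the coalescing punctures blow up, and only after a rescaling gauge transformation do they organize into Stokes matrices together with a formal monodromy. One must show this rescaling can be carried out uniformly over the entire moduli space --- not merely along a single isomonodromic trajectory, as in the special-functions literature --- and, crucially, that the limit acquires no embedded or nilpotent structure, so that $\mathcal{M}_{\mathrm{wild}}$ is the whole central fibre of a reduced $\mathcal{X}$ rather than a proper subscheme of it. Controlling non-reducedness of the limit, and keeping stability along the degenerating path of weights, are where the real difficulty lies.
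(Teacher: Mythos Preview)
The statement you are attempting to prove is recorded in the paper as a \emph{conjecture} (attributed to \cite{Anderson:2017rpr}); the paper gives no proof, only the motivating remark that it arises by ``considering the collision of points''. So there is nothing in the paper to compare your argument against.

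As for the proposal itself: it is a coherent research strategy rather than a proof, and you say as much in your final paragraph. The confluence-family construction and the passage to the Betti side via Boalch's fission spaces are the natural moves, and your dimension check is correct in the rank $2$ genus $0$ case. But the step on which everything hinges --- that the rescaled limit of the $n$-punctured character variety as the punctures coalesce is exactly the wild character variety, with no extra components and no non-reduced structure --- is precisely the content of the conjecture, and you have not supplied it. Invoking that the fission description ``presents \dots\ as multiplicative symplectic quotients of a common ambient fission space'' does not by itself produce a flat degeneration: the simple-pole fission pieces and the higher-order fission piece are built from different building blocks (conjugacy-class factors versus Stokes data), and the assertion that one degenerates algebraically to the other in families is what needs to be proved, not assumed. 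Likewise, the alternative route through a flat family of spectral curves and compactified Jacobians is suggestive, but for wild Higgs bundles on $\PP$ the spectral geometry is delicate (the base curve has genus $0$, the twist is by $\mathcal{O}(D_\epsilon)$ rather than $K$, and the limiting spectral curve may fail to be integral), so ``flatness of the relative compactified Jacobian then follows'' hides real work.

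In short: your outline is the right shape for an attack on the conjecture, and you have honestly flagged the gap, but the gap is the whole problem. The paper does not close it either --- it leaves the statement open.
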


%\section{Rank 2 Higgs bundles with 1 pole of order 3 and one of order 1}\label{one3one1}
%

%
%
% 
%
%%%%%%%%%%%%%%%%%%%%%%%%%%
%%%%%%%%%%%%%%%%%%%%%%%%%%
%%%%%%%%%%%%%%%%%%%%%%%%%%
%%%%%%%%%%%%%%%%%%%%%%%%%%
\subsection{Problem set I.}
%\addcontentsline{toc}{section}{Problem set I}

This problem set is meant to make the  reader be acquainted with the different ways of encoding the data defining Higgs bundles in their various forms. They provide a mixture of problems from the lectures as well as some additional ones.

\begin{problem} \textbf{Complex Higgs bundles}. As mentioned before, through Definition \ref{complex} one can define principal $G_\C$-Higgs bundles which may be described in terms of a pair $(E,\Phi)$ of a vector bundle $E$ and a holomorphic Higgs filed $\Phi:E\rightarrow E\otimes K$ with some extra conditions reflecting the nature of the group:\begin{itemize}
\item[i.] Give a description of $SO(2p+1,\C)$-Higgs bundles in terms of pairs $(E,\Phi)$ as in Definition \ref{clas} and list the extra conditions that need to be satisfied. 
\item[ii.] Show that if $\lambda$ is an eigenvalue of an $Sp(2p,\C)$-Higgs bundle $(E,\Phi)$, then necessarily so is $-\lambda$. 

%
%
%Let $ G_\C=Sp(2n,\mathbb{C})$, and let $V$ be a $2n$ dimensional vector space with a non-degenerate
%skew-symmetric form $< , >$. For $v_{i},v_{j}$ eigenvectors of $A\in \mathfrak{sp}(2n,\mathbb{C})$ with eigenvalues $\lambda_{i}$ and $\lambda_{j}$,  
%\begin{equation}
% \lambda_{i}<v_{i},v_{j}>&=&<\lambda_{i}v_{i},v_{j}> \\
%&=&<Av_{i},v_{j}>\\&=&-<v_{i},Av_{j}> \\&=&-<v_{i},\lambda_{j}v_{j}> 
%=-\lambda_{j}<v_{i},v_{j}>.\nonumber
%\end{equation}
%From the above one has that $<v_{i}, v_{j}>=0$ unless $\lambda_{i}=-\lambda_{j}$. Since $<v_{j},v_{j}>=0$, from the non-degeneracy of the symplectic inner product it follows that if $\lambda_{i}$ is an eigenvalue so is $-\lambda_{i}$. Thus, distinct eigenvalues  of $A$ must occur in $\pm \lambda_{i}$ pairs, and the corresponding eigenspaces are paired by the symplectic form. The  characteristic polynomial of $A$ must therefore be of the form
%\[\text{det}(x-A)=x^{2n}+a_{1}x^{2n-2}+\ldots+a_{n-1}x^{2}+a_{n},\]
%and a basis for the invariant polynomials on the Lie algebra $\mathfrak{sp}(2n,\mathbb{C})$ is given by $a_{1}, \ldots, a_{n}$.
%
 
\end{itemize}
 
\end{problem}

\begin{problem} \textbf{Real Higgs bundles}. In order to understand $Sp(2p,2q)$-Higgs bundles, we shall begin by considering the structure of the corresponding Lie algebra:

\begin{enumerate}  \item[i.] Give a $4\times4$ matrix expression (with matrix entries) of the non-compact real form $\mathfrak{sp}(2p,2q)$ of the complex symplectic Lie algebra.

%The induced non-compact real form $\mathfrak{g}=\mathfrak{h}\oplus  \mathfrak{m} $ is
%\[\mathfrak{sp}(2p,2q)=\left\{
%\left(
%\begin{array}{cccc}
% Z_{11}&Z_{12}&Z_{13}&Z_{14}\\
% \overline{Z}^{t}_{12}&Z_{22}&Z^{t}_{14}&Z_{24}\\
% -\overline{Z}_{13}& \overline{Z}_{14}& \overline{Z}_{11}& -\overline{Z}_{12}\\
% \overline{Z}^{t}_{14}& -\overline{Z}_{24}&-Z^{t}_{12}& \overline{Z}_{22}\\
%\end{array}
%\right)~\left|
%\begin{array}{c}
% Z_{i,j}~ \text{~ ~ complex~matrices, } \\
%Z_{11}, ~Z_{13} \text{~ order~ } p,\\
%Z_{12}, ~Z_{14} ~p\times q~\text{  matrices},\\
%Z_{11}, ~Z_{22} ~\text{  skew ~Hermitian},\\
%Z_{13}, ~Z_{24}~\text{  symmetric}.
%\end{array}\right.
%\right\}.\] 
 \item[ii.] Show that $\mathfrak{m}^{\mathbb{C}}$ for $\mathfrak{sp}(2p,2q)=\mathfrak{h}\oplus  \mathfrak{m}$  as in Example \ref{expq} can be expressed as subset of certain off-diagonal matrices.

\item[iii.]Prove that if a Higgs bundle $(E,\Phi)$ is stable, then for $\lambda\in \mathbb{C}^{*}$ and $\alpha$ a holomorphic automorphism of $E$, the induced Higgs bundles $(E,\lambda \Phi)$ and $(E,\alpha^{*}\Phi)$ are stable.

% \begin{solution} See Hit87 \end{solution} 

\end{enumerate}

%Having seen the structure of the Lie algebra, though Definition \ref{real}, obtain the following:

\begin{itemize}
\item[iv.] A definition of $Sp(2p,2q)$-Higgs bundles as pairs $(E,\Phi)$ satisfying extra conditions which reflect the nature of the group. 

%
%\begin{definition}
% An $Sp(2p,2q)$\text{-Higgs bundle} is given by a pair $(E,\Phi)$ for $E=V_{2p}\oplus V_{2q}$ is a direct sum of symplectic vector spaces of rank $2p$ and $2q$, and    
%\[\Phi=
%\left(\begin{array}{cc}
%0& -\gamma^\text{T}\\
%\gamma&0
%        \end{array}\right)~\text{  for~}~\left\{
%\begin{array}{cc}
%\gamma&: V_{2p}\rightarrow V_{2q}\otimes K  \\
%-\gamma^\text{T}&: V_{2q}\rightarrow V_{2p} \otimes K
%\end{array} \right., ~\text{for}~ \gamma^\text{T} \text{the symplectic transpose of} ~\gamma.
%\]
%\end{definition}

% The symplectic transpose of the map $\gamma$ may be thought of as follows. Since $V$ and $W$ are symplectic vector bundles, there are two induced isomorphisms
%$q_{V}:V^{*} \rightarrow V $ and $ q_{W}:W \rightarrow W^{*}$.  
%Via these isomorphisms, we define
%$\gamma^{\text{T}}= q_{V}\gamma^{t}q_{W},$ 
%where $\gamma^{t}$ is the dual action of $\gamma$.

\item[v.] A description of the characteristic polynomial of a generic $Sp(2p,2q)$-Higgs field. 
\end{itemize}

 \end{problem}
\begin{problem}\textbf{Real Higgs bundles}. 
Let   $\mathfrak{g}_{c}$ be a complex Lie algebra with complex structure $i$, whose Lie group is $ G_\C$.  Recall that a  \textit{real form} of $\mathfrak{g}_{\C}$ is a real Lie algebra which satisfies
$\mathfrak{g}_{\C}=\mathfrak{g}\oplus i\mathfrak{g}.$
Given a real form $\mathfrak{g}$ of $\mathfrak{g}_{\C}$, an element  $Z\in \mathfrak{g}_{\C}$ may be written as $Z=X+iY$ for $X,Y\in \mathfrak{g}$. The mapping
$ X+iY\mapsto X-iY$
is called the \textit{conjugation}  with respect to $\mathfrak{g}$.  Moreover, any real form $\mathfrak{g}$ of $\mathfrak{g}_{\C}$ is given by the fixed points set of an antilinear involution $\tau$ on $\mathfrak{g}_{\C}$.  i.e., a map satisfying
\begin{align}
\tau(\tau (X))&=X, &\tau(zX)&=\overline{z}\tau(X),\\ \tau(X+Y)&=\tau(X)+\tau(Y),&\tau([X,Y])&=[\tau (X),\tau (Y)]
\end{align}
 for $X,Y\in \mathfrak{g}_{\C}$ and  $z\in \mathbb{C}$. 
  In particular the conjugation with respect to $\mathfrak{g}$ satisfies these properties. From the above definitions, obtain the following:
\begin{itemize}

\item[i.] Obtain an involution $\sigma$ on $\mathfrak{g}_\C$ such that:\begin{itemize}
\item the fixed point set $\mathfrak{g}_\C^{\sigma}$ of $\sigma$  is given by the complexification of the maximal compact subalgebra $\mathfrak{h}$ of $\mathfrak{g}$;
 \item the anti-invariant set under the involution $\sigma$ is given by $\mathfrak{m}^{\mathbb{C}}$, where $\mathfrak{g}=\mathfrak{h}\oplus \mathfrak{m}$ is the Cartan decomposition.
\end{itemize}

\item[ii.]  A description of the involution on $\mathfrak{gl}(n,\C)$ and  $\mathfrak{sl}(n,\C)$ fixing the unitary groups with signature $\mathfrak{u}(p,q)$ and $\mathfrak{su}(p,q)$. 
\item[iii.] A description of $Sp(2p,2q)$-Higgs bundles as the fixed point set of some involution on the whole moduli space of $Sp(2p+2q,\C)$-Higgs bundles, induced by actions on the Lie algebra, using  the Hitchin equations and question  \textbf{Problem 2} above. 
\end{itemize}

\end{problem}
\begin{problem} \textbf{Parabolic Higgs bundles}. When considering $GL(2,\C)$-Higgs bundles, recall that Hitchin  established in \cite{N2} a correspondence between equivalence classes
of irreducible $GL(2, \C)$ representations of $\pi_1(\Sigma)$ and isomorphism classes of $GL(2,\C)$-Higgs bundles of degree zero. A similar correspondence was later obtained by Simpson in \cite{carlos} between parabolic Higgs bundles and and the fundamental group of $\Sigma/\{p_1,\ldots,p_n\}$. Though those identifications, particular properties of parabolic Higgs bundles and their corresponding representations can be obtained, as was done for example in \cite{indranil}. 
\begin{itemize}
\item[i.] Give an example of a parabolic stable Higgs bundle $(E, \Phi)$ of rank 2 which has  parabolic degree zero.
% e.g. see https://arxiv.org/pdf/alg-geom/9510011.pdf page 4
\item[ii.] Give an example of a parabolic stable $U(3,1)$-Higgs bundle $(E, \Phi)$, using the descriptions of Example \ref{expqpara}.

\item[iii.] Give an expression for the flat connection corresponding to the Higgs bundle $(E, \Phi)$ in \textbf{Problem 4.i.}, and modify the pair until it can be shown that   the holonomy of the flat connection is contained (after conjugation) in $SL(2, \R)$.
\item[iv.] From \cite{steve}, recall that a \emph{minimal parabolic structure} on $E$ is a choice at each $p_i$ of a partial flag $0 \subset L_i \subset E_{p_i}$, where $L_i$ is a line. A \emph{minimal parabolic Higgs field} is an $\CO_X$-linear map $\phi \in H^0(X, \textrm{End}(E) \otimes K_X(D) )$, such that at each point $p_i$, the residue of $\phi_i$ of $\phi$ is strictly triangular with respect to the flag: $\phi(E_{p_i}) \subseteq L_i, \phi_i(L) = 0.$ Construct an example of a rank 3 \emph{minimal parabolic Higgs field}. 
\item[v.] Show that   if minimal parabolic Higgs field $\phi$ as in \textbf{Problem 4.iv.} is strictly parabolic, then the residues $\phi_i$ are nilpotent of order $2$,  and they lie in the closure of the minimal nilpotent orbit of $\mathfrak{sl}_r$.

% e.g. see https://arxiv.org/pdf/alg-geom/9510011.pdf page 11

\end{itemize}
\end{problem}

\begin{problem}
\textbf{Wild Higgs bundles}. Consider a rank 2 Higgs bundle with 1 pole of order 3 at $p_2$ and one of order 1 at $p_1$. For this, take diagonal matrices $^jA_i$ and \begin{equation}
\nabla:=d-   \left( ^1A_{1} \frac{dz}{(z-p_1)}  \right)
- \left( ^2A_{3} \frac{dz}{(z-p_2)^{3}}+ ^2A_{2} \frac{dz}{(z-p_2)^{2}}  + ^2A_{1} \frac{dz}{(z-p_2)}\right).\end{equation}
Using the examples given in this chapter, consider the following:
\begin{itemize}
\item[i.] Describe the sets of anti-Stokes directions and  identify whether the Stokes matrices are upper or lower triangular;
 \item[ii.] Give a description of the tentacle as in Figure \ref{wild10}.
\item[iii.] For Higgs bundles with one pole of order 4 as in Remark \ref{rem4}, show that the system has 6 anti-stokes directions and give a description of them.
\end{itemize}

 \end{problem}

%%%%%%%%%%%%%%%%%%%%%%%%%%
%%%%%%%%%%%%%%%%%%%%%%%%%%
%%%%%%%%%%%%%%%%%%%%%%%%%%
%%%%%%%%%%%%%%%%%%%%%%%%%%

\section{The geometry of the Hitchin fibration}

\epigraph{\textit{%
Integrability of a system of differential equations should manifest itself through
some generally recognizable features:\\
$\star$ the existence of many conserved quantities;\\
$\star$ the presence of algebraic geometry;\\
$\star$ the ability to give explicit solutions.\\
These guidelines should be interpreted
in a very broad sense.''}}{Nigel Hitchin}

\begin{comment}
	\hfill{%
\vbox{%
\textit{%
\hbox{``Integrability of a system of differential equations should manifest itself through}
\hbox{some generally recognizable features:}
)%\textit
\vspace{0.1cm}
\textit{%
\hbox{\small $\star$ the existence of many conserved quantities;}
\hbox{\small$\star$ the presence of algebraic geometry;}
\hbox{\small$\star$ the ability to give explicit solutions.}
}%textit
\vspace{0.1cm}
\textit{%
\hbox{These guidelines should be interpreted}
\hbox{in a very broad sense.}
\hbox{\noindent\rule{6cm}{0.4pt}}
}%textit
}%vbox
}%\hfill

\hfill{\vbox{\hbox{{Nigel Hitchin}}}}

%\bigskip
%\bigskip
\end{comment}

The moduli space $\mathcal{M}_{ G_\C}$ of $G_\C$-Higgs bundles can be naturally studied through the Hitchin fibration, as introduced in \cite{N2}. We shall dedicate this chapter to the study of this fibration and the description of some of its geometric properties. 

\subsection{The Hitchin fibration and the Teichm\"uller component.}

 This Hitchin fibration is defined through a homogeneous basis   $p_{i}$,  for $i=1,\ldots, k$ for the algebra of invariant polynomials of the Lie algebra  $\mathfrak{g}_{c}$ of $ G_\C$, whose degrees shall be denoted by $d_{i}$. Then, the {\em Hitchin fibration} is given by
\begin{align}
 h~:~ \mathcal{M}_{ G_\C}&\longrightarrow\mathcal{A}_{ G_\C}:=\bigoplus_{i=1}^{k}H^{0}(\Sigma,K^{d_{i}}),\\
 (E,\Phi)&\mapsto (p_{1}(\Phi), \ldots, p_{k}(\Phi)).
\end{align}
The map $h$ is referred to as the {\em Hitchin~map}, and is a proper map for any choice of basis (see \cite[Section 4]{N2} for details). Furthermore, the Hitchin base $\mathcal{A}_{ G_\C}$ always satisfies
\[\text{dim} \mathcal{A}_{ G_\C} =\text{dim}\mathcal{M}_{ G_\C}/2,\]
making the Higgs bundle moduli space into an integrable system.
 
By considering different homogenous basis of invariant polynomials, one can make different aspects of the geometry of the Hitchin fibration explicit. In particular, one may sometimes ask whether certain property of the fibration (or of a fibre) is preserved under a change of basis, and we shall see in later sections examples of this. Across these notes,  for classical Higgs bundles $(E,\Phi)$ of rank $n$ we shall often take the basis given by $\textrm{tr}(\Phi^i)$ for $1\leq i\leq n$, or given  by the coefficients of the characteristic polynomial of the Higgs field $\Phi$. 

\begin{remark}
Through changes of basis, one obtains correspondences between different expressions of the Hitchin fibration for the same group. For instance, the coefficients of the characteristic polynomial can be expressed in terms of traces as follows:
\begin{equation}p(x)=x^n + \textrm{tr} (\Phi) x^{n-1}+ \sum_{k=2}^{n-1} x^{n-k}~(-1)^k ~\textrm{Tr}(\Lambda^k \Phi) +(-1)^n~\textrm{det}(\Phi)\label{uno1} \end{equation}
where $\textrm{Tr}(\Lambda^k \Phi) $ is the trace of the $k$th exterior power of $\Phi$, with dimension $\frac{n!}{k!(n-k)!}$%${ \left (\begin{array}{c}n\\k\end{array}\right)}$.
\end{remark}

The  \textit{Teichm\"{u}ller component} \cite{N5}, or so-called \textit{Hitchin section} of the Hitchin fibration,  is induced by the   \textrm{Kostant slice} $\mathcal{K}$ of $\mathfrak{g}_{\C}$, which is the space given by
\begin{equation}\mathcal{K}:=\{f\in \mathfrak{g}_{\C}~|~ f= f_{0} +a_{1}e_{1}+ a_{2}e_{2}+\ldots+ a_{r}e_{r}\},\label{decf}\end{equation}
for $e_{i}$ the highest weight vectors of $\mathfrak{g}_{\C}$ and $a_{i}$  complex numbers, and $f_{0}$ a nilpotent element of  a \textrm{three-dimensional subalgebra
 $\mathfrak{s}$} of $\mathfrak{g}_{\C}$. Recall that this is a subalgebra of $\mathfrak{g}_{\C}$ generated by a semisimple element $h_{0}$, and nilpotent elements
 $e_{0}$ and $f_{0}$ of  $\mathfrak{g}_{\C}$ satisfying the relations
\begin{equation}[h_{0},e_{0}]=e_{0}~;~ [h_{0},f_{0}]=-f_{0}~;~ [e_{0},f_{0}]=h_{0}.\end{equation}

From the decomposition of $\mathfrak{g}_{\C}$ into eigenspaces of $\textrm{ad}h_{0}$,   consider the vector bundle  %$E$ given by
\begin{equation}
 E= \textrm{ad}P\otimes \mathbb{C}=\bigoplus_{j=-k_{r}}^{k_{r}}\mathfrak{g}_{j}\otimes K^{j},
\end{equation}
where $k_{r}=\textrm{max}\{k_{i}\}_{i}$ is the maximal exponent of $\mathfrak{g}_{\C}$.  This is the adjoint bundle of $\mathfrak{g}_{\C}$ associated to the principal $\textrm{Ad}G_{\C}$-bundle $P=P_{1}\times_{i} G$, where $P_{1}$ is the holomorphic principal $SL(2,\mathbb{C})$-bundle associated to $K^{-1/2}\oplus K^{1/2}$, and   the inclusion $i:SL(2,\mathbb{C})\hookrightarrow \textrm{Ad}G_{\C}$ corresponds to the principal three dimensional subalgebra $\mathfrak{s}_{0}$. Although $P_{1}$ involves a choice of square root $K^{1/2}$, the bundle $E$ is independent of it.  The $\textrm{Ad}G_{\C}$-Higgs bundle $(E,\Phi)$, for
\begin{equation}\Phi= f_{0} +a_{1}e_{1}+ a_{2}e_{2}+\ldots+ a_{r}e_{r}\label{decp}\end{equation}
and $a_{i}\in H^{0}(\Sigma,K^{k_{i}+1})$, is stable. Note that $(\textrm{ad}h_{0})f_{0}=-f_{0}$ and thus we may regard $f_{0}$ as a section of $(\mathfrak{g}_{-1}\otimes K^{-1})\otimes K$. Furthermore, the highest weight vectors $e_{j}\in \mathfrak{g}_{k_{j}}$ and thus $a_{j}e_{j}$ is a section of $\mathfrak{g}_{k_{j}}\otimes K^{k_{j}+1}$, making $\Phi$ a well defined holomorphic section of $E\otimes K$. 
Since $p_{i}(\Phi)=a_{i}$, the above construction defines a section 
\begin{equation}s: (a_{1},\ldots,a_{r})\mapsto (E,\Phi)\label{tei}\end{equation} of $\mathcal{M}_{\textrm{Ad}G_{\C}}$ whose image is the \textit{Teichm\"{u}ller component}. This component defines an origin in the smooth fibres of $h$, and the Hitchin fibration becomes a fibration of abelian varieties (for more details the reader should refer to \cite{N2} where Hitchin introduced the  \textit{Teichm\"{u}ller component})
 The following result (\cite[Theorem 7.5]{N5}) relates the Teichm\"{u}ller component to the space of representations of the split real form of $\textrm{Ad}G_{\C}$:

\begin{theorem} 
 The section $s$ of $\mathcal{M}_{\textrm{Ad}G_{\C}}$ defines a smooth connected component of the moduli space of reductive representations of $\pi_{1}(\Sigma)$ into the split real form of $\textrm{Ad}G_{\C}$. 
\end{theorem}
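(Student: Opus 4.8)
The plan is to combine the non-abelian Hodge correspondence stated above with the fact that a continuous section has closed image and with a Riemann--Roch dimension count. By construction every pair $(E,\Phi)=s(a_{1},\dots,a_{r})$ in the image of the section \eqref{tei} is stable (Remark~\ref{remstable}) with automorphism group equal to the centre of $\mathrm{Ad}G_{\C}$, which is trivial; hence by non-abelian Hodge each such pair carries a unique solution of the Hitchin equations \eqref{2.1}--\eqref{hit2} and so determines a flat $\mathrm{Ad}G_{\C}$-connection, well defined up to conjugation and with trivial centraliser. This gives a map $\Psi$ from the vector space $\mathcal{A}_{\mathrm{Ad}G_{\C}}=\bigoplus_{i=1}^{r}H^{0}(\Sigma,K^{d_{i}})$ into the moduli space of reductive representations $\pi_{1}(\Sigma)\to\mathrm{Ad}G_{\C}$, landing in its smooth locus. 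Since $h\circ s=\mathrm{id}$ forces $\mathrm{d}s$ to be injective, $s$ is a holomorphic immersion, and being a continuous section of the Hitchin map it is a homeomorphism onto its image, which is closed in $\mathcal{M}_{\mathrm{Ad}G_{\C}}$ and contained in its stable (hence smooth) locus; as non-abelian Hodge is a real-analytic diffeomorphism there, $\Psi$ is a real-analytic embedding of $\mathcal{A}_{\mathrm{Ad}G_{\C}}$ onto a closed submanifold of the representation moduli space.

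The crux, and what I expect to be the main obstacle, is to show that $\Psi$ actually takes values in representations into the split real form $G^{\mathrm{split}}$ of $\mathrm{Ad}G_{\C}$. For this I would exploit the reality built into \eqref{decp}: the principal three-dimensional subalgebra $\mathfrak{s}$ (in particular $f_{0}$) and the highest weight vectors $e_{i}$ can be chosen inside the split real form $\mathfrak{g}_{0}\subset\mathfrak{g}_{\C}$, and then the $\mathrm{ad}\,h_{0}$-eigenspace decomposition $\mathfrak{g}_{\C}=\bigoplus_{j}\mathfrak{g}_{j}$ is defined over $\mathfrak{g}_{0}$. Fixing a compatible compact real structure $\rho$ --- so that, writing $\tau$ for the anti-linear involution with $\mathfrak{g}_{\C}^{\tau}=\mathfrak{g}_{0}$, the composite $\sigma:=\rho\tau$ is the Cartan involution of $\mathfrak{g}_{0}$ and satisfies $\sigma(\mathfrak{g}_{j})=\mathfrak{g}_{-j}$ --- one combines $\rho$ with the Hermitian metric induced on the line bundles $K^{j}$ by a metric on $\Sigma$ to build, for each $(E,\Phi)$ in the section, a Hermitian metric adapted to the split real structure, and one has to show that the harmonic metric solving \eqref{2.1} may be taken of this form. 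Granting this, the criterion \eqref{invol1}--\eqref{invol2} --- namely that $\nabla_{A}$ is then $\sigma$-invariant and $\Phi$ is $\sigma$-anti-invariant --- shows the flat connection $\nabla=\nabla_{A}+\Phi-\rho(\Phi)$ has holonomy in $G^{\mathrm{split}}$; at the Fuchsian point $a_{1}=\dots=a_{r}=0$ this recovers the uniformising representation, whose holonomy lies in the principal $SL(2,\R)\subset G^{\mathrm{split}}$, exhibiting the promised copy of Teichm\"uller space. The delicate part is exactly the claim that the harmonic metric can always be chosen adapted to the split real structure; everything else is formal once non-abelian Hodge is available.

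Finally I would show that $\Psi(\mathcal{A}_{\mathrm{Ad}G_{\C}})$ is an entire connected component of the moduli space $\mathcal{R}_{0}$ of reductive representations $\pi_{1}(\Sigma)\to G^{\mathrm{split}}$, by comparing dimensions. As $\mathfrak{g}_{\C}$ is semisimple every exponent is at least $1$, so $d_{i}\geq 2$ and Riemann--Roch gives $\dim_{\C}H^{0}(\Sigma,K^{d_{i}})=(2d_{i}-1)(g-1)$; using $\sum_{i=1}^{r}(2d_{i}-1)=\dim_{\C}\mathfrak{g}_{\C}$ one gets
\[\dim_{\R}\mathcal{A}_{\mathrm{Ad}G_{\C}}=2(g-1)\sum_{i=1}^{r}(2d_{i}-1)=2(g-1)\dim_{\C}\mathfrak{g}_{\C}=(2g-2)\dim_{\R}G^{\mathrm{split}},\]
which is exactly the dimension of $\mathcal{R}_{0}$ at an irreducible representation. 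The representations in $\Psi(\mathcal{A}_{\mathrm{Ad}G_{\C}})$ are irreducible --- their centralisers in $\mathfrak{g}_{\C}$, hence in $\mathfrak{g}_{0}$, vanish --- so they are smooth points of $\mathcal{R}_{0}$; hence $\Psi(\mathcal{A}_{\mathrm{Ad}G_{\C}})$ is a connected smooth submanifold of $\mathcal{R}_{0}$ of full dimension, so open, and being also closed (it is closed inside the ambient moduli of $\mathrm{Ad}G_{\C}$-representations, hence inside $\mathcal{R}_{0}$), non-empty and connected, it is a single smooth connected component of $\mathcal{R}_{0}$. This gives the theorem.
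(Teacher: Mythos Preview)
The paper does not give its own proof of this theorem: it is quoted as \cite[Theorem 7.5]{N5}, i.e.\ it is Hitchin's original result, stated here without argument. So there is no in-paper proof to compare against; one can only compare your outline with Hitchin's.

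Your overall architecture is the right one and matches Hitchin's: the section lands in stable Higgs bundles, non-abelian Hodge converts this to flat connections, one checks the holonomy lies in the split real form, and then a Riemann--Roch dimension count plus an open--closed argument promotes the image to a full connected component. The dimension computation is correct.

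The one place where you explicitly leave a gap is exactly the heart of the matter, and you should not leave it as ``granting this''. You propose to \emph{construct} an adapted Hermitian metric and then argue it can be taken to be the harmonic one; that is harder than necessary. The clean way --- and the way Hitchin does it --- is to use \emph{uniqueness} of the solution to \eqref{2.1}. Concretely, take the involution $\sigma_{s}$ on $\mathfrak{g}_{\C}$ defined by $\sigma_{s}(f_{0})=-f_{0}$ and $\sigma_{s}(e_{i})=-e_{i}$ (this is exactly the involution \eqref{involution} appearing later in the paper); it is the Cartan involution for the split real form, and by construction every Higgs field \eqref{decp} in the section satisfies $\sigma_{s}(\Phi)=-\Phi$, while the bundle $E$, being built from the $\mathrm{ad}\,h_{0}$-grading, is fixed. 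Thus each $(E,\Phi)$ in the section is a fixed point of the involution $\Theta_{G^{\mathrm{split}}}$ of Proposition~\ref{invrelation}. Since the solution $A$ of Hitchin's equations is \emph{unique}, applying the involution to a solution gives the same solution, so $\nabla_{A}$ is $\sigma_{s}$-invariant and $\Phi$ is $\sigma_{s}$-anti-invariant, i.e.\ \eqref{invol1}--\eqref{invol2} hold automatically and the holonomy lies in $G^{\mathrm{split}}$. No adapted metric needs to be built by hand.
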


\subsection{The regular fibres of the Hitchin fibration.}

We shall first consider the regular fibres of the Hitchin fibration, which are those fibres over regular -- generic -- points in the Hitchin base. Formally, we shall be considering fibres over points $a\in \mathcal{A}_{ G_\C}$ defining \textit{smooth} spectral curves $S$.  As before, we consider $K=T^*\Sigma$, and $X=\textrm{Tot}(K)$ its total space with projection $\pi:X\rightarrow \Sigma$. We shall denote by $\eta$ the tautological section of the pull back $\pi^{*}K$ on $X$. Abusing notation we denote with the same symbols the sections  of powers $K^{i}$ on $\Sigma$ and their pull backs to $X$ (for more details see the equations in \cite{BNR} where the distinction is made when defining spectral data).

 The characteristic polynomial of a Higgs bundle $(E,\Phi)$ in a generic fibre $h^{-1}(a)$ defines a smooth curve $\pi:S_a\rightarrow \Sigma$ in $X$, the {\em spectral curve} of $\Phi$, whose equation is
\begin{equation}
\text{det}(\eta - \Phi)= \eta^{n}+a_{1}\eta^{n-1}+a_{2}\eta^{n-2}+\ldots + a_{n-1}\eta+a_{n}=0,\label{smootheq}
\end{equation}
for $a_{i}\in H^{0}(\Sigma,K^{i})$ (for simplicity, we will  drop the subscript $a$ of $S_a$).  By the adjunction formula on $X$ (see e.g. \cite{harris}), The cotangent bundle of $\Sigma$ is a symplectic manifold and hence has trivial canonical bundle, so $K_S\otimes \pi^*K^{-m}$ is trivial and  $K_S= \pi^*K^{n}$. Taking degrees in both sides gives the genus 
 \begin{equation}g_{S}=1+n^{2}(g-1).\label{genusSclassical}\end{equation}

 The \textit{spectral data} associated to a Higgs bundle $(E,\Phi)$ provides a geometric description of the fibres of the Hitchin fibration as abelian varieties, and  is then given by:
\begin{itemize}
\item the spectral curve $\pi:S\rightarrow \Sigma$, and
\item a vector bundle   on $S$, satisfying appropriate conditions reflecting the nature of   $G_\C$.
\end{itemize}
 In the case of classical Higgs bundles, the smooth fibres are $\textrm{Jac}(S)$. For $\eta$ the tautological section of $\pi^*K$, one   recovers   $(E,\Phi)$ from the curve $(S, L\in \textrm{Jac}(S))$ by taking $E=\pi_*L$ and $\Phi=\pi_* \eta$. 
Indeed, recall that by the definition of direct image sheaf, given an open set $\mathcal{U}\subset \Sigma$, one has that 
$  H^{0}(\pi^{-1}(\mathcal{U}),L)= H^{0}(\mathcal{U}, \pi_{*}L), $
and multiplication by  $\eta$ induces the map
\begin{equation}
 H^{0}(\pi^{-1}(\mathcal{U}),L)\xrightarrow{\eta} H^{0}(\pi^{-1}(\mathcal{U}),L\otimes\pi^{*}K).\nonumber
\end{equation}
By considering the direct image of the map, one obtains \begin{equation}
 \Phi:    \pi_{*}L\rightarrow  \pi_{*}L \otimes K,\nonumber
\end{equation}
giving a Higgs field $\Phi\in H^{0}(\Sigma,\text{End}E \otimes K)$ for $E:=\pi_{*}L$.
 Moreover, the Higgs field satisfies its characteristic equation, which by construction is given by the equation 
$\eta^{n}+a_{1}\eta^{n-1}+a_{2}\eta^{n-2}+\ldots + a_{n-1}\eta+a_{n}=0.$
%and thus $\eta$ gives an eigenvalue of $\Phi$.
Furthermore, since $S$ is irreducible, from Remark \ref{invariantsubbundle} there are no invariant subbundles of the Higgs field, making the induced Higgs bundle $(E,\Phi)$  stable.
Conversely, let $(E,\Phi)$ be a classical Higgs bundle. The characteristic polynomial is given by
\eqref{smootheq} %where the coefficients $a_{i}$ give a basis for the invariant polynomials.
and its coefficients  define the $spectral~curve$ $S$ in the total space $X$.

From \cite[Proposition 3.6]{BNR}, there is a bijective correspondence between Higgs bundles $(E,\Phi)$ and the line bundles $L$ on the spectral curve $S$ described previously. This correspondence identifies the fibre of the Hitchin map with the Picard variety of line bundles of the appropriate degree.  By tensoring the line bundles $L$ with a chosen line bundle of degree $-\text{deg}(L)$, one obtains a point in the Jacobian $\text{Jac}(S)$, the abelian variety of line bundles of  degree zero on $S$, which has dimension $g_{S}$ as in (\ref{genusSclassical}). In particular, the Jacobian variety is the connected component of the identity in the Picard group $H^{1}(S,\mathcal{O}^{*}_{S})$.
Thus, the fibre of the classical Hitchin fibration $h :\mathcal{M}\rightarrow \mathcal{A} $ is isomorphic to the Jacobian of the spectral curve $S$. For more details, the reader should refer for example to \cite[Section 2]{N3}.

\begin{example}\label{newexample}
In the case of a classical rank 2 Higgs bundle $(E,\Phi)$, the characteristic polynomial of $\Phi$ defines a spectral curve $\pi:S \rightarrow \Sigma$. This is a 2-fold cover of $\Sigma$ in the total space of $K$, and has  equation
$\eta^{2}+a_{2}=0$,  
for $a_{2}$ a quadratic differential  and $\eta$ the tautological section of $\pi^{*}K$. By \cite[Remark 3.5]{BNR} the curve is smooth when $a_{2}$ has simple zeros, and in this case the ramification points are  given by the divisor of $a_{2}$. For $z$ a  local coordinate near a  ramification point, the covering is given by
$ z\mapsto z^{2}:=w.$ In a neighbourhood of $z=0$, a section of the line bundle $M$ looks like
$f(w)=f_{0}(w)+zf_{1}(w).$
Since the Higgs field is obtained via multiplication by $\eta$, one has
\begin{equation}
\Phi (f_{0}(w)+zf_{1}(w)) = w f_{1}(w)+ z f_{0}(w),
\end{equation}
and thus  a local form of the Higgs field $\Phi$ is given by 
\[\Phi=\left(\begin{array}
              {cc}
0&w\\
1&0
             \end{array}
\right).\]
                              
% Since $f_{0}\mapsto w f_{1}$ and $f_{1} \mapsto f_{0}$   

\end{example}

By considering additional conditions on the Higgs field $\Sigma$ (e.g., requiring it to have a $U(p,q)$-structure as in Example \ref{expq}) the characteristic polynomial, and thus the curve it defines,  acquires further structure, and examples of these will be described below. Hence, when $ G_\C\subset GL(n,\mathbb{C})$, the spectral data of a $G_\C$-Higgs bundle is given by the spectral data of the pair as a classical Higgs bundle, satisfying extra conditions. An example of this is how the spectral data for real $G$-Higgs bundles can be expressed when $G$ is the split real form of the complex Lie group  $ G_\C$.

\begin{theorem}[\cite{thesis} Theorem 4.12] The spectral data for $G$-Higgs bundles  when $G$ is the split real form of the complex Lie group  $ G_\C$ is given by the $G_\C$ spectral data of order two. In other words, 
the intersection of the subspace of the Higgs bundle moduli space $\mathcal{M}_{G_{\C}}$ corresponding to the split real form of $\mathfrak{g}_{\C}$  with the smooth fibres of the Hitchin fibration
\[h:~\mathcal{M}_{G_{\C}}\rightarrow \mathcal{A}_{G_{\C}},\]
 is given by the elements of order two in those fibres.
\end{theorem}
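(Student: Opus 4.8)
The plan is to identify, fibre by fibre over a regular point $a \in \mathcal{A}_{G_\C}$, how the locus of $G$-Higgs bundles (for $G$ the split real form) sits inside $\mathrm{Jac}(S)$ or the appropriate abelian variety. The starting point is the spectral correspondence already recalled above: for a smooth spectral curve $\pi : S \to \Sigma$, the fibre $h^{-1}(a)$ of the Hitchin map is (after a fixed twist) the Jacobian $\mathrm{Jac}(S)$, with $(E,\Phi)$ recovered from $L$ via $E = \pi_* L$, $\Phi = \pi_* \eta$. A reduction of structure group from $G_\C$ to the split real form $G$ is governed, as in Definition \ref{real} and equations \eqref{invol1}--\eqref{invol2}, by an involution $\sigma = \rho\tau$ on $\mathfrak{g}_\C$ under which the Higgs field must be anti-invariant. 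The first step is therefore to translate this algebraic anti-invariance condition on $\Phi$ into a condition on the line bundle $L$ on $S$.

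The key observation is that the split real structure $\tau$ acts on the Higgs field in a way that, at the level of the spectral curve, induces an involution on $S$ covering the identity on $\Sigma$ — or more precisely, that the anti-invariance forces $L$ to satisfy $L \otimes \sigma_S^* L \cong \mathcal{O}_S$ (possibly up to a fixed twist), where $\sigma_S$ is the natural involution of $S$. For the split real form the relevant involution on $S$ is essentially $\eta \mapsto -\eta$ combined with the Cartan involution data, and the constraint one extracts is that the corresponding point of $\mathrm{Jac}(S)$ is fixed by $L \mapsto L^{-1}$; equivalently $L^{\otimes 2} \cong \mathcal{O}_S$, i.e. $L$ is a $2$-torsion point. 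Concretely I would: (i) write the characteristic polynomial in the split-real normal form (using for $\mathfrak{sl}_n$ that the invariants are $\mathrm{tr}(\Phi^i)$ and that the real structure imposes reality on the $a_i$), (ii) show the associated spectral curve carries the stated involution, (iii) use $E = \pi_* L$ together with the bilinear/symplectic form defining the real form to deduce $L \cong \sigma_S^* L^{-1}$, and (iv) combine with the fact that over a regular point $\sigma_S$ acts trivially on $\mathrm{Jac}(S)$ (or reduces to inversion) to conclude $L^2 = \mathcal{O}_S$. Conversely, any $2$-torsion line bundle yields a $\Phi$ that is $\sigma$-anti-invariant, hence defines a $G$-Higgs bundle, giving the reverse inclusion.

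I would then invoke the group structure: the elements of order two in the abelian variety $h^{-1}(a) \cong \mathrm{Jac}(S)$ form the subgroup $\mathrm{Jac}(S)[2] \cong (\mathbb{Z}/2)^{2g_S}$, and the argument above shows this subgroup is exactly the intersection of $h^{-1}(a)$ with the split-real locus $\mathcal{M}_G \subset \mathcal{M}_{G_\C}$. Stability is automatic throughout because $S$ is irreducible, so by Remark \ref{invariantsubbundle} every such pair is stable; this dispenses with any subtlety about semistable points. Assembling these fibrewise statements over all regular $a$ gives the theorem as stated: the split-real spectral data is precisely the $G_\C$ spectral data "of order two."

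**Main obstacle.** The delicate point is step (ii)--(iii): making precise which involution on $S$ and which twist of $L$ the real structure $\tau$ actually produces, and verifying that "anti-invariance of $\Phi$ under $\sigma$" is genuinely equivalent to "$L$ is $2$-torsion" rather than to some shifted torsor condition. This requires carefully matching the Cartan decomposition $\mathfrak{g}_\C = \mathfrak{h}_\C \oplus \mathfrak{m}_\C$ against the eigenspace decomposition of the spectral cover, and checking that the square root $K^{1/2}$ and the choice of reference line bundle used to center the Jacobian do not introduce a nontrivial offset. For the classical groups of type $A$--$D$ this can be done case by case using the explicit bilinear forms (as in the $SU(p,q)$ and $SO$ examples above), and one should check the cases uniformly via the principal $\mathfrak{sl}_2$ picture of the Hitchin section, but it is exactly here that the bookkeeping must be watertight.
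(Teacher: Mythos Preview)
Your proposal arrives at the right target but the mechanism is confused in a way that would not close.

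\medskip

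\textbf{Conflation of two different involutions.} You work with the anti-holomorphic real structure $\tau$ and speak of ``reality on the $a_i$''. That is the $(A,B,A)$ picture induced by a real structure on $\Sigma$, not the one relevant here. For the split real form one must use the \emph{holomorphic} Cartan involution $\theta=\rho\tau$ and the induced involution $\Theta_G:(P,\Phi)\mapsto(\theta(P),-\theta(\Phi))$ of Proposition~\ref{invrelation}. The crucial fact is that for the split real form the associated Lie-algebra involution $\sigma_s$ (defined via the principal three-dimensional subalgebra, cf.\ \eqref{involution}) has the property that $-\sigma_s$ acts \emph{trivially} on the ring of invariant polynomials of $\mathfrak{g}_\C$. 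Hence $\Theta_G$ fixes the entire Hitchin base $\mathcal{A}_{G_\C}$ pointwise: every regular fibre is preserved, and there is no reduction to a real slice of the $a_i$.

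\medskip

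\textbf{The spurious involution on $S$.} You posit an involution $\sigma_S$ on the spectral curve and aim for $L\cong\sigma_S^*L^{-1}$. For $SL(n,\C)$ with $n$ odd there is no involution $\eta\mapsto-\eta$ on $S$, so this step cannot be made uniform. The correct route avoids any geometric involution on $S$: once $\Theta_G$ is known to preserve each smooth fibre, it acts as an automorphism of the abelian variety $h^{-1}(a)$ fixing the origin (the Hitchin section is $\Theta_G$-fixed by construction, since $\sigma_s(f_0)=-f_0$ and $\sigma_s(e_i)=-e_i$). One then identifies this automorphism with inversion $x\mapsto -x$, essentially because $(E,\Phi)\mapsto(\theta(E),-\theta(\Phi))$ corresponds on spectral line bundles to $L\mapsto L^{-1}$ via relative duality; the fixed points are then exactly the $2$-torsion. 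Your ``or reduces to inversion'' parenthetical is the whole point, not an alternative.

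\medskip

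In short: replace the real-structure reasoning by the holomorphic involution $\Theta_G$; prove base-triviality via the invariant-polynomial statement (this is the content of Problem~8.i); and argue inversion on the abelian variety directly, without invoking an involution on $S$. Your worry about the $K^{1/2}$ offset is legitimate and is exactly what the Hitchin section (as canonical origin) resolves.
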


Through the above theorem, one can certain real Higgs bundles as a covering space in the Hitchin fibration, and thus their geometry can be studied through the properties of covering spaces -- for instance, through the monodromy action for the fibration. To define the monodromy action for the Hitchin fibration, consider a generic fibration $p:Y \rightarrow B$ which is locally trivial, i.e. for any point $b\in B$ there is an open neighbourhood $U_{b}\in B$ such that $p^{-1}(U_{b})\cong U_{b}\times Y_{b}$ where $Y_{b}$ denotes the fibre at $b$.   The $n$th homologies of the fibres $Y_{b}$ form a locally trivial vector bundle over $B$, which we denote $\mathcal{H}_{n}(B)$. This bundle carries a canonical flat connection, the \textit{Gauss-Manin connection}.

To define this connection we  identify the fibres of $\mathcal{H}_{n}(B)$ at nearby points $b_{1},b_{2}\in B$, i.e. $H_{n}(Y_{b_{1}})$ and $H_{n}(Y_{b_{2}})$. Consider $N\subset B$ a contractible open set which includes $b_{1}$ and $b_{2}$.  The inclusion of the fibres $Y_{b_{1}}\hookrightarrow p^{-1}(N)$ and $Y_{b_{2}}\hookrightarrow p^{-1}(N)$ are homotopy equivalences,  and hence we obtain an isomorphism between the homology of a fibre over a point in a contractible open set  $N$ and $H_n(p^{-1}(N))$: \[H_{n}(Y_{b_{1}})\cong H_{n}(p^{-1}(N))\cong H_{n}(Y_{b_{2}}).\]
This means that the vector bundle $\mathcal{H}_n(B)$ over $B$  has a flat connection, the Gauss-Manin connection.   The \textit{monodromy} of $\mathcal{H}_n(B)$   is the holonomy of this connection, i.e. a homomorphism $\pi_{1}(B)\rightarrow \textrm{Aut}(H_{n}(Y))$ as an action of $\pi_{1}(B)$ on $H_{n}(Y)$.  By applying these results  to the fibration $\M \rightarrow \A$  one has that the   Gauss-Manin connection on the cohomology of the fibres of $h:\M\rightarrow \A$ defines the monodromy action for the Hitchin fibration. This monodromy action as been studied for some groups in \cite{mono,thesis,mono1,mono2}.

\subsubsection{The Hitchin fibration and $SL(n,\C)$-Higgs bundles.}
 To give an example of the conditions needed, we shall consider  $ G_\C=SL(n,\mathbb{C})$. Here, through Definition \ref{defHiggs} an $SL(n,\mathbb{C})$-Higgs bundle is a classical Higgs bundle $(E,\Phi)$ where the rank $n$ vector bundle $E$ has trivial determinant and the Higgs field has zero trace.  
A basis for the invariant polynomials on the Lie algebra $\mathfrak{sl}(n,\mathbb{C})$ is given by the coefficients of the characteristic polynomial of a trace-free matrix $A\in \mathfrak{sl}(n,\mathbb{C})$, and thus the spectral curve $\pi: S\rightarrow \Sigma$ defined by the Higgs field has equation \eqref{smootheq} with $a_1=0$.
 In this case the generic fibres of the Hitchin fibration are given by the subset of $\text{Jac}(S)$  of line bundles $L$ on $S$ for which $\pi_{*}L=E$ and $\Lambda^{n}\pi_{*}L$ is trivial: the generic fibre  of the $SL(n,\mathbb{C})$ Hitchin fibration is biholomorphically equivalent to the Prym$(S,\Sigma)$, for $S$    the  spectral curve   defined as in (\ref{smootheq}) with the coefficient $a_1=0$. 
 
 In order to give the flavour of some of the geometry arising through the Hitchin fibration, we shall consider next the case of $n=2$ and let $\mathcal{A}$ be the $SL(2,\C)$ Hitchin base and by $\mathcal{M}$ the moduli space of $SL(2,\C)$-Higgs bundles. 
For any $\omega\in \mathcal{A}-\{0\}$ it is shown in \cite[Theorem 8.1]{goand} that the fibre $\mathcal{M}_{\omega}$ is connected.  For any isomorphism class of  $(E,\Phi)$ in $\mathcal{M}$, one may consider the zero set of its characteristic polynomial 
\[\textrm{det}(\Phi-\eta I)= \eta^{2}+\omega=0,\]
where $\omega=\textrm{det}(\Phi)\in \mathcal{A}$. This defines a spectral curve $\rho:S\rightarrow \Sigma$ in the total space $X$ on $K$, for $\eta$  the tautological section of the pull back of $K$ on $X$. We shall denote by $\M$ the regular fibres of the Hitchin map $h$, and let $\A$ be the regular locus of the base, which is given by quadratic differentials with simple zeros. Note that the curve $S$ is non-singular over the regular locus $\A$, and the ramification points are given by the intersection of $S$ with the zero section.
The curve $S$ has a  natural involution $\sigma(\eta)=-\sigma$ and thus we can define the Prym variety $\textrm{Prym}(S,\Sigma)$ as the set of line bundles $M\in \textrm{Jac}(S)$ which satisfy
\[\sigma^{*} M\cong M^{*}.\]
In particular, this definition is consistent with the one given for $n>2$ by means of the Norm map. 

Since $S$ is a 2-fold cover in the total space $X$ of $K$, the direct image of the trivial bundle $\mathcal{O}$ in $\textrm{Prym}(S,\Sigma)$ is given by $\pi_{*}\mathcal{O}=\mathcal{O}\oplus K^{-1}$ (e.g. see \cite[Remark 3.1]{BNR}). Moreover, from the natural involution $\sigma$,   the sections of $\pi_{*}\mathcal{O}$ can be separated into invariant and anti-invariant ones. As $\mathcal{O}$ corresponds to the invariant sections, and $\Lambda^{2}\pi_{*}\mathcal{O}\cong K^{-1}$, necessarily $\pi_{*}\mathcal{O}=\mathcal{O}\oplus K^{-1}$.

The \textit{Hitchin section} as introduced in \eqref{tei} of the Hitchin fibration can be obtained naturally from  the line bundle $L=\pi^{*}K^{1/2}$ on $S$, for which one has
\[\pi_{*}\pi^{*}K^{1/2}= K^{1/2}\otimes \pi_{*}\mathcal{O}=K^{1/2}\oplus K^{-1/2}.\]
Hence, the line bundle $\mathcal{O}\in \textrm{Prym}(S,\Sigma)$ has an associated Higgs bundle given by $(K^{1/2}\oplus K^{-1/2}, \Phi_{\omega})$, where the Higgs  field $\Phi_{\omega}$ defined as in Example \ref{exa} and Example {newexample} by 
\[\Phi_{\omega}=\left(\begin{array}{cc}
              0&\omega\\1&0
             \end{array}
\right),~{~\rm~for~}~ \omega\in H^{2}(\Sigma,K^{2}).\]

   For generic $ G_\C$, a  description of the fibres can be obtained by means of Cameral covers  \cite{donagi1} (see also \cite{Don93}) which is equivalent to the one given in the next sections for classical Lie groups. For a comprehensive description of Cameral covers for real Higgs bundles, the reader may refer to \cite{ana1,peon}.

\subsubsection*{The split real forms.} In order to study Higgs bundles for the split real form $SL(2,\R)$, one should remember that their spectral data corresponds to  elements of order 2 in the abelian varieties $\textrm{Prym}(S,\Sigma)$ giving the generic fibres of the Hitchin fibration. Moreover,  in this case the monodromy is generated by the action of $\pi_{1}(\A)$ on $H^{1}( \textrm{Prym}(S,\Sigma),\mathbb{Z})$, a space which in turn can be shown to give the set $P[2]$ of  torsion two points of the fibre (this can be generalized to higher rank, and the reader should refer to \cite{mono1} for a more detailed study). 

 In what follows we shall see how the space  $H^{1}( \textrm{Prym}(S,\Sigma),\mathbb{Z})$ is isomorphic to the torsion two points $P[2]$ in the generic fibre $\textrm{Prym}(S,\Sigma)$. Since a generic fibre of $h$ is given by the abelian variety $\textrm{Prym}(S,\Sigma)$, i.e., by the quotient of a complex vector space $V$ by some lattice $\wedge$, one has an associated exact sequence of homology groups
\[\ldots \rightarrow \pi_{1}(\wedge)\rightarrow \pi_{1}(V)\rightarrow \pi_{1}(\textrm{Prym}(S,\Sigma))\rightarrow \pi_{0}(\wedge)\rightarrow \ldots .\]
Hence, there is a short exact sequence
$0\rightarrow \pi_{1}(\textrm{Prym}(S,\Sigma))\rightarrow \wedge \rightarrow 0,$
from where $\wedge\cong \pi_{1}(\textrm{Prym}(S,\Sigma))$. Therefore, $\pi_{1}(\textrm{Prym}(S,\Sigma))$ is an abelian group, i.e., \[\pi_{1}(\textrm{Prym}(S,\Sigma))\cong H_{1}(\textrm{Prym}(S,\Sigma),\mathbb{Z}).\]
We shall denote by $P[2]$ the elements of order 2 in $\textrm{Prym }(S,\Sigma)$, which are equivalent classes in $V$ of points $x$ such that $2x \in \wedge$.
 Then, $P[2]$ is given by $\frac{1}{2}\wedge$ modulo $\wedge$, and as $\wedge$ is torsion free, \[P[2]\cong \wedge/2\wedge\cong H_{1}(\textrm{Prym}(S,\Sigma),\mathbb{Z}_{2}).\]
Moreover, $H^{1}(\textrm{Prym}(S,\Sigma),\mathbb{Z}_{2})\cong \textrm{Hom}(H_{1}(\textrm{Prym}(S,\Sigma),\mathbb{Z}),\mathbb{Z}_{2})$ and thus
\[H^{1}(\textrm{Prym}(S,\Sigma),\mathbb{Z}_{2})\cong \textrm{Hom}(\wedge,\mathbb{Z}_{2})\cong \wedge/2\wedge\cong P[2].\]
Hence, the covering space $P[2]$  is determined by the action of $\pi_{1}(\A)$ on the first cohomology of the fibres with $\mathbb{Z}_{2}$ coefficients. This action is studied in \cite{mono} for $n=2$, and in \cite{mono1,mono2} for $n>2$. 
%\bigskip

\subsubsection*{Real forms with signature}  The locus of $SU(p,p+q)$-Higgs bundles   is   fixed by an  involution  $\Theta_{SU(p,p+q)}:(E,\Phi)\mapsto (E,-\Phi)$ on $SL(2p+q,\mathbb{C})$-Higgs bundles corresponding to bundles $E$ which have an automorphism conjugate to $I_{p,p+q}$ sending $\Phi$ to $-\Phi$, and  whose $\pm 1$ eigenspaces have dimensions $p$ and $p+q$ (see also Example \ref{expq}).  The involution $-\sigma$ acts trivially on the polynomials of even degree. In this case, the characteristic polynomial is given by \begin{equation}\text{det}(\eta-\Phi)= \eta^q(\eta^{2p}+a_{2}\eta^{2p-2}+\ldots+a_{2p-2}\eta^2+a_{2p}),\label{eqpq}\end{equation}
 defines a spectral curve $\pi:S\rightarrow \Sigma$ as
\[S=\{0=\eta^{2p}+a_{2}\eta^{2p-2}+\ldots+a_{2p-2}\eta^2+a_{2p}\}\subset \textrm{Tot}(K),\]
where $\eta$ is the tautological section of $\pi^*K$ and $a_i\in H^0(\Sigma,K^i)$.

Whilst the spectral data is not known for $q \neq 0,1$, in the case of $q=0$ it has been described in \cite[Chapter 6]{thesis} and \cite{umm} by looking at $U(p,p)$-Higgs bundles $(W_1\oplus W_2, \Phi)$, which when satisfying $\Lambda^p W_1\cong \Lambda^q W_2^*$ correspond to $SU(p,p)$-Higgs bundles (the case of $q=1$ is studied in \cite{peon} through Cameral data). Moreover, for $q=0$, the curve $S$ is a $2p$-fold cover of the Riemann surface $\Sigma$, ramified over the $4p(g-1)$ zeros of $a_{2p}$,  and has a natural involution \begin{equation}\label{invopq}\sigma:\eta\mapsto -\eta\end{equation} which has as fixed points the ramification points of the cover. The quotient of $S$ by the action of $\sigma$ defines a $p$-fold cover $\overline{\rho}:\overline{S}\rightarrow \Sigma$ in the total space of $K^{2}$, whose equation is  given by
$
\xi^{p}+a_{1}\xi^{p-1}+\ldots+a_{p-1}\xi+a_{p}=0,
 \label{curvess}
$
where $\xi=\eta^{2}$ is the tautological section of $\overline{\rho}^{*}K^{2}$. Since $\overline{S}$ is the quotient of a smooth curve, it is also smooth. We let $\pi:S\rightarrow \overline{S}$ be the double cover given by the above quotient:
 \begin{equation}
 \xymatrix{S\ar[rr]_{\pi}^{2:1}\ar[dr]^{\rho}_{2p:1}&&\overline{S}\ar[dl]^{p:1}_{\overline{\rho}}.\\
&\Sigma&}
\end{equation}
 We shall denote by $g_{S}$ and $g_{\overline{S}}$ the genus of  $S$ and $\overline{S}$, respectively.
Since the cotangent bundle has trivial canonical bundle, and the canonical bundle of $K^2$ is $\bar \rho^*K^{-1}$, the adjunction formula gives 
  $K_{S}\cong \rho^{*}K^{2p}$ and  $K_{\bar S}\cong  \bar \rho^{*} K^{2p} \otimes \bar \rho^*K^{-1}$. Thus, \begin{align}g_{S}&=4p^{2}(g-1)+1,\\
 g_{\overline{S}}&= (2p^{2}-p)(g-1)+1.\end{align}
The spectral data for $SU(p,p)$-Higgs bundles can then be deduced from the spectral data of $U(p,p)$-Higgs bundles described as follows:
\begin{theorem}[\cite{umm}] Given a $U(p,p)$-Higgs bundle with non-singular spectral curve one can construct a pair $(S,M)$ where 
\begin{enumerate}
\item[(a)] \label{in1} the curve $\rho:S\rightarrow \Sigma$ is an irreducible non-singular $2p$-fold cover of $\Sigma$  given by the equation
$\eta^{2p}+a_{1}\eta^{2p-2}+\ldots+a_{p-1}\eta^{2}+a_{p}=0,\nonumber$
in the total space of $K$, where $a_{i}\in H^{0}(\Sigma, K^{2i})$, and $\eta$ is the tautological section of $\rho^{*}K$. The curve $S$ has an involution $\sigma$ acting by $\sigma(\eta)=-\eta$;

\item[(b)] \label{in2} $M$ is  a line bundle on $S$ such that $\sigma^{*}M\cong M$. 
 
\end{enumerate}
 Conversely, given a pair $(S,M)$ satisfying (a) and (b), there is an associated stable $U(p,p)$-Higgs bundle whose spectral curve is as in (a).
\end{theorem}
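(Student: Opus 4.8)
The plan is to deduce this from the classical spectral correspondence of \cite[Proposition 3.6]{BNR} applied to the rank $2p$ vector bundle underlying a $U(p,p)$-Higgs bundle, using the fact that a $U(p,p)$-structure is precisely a $\mathbb{Z}/2$-symmetry intertwining $\Phi$ with $-\Phi$. Concretely, a $U(p,p)$-Higgs bundle $(V_1\oplus V_2,\Phi)$ with $\Phi$ off-diagonal as in \eqref{hig} is the same datum as a rank $2p$ Higgs bundle $(E,\Phi)$ equipped with an automorphism $\psi$ with $\psi^{2}=\mathrm{id}$, $\psi\,\Phi\,\psi^{-1}=-\Phi$, and $\pm1$-eigenbundles of ranks $(p,p)$: in one direction take $\psi=\mathrm{diag}(I_p,-I_p)$, and in the other take the eigenbundle decomposition of $\psi$, noting that $\psi\Phi=-\Phi\psi$ forces $\Phi$ to interchange the eigenbundles. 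Since $\beta\gamma$ and $\gamma\beta$ share the characteristic polynomial $\xi^{p}+a_1\xi^{p-1}+\dots+a_p$ in $\xi=\eta^{2}$, the spectral curve of such an $(E,\Phi)$ has vanishing odd coefficients and is $\sigma$-invariant for $\sigma\colon\eta\mapsto-\eta$, with quotient $\overline{\rho}\colon\overline{S}\to\Sigma$ the curve \eqref{curvess}; this accounts for the shape of the curve in (a).

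For the converse direction I would start from $(S,M)$ with $\sigma^{*}M\cong M$ and first normalise: a chosen isomorphism $\sigma^{*}M\to M$ is the same as a lift of $\sigma$ to an automorphism $\widetilde\sigma$ of $M$, and $\widetilde\sigma^{2}\in\mathrm{Aut}(M)=\mathbb{C}^{*}$ can be rescaled to $\mathrm{id}$, leaving only an overall sign ambiguity (which merely swaps $V_1\leftrightarrow V_2$). Pushing forward along the double cover $\pi\colon S\to\overline{S}$, the rank $2$ bundle $\pi_{*}M$ carries $\widetilde\sigma$; away from the $\sigma$-fixed points (where $a_p$ vanishes) its two eigenvalues are $\pm1$, so $\pi_{*}M=N^{+}\oplus N^{-}$ with $N^{\pm}$ of rank $1$, and a local model $w\mapsto-w$, $w^{2}=\xi$, at the branch points shows $N^{\pm}$ are in fact line bundles there too. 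Setting $V_1:=\overline{\rho}_{*}N^{+}$, $V_2:=\overline{\rho}_{*}N^{-}$ gives rank $p$ bundles with $E:=\rho_{*}M=\overline{\rho}_{*}\pi_{*}M=V_1\oplus V_2$, while $\Phi:=\rho_{*}\eta$ anticommutes with $\rho_{*}\widetilde\sigma$ because $\sigma^{*}\eta=-\eta$, hence is off-diagonal of the form \eqref{hig}. Finally, irreducibility of $S$ together with Remark \ref{remstable} shows $(E,\Phi)$ has no $\Phi$-invariant subbundle and is stable.

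For the forward direction I would run this in reverse: given a $U(p,p)$-Higgs bundle with non-singular spectral curve $S$ (necessarily of the form in (a)), \cite[Proposition 3.6]{BNR} produces a line bundle $L$ on $S$ with $E=\rho_{*}L$, $\Phi=\rho_{*}\eta$. The automorphism $\psi=\mathrm{diag}(I_p,-I_p)$ gives an isomorphism $(E,\Phi)\cong(E,-\Phi)$; but $(E,-\Phi)\cong(\rho_{*}\sigma^{*}L,\rho_{*}\eta)$, since multiplying local sections of $\sigma^{*}L$ by $\eta$ is the same as multiplying local sections of $L$ by $\sigma^{*}\eta=-\eta$, so by the correspondence $\sigma^{*}L\cong L$ and we may take $M:=L$, which satisfies (a) and (b). Stability of $(E,\Phi)$ (again from irreducibility of $S$) forces its automorphism group to be $\mathbb{C}^{*}$, which makes $\psi$, and hence the construction of $(S,M)$, canonical up to the obvious rescalings.

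The main obstacle I anticipate is the local analysis over the branch locus of $\pi$, i.e.\ over the zeros of $a_p$: one must verify that the $\pm1$-eigensheaves $N^{\pm}$ are genuine line bundles there and compute $\deg N^{\pm}$, since this is exactly what determines the topological type $(\deg V_1,\deg V_2)$ of the reconstructed $U(p,p)$-Higgs bundle and hence which components of the moduli space the correspondence covers; the eigensheaf degrees will differ by a contribution of the ramification divisor of $\pi$. A secondary bookkeeping point is the careful identification, in the forward direction, of the spectral data of $(E,-\Phi)$ with $\sigma^{*}L$ together with the normalisation of $\widetilde\sigma$ — both routine once stability has reduced automorphisms to scalars.
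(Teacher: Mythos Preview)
Your approach is correct and is essentially the one taken in the cited reference \cite{umm}; note that the present paper does not itself prove this theorem but only states it, so there is no independent argument here to compare against. Your two key reductions --- (i) that a $U(p,p)$-structure on a rank $2p$ Higgs bundle is exactly an involution $\psi$ with $\psi\Phi\psi^{-1}=-\Phi$ and equidimensional eigenbundles, and (ii) that under BNR this involution transports to $\sigma$-equivariance of the spectral line bundle --- are precisely the mechanism used there, as is the reconstruction via the eigenbundle splitting $\pi_{*}M=N^{+}\oplus N^{-}$ on $\overline{S}$.

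The obstacle you flag at the ramification of $\pi$ is real and is exactly where the work in \cite{umm} lies: the action of $\widetilde\sigma$ on the fibre $M_{x}$ at a fixed point $x$ is a sign $\epsilon_{x}\in\{\pm1\}$, and the degrees of $N^{\pm}$ (hence of $V_{1},V_{2}$, hence the Toledo invariant) are computed from $\deg M$ and the distribution of these signs via the Lefschetz fixed point formula --- this is the content of Problem 11(i)--(ii) in the paper and of \cite[Theorem 4.12]{at} as applied there. Your local model $w\mapsto -w$ handles only the case $\epsilon_{x}=+1$; you should also treat the case where $\widetilde\sigma$ acts as $-1$ on the fibre, which is what produces the dependence of the topological invariants on the equivariant structure of $M$ rather than on $M$ alone.
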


From the structure of the characteristic polynomials of $SU(p,p+q)$-Higgs bundles described in \eqref{eqpq} one can see that for $q\geq 1$ generically the polynomial defines a reducible curve, and thus all of the $SU(p,p+q)$-Higgs bundles lie in the fibres over the discriminant locus of the Hitchin fibration. Most other non-split real forms also give Higgs bundles which lie over the non-regular locus of the Hitchin fibration, as shown in \cite{thesis}.

More generically, one may ask whether a given set of Higgs bundles can be ever expressed as elements over the regular locus of the Hitchin fibration. For the case of rank 2 Higgs bundles, Wentworth and Wolf proved in \cite{went} that for  every non-elementary representation of a surface group into $SL(2,\C)$ there is a Riemann surface structure such that the Higgs bundle associated to the representation lies outside the discriminant locus of the Hitchin fibration For $n\geq 3$, and for any other group, the question remains open: indeed, from \cite[Remark 3.iv.]{went}  there will be  obstructions in any generalization of their theorem, and some of these will come
from other real forms of $SL(n,\C)$ which always lie in the discriminant locus.  In particular, their work involves a result of Gallo-Kapovich-Marden which would need to be generalized for higher rank.

\subsection{The singular locus of the Hitchin fibration.}
Across these notes we have said that  a point $\omega \in \mathcal{A}_{G_\C}$ in the Hitchin base is    regular when the spectra curve $S$ it defines is smooth -- equivalently, we said it is singular when the corresponding spectral curve  defined through the characteristic polynomial of the Higgs field \eqref{smootheq} is singular. There are, however, more elaborated and useful ways of stratifying the Hitchin base, as for instance those used in \cite{ngo} (see \cite{lucas} for further descriptions of those stratifications). However, we shall restrict ourselves here to the simpler definition of singular point and make some comments on some of the arising geometry and open problems.  

 The most singular fiber is the fibre over $\mathbf{0} \in \mathcal{A}_{G_\C}$,   named  \textit{the nilpotent cone} by Laumon  \cite{laumon}, to emphasise the analogy with the nilpotent cone in Lie algebra.
 This fibre has been studied from different perspectives, one of which is via the moment map $\mu$ of the $S^1$ action 
$(E, \Phi) \to (E, \mathrm{e}^{i\theta} \Phi)$ 
 as done in \cite{N1,hauseldiss}. In particular, note that given a stable bundle $E$, 
take $\Phi=0$, then the Higgs bundle $(E, 0)$ is stable
and trivially fixed by the $S^1$-action.

As shown in  \cite[Theorem 5.2]{Hausel:1998aa}, 
 the nilpotent cone is preserved by the flow by $\mu$ and  it encodes the topology of the moduli space  since  points
of $\mathcal{M}_{G_\C}$ flow towards the nilpotent cone. For a overview of work done concerning singular Higgs bundles and Higgs bundles over the singular locus of the Hitchin fibration the reader should refer to \cite{SIGMA}. In particular, the nilpotent cone has primarily been studied for $SL(n,\C)$ and $GL(n,\C)$, and much of its geometry remains unknown for the moduli spaces of $G_\C$-Higgs bundles. For $SL(n,\C)$ and $GL(n,\C)$-Higgs bundles,
the irreducible components of the nilpotent cone are labeled by connected components
of the fixed point set of the $S^1$ action. The nilpotent cone can be also considered for other types of Higgs bundles, and we shall make some comments in Section~\ref{polii} about the nilpotent cone appearing through hyperpolygons. 
Among the components of the nilpotent cone for classical Higgs bundles is the moduli space $\mathcal{N}$
of semistable bundles.
%, and the Higgs bundle moduli space contains $T^*\mathcal{N}$ as an open dense subset.

Other fibers over the singular locus of the Hitchin fibration have been the subject of more recent research  (e.g., see  \cite{lucas,ortlau,goand,cayley,nonabelian,mas10}). Since we will return to singular fibres when studying branes of Higgs bundles, we shall conclude this chapter with   few short comments: 

\begin{itemize}
\item In the case of $GL(n,\C)$-Higgs bundles we mentioned that for generic points of the Hitchin base,   the corresponding fiber   can be identified  with the Jacobian $\mathrm{Jac}(S)$ of the spectral curve $S$.  When the spectral curve $S$ is not smooth but is integral, the corresponding fiber
 is seen to be the \emph{compactified Jacobian}    \cite{BNR, schaub,simpson,simpson88} (see also \cite[Fact 10.3]{melo} for a clear explanation).
  The compactified Jacobian $\overline{\mathrm{Jac}}(S)$ is the moduli space of
 all torsion-free rank-1 sheaves on $S$, where the usual ``locally-free''
 condition is missing. Intuitively,  the compactified Jacobian can be obtained by considering a path of smooth curves $S_t$ approaching a singular curve $S_0$ which is the base point of the nilpotent cone; since the limit of $\mathrm{Jac}(S_t)$ does not depend on the choice of smooth family  \cite{igusa},  this limit is $\overline{\mathrm{Jac}}(S)$. \item For $GL(n,\C)$-Higgs bundles whose spectral curve $S$ is not integral,   the fine moduli space needs to be considered.

\item For $SL(2,\C)$-Higgs bundles,  connectedness of the fiber of $\mathcal{M}_{SL(2,\C)}$ when $S$ is irreducible and has only simple nodes has been considered in \cite[\S5.2.2]{frenkelwitten}, whilst a full description of the singular fibers is given in   \cite{goand}.

 \end{itemize}

\subsection{Problem set II.}
%\addcontentsline{toc}{section}{Problem set II}
This problem set is meant to make the  reader used to \textit{spectral data} for complex Higgs bundles, and understand how it can be expressed in an abelian manner. Moreover, it should provide an introduction to ways in which one can study singular fibres of the Hitchin fibration. They provide a mixture of problems from the lectures as well as some additional ones.  

\begin{problem}\textbf{Spectral curves.} Through its characteristic polynomial, a Higgs field defines a natural spectral curve in the total space of the canonical bundle $K$. Moreover,  the extra conditions satisfied by the Higgs field are reflected, in particular, in the properties of this curve. 
\begin{itemize}
\item[i.]  For which complex Lie groups among the types $A,B,C,D$, the spectral curves defined by the corresponding Higgs field carry  an involution?
\item[ii.] Considering low rank groups, find the conditions on a Higgs bundle $(E,\Phi)$ for the Higgs field to define a spectral curve which has an order 4 automorphism (e.g. see \cite{auto1}).   
\item[iii.] By considering properties of characteristic polynomials   from a Linear Algebra perspective, give a description of two particular types of Higgs bundles $(E,\Phi)$ for which the spectral curve $S=\{(\det \Phi-\lambda)=0\}$ has equation $\{q^3(\lambda)=0\}$ for some irreducible polynomial $q(\lambda)$. 
\item[iv.] Following on \textbf{Problem 6.iii.}, for which Higgs bundles one has  \[(\det \Phi-\lambda)=q^n(\lambda)\] for some irreducible polynomial $q(\lambda)$?
\end{itemize}
\end{problem}
\begin{problem} \textbf{Generic fibres.} We have mentioned that by adding conditions to a Higgs pair $(E,\Phi)$, one can obtain a description of a principal $G_\C$-Higgs bundle. Equivalently, one may add conditions to the classical spectral curve $S$, and data on it $\textrm{Jac}(S)$ to recover principal Higgs bundles. 
\begin{itemize}
\item[i.] Use Grothendieck-Riemann-Roch to  $\text{deg}(E)=\text{deg}(L) + (n^{2}-n)(1 - g)$ when considering spectral data for classical Higgs bundles.  
\item[ii.] Show that when considering $SL(n,\C)$-Higgs bundles, the condition \[\Lambda
^n E\cong \CO\] is equivalent to requiring the spectral line bundle $L$ to be in $\textrm{Prym}(S,\Sigma)$.  
%
%In order to see why one has to take the Prym variety, recall that the Norm map
%$$\text{Nm}: \text{Pic}(S)\rightarrow \text{Pic}(\Sigma),$$
% associated to $\pi$  is defined on  divisor classes by $\text{Nm}( \sum n_{i}p_{i})=\sum n_{i} \pi(p_{i})$. 
%In particular,
%\[\text{Nm}(\pi^{-1}(x))= \pi(\pi^{-1}(x)) = nx.\]
%The kernel of the Norm map is the $Prym ~variety$, and is denoted by $\text{Prym}(S,\Sigma)$.
% From \cite[Section 4]{BNR}, the determinant bundle of $L$ satisfies
%\[\Lambda^{n}\pi_{*}L\cong \text{Nm}(L)\otimes K^{-n(n-1)/2}.\]
% Thus, $\Lambda^{n}\pi_{*}L$ is trivial if and only if
%\begin{equation}
%  \text{Nm}(L) \cong K^{n(n-1)/2}. \label{ab1}
%\end{equation}
%Equivalently, since $\text{Nm}( \sum n_{i}\pi^{-1}(p_{i}))= n \sum n_{i}p_{i},$ the determinant bundle $\Lambda^{n}\pi_{*}L$ is trivial
% if the line bundle $M:=L\otimes \pi^{*}K^{-(n-1)/2}$ is in the Prym variety. 
% 
\item[iii.] Show that the smooth $SO(2n,\mathbb{C})$-Hitchin  fibres    are  $\text{Prym}(\hat{S},\hat{S}/\hat{\sigma})$, for $\hat S$ the desingularization of the curve $S$ associated to the regular base point.
\item[iv.] Show that when considering generic $SO(2n+1,\C)$-Higgs bundles, the spectral curve $S$ defined through the characteristic polynomial is never smooth. 
\end{itemize}

\end{problem}
\begin{problem}  \textbf{The Hitchin fibration}. When considering a subspace of Higgs bundles which has finite intersection with the regular locus, one can study its geometry and topology though the monodromy action as well as through other involutions.

\begin{itemize}
\item[i.]There is a natural involution $\sigma_{s}$ on the Lie algebra $\mathfrak{g}_{\C}$  given by
\begin{equation}\sigma_{s}( [\textrm{ad } f_{0}]^{n}e_{i})=(-1)^{n+1}[\textrm{ad } f_{0}]^{n}e_{i}.\label{involution}\end{equation}
This    automorphism   is uniquely defined by  
$\sigma_{s}(e_{i})=-e_{i}$ and $\sigma_{s}(f_{0})=-f_{0}.$ Show that  $-\sigma_{s}$  on $\mathfrak{g}_{\C}$   acts trivially on the ring of invariant polynomials of the Lie algebra $\mathfrak{g}_{\C}$.
%  \begin{proof} Let us consider the  basis $\{p_{1},\ldots,p_{r}\}$ of invariant polynomials as defined previously. 
% As the basis of invariant polynomials is invariant under conjugation, we have that for $f\in\mathcal{K}$ and $X\in \mathfrak{g}_{f}$, the involution $\sigma_{s}$ acts as follows:
%\begin{equation}
%\begin{array}
% {rclcl}
% (-\sigma_{s}^{*}p_{i})(X)&=& (-\sigma_{s}^{*})p_{i}(f)&~&{~\rm~since~}X\in \mathfrak{g}_{f}\\
%&=& p_{i}(-\sigma_{s}(f))&~&{~\rm~by~definition~of}~\sigma_{s}\\
%&=&p_{i}(f)&~& {~\rm ~by~Remark~\ref{presK},}{~\rm~since~}f\in\mathcal{K}\\
%&=&p_{i}(X)&~&{~\rm~since~}X\in \mathfrak{g}_{f}.\\
%\end{array}\nonumber
%\end{equation}
%As these equalities hold for any $X$,  by continuity the ring of invariant polynomials is acted on trivially by $-\sigma_{s}$. 
%\end{proof}

\item[ii.] Show that for non-split real forms $G$, the characteristic polynomial of a $G$-Higgs field defines a reducible curve.
\item[iii.] \textbf{((*))} Generalize  \cite{bhosle3} to define generalized   parabolic $G_\C$-Higgs bundles on $X$, as well as a Hitchin fibration.

\end{itemize}

\end{problem}
\begin{problem}  \textbf{Singular fibres}.  Consider    $SU(p,p+q)$-Higgs bundles $(V\oplus W,\Phi)$, whose characteristic polynomial is   \[\text{det}(\eta-\Phi)= \eta^q(\eta^{2p}+a_{2}\eta^{2p-2}+\ldots+a_{2p-2}\eta^2+a_{2p}),\]
with natural involution $\sigma:\eta\mapsto-\eta$ as in \eqref{invopq}.
 \begin{itemize}
\item[i.] For $q=0$, consider  line bundles $U_{1}$ and $U_{2}$ on \[\bar \pi:\overline{S}=S/\sigma \rightarrow \Sigma,\] such that $\bar \pi_*U_1=V$ and  $\bar \pi_*U_1=W$, and express $\Lambda^{p}V$ and $\Lambda^{p}W$ and  $\Lambda^{p}V\cong \Lambda^{p}W^{*}$ in terms of $U_i$. 

%
%$\Lambda^{p}\overline{\rho}_{*}U_{1}\cong
%\Lambda^{p}\overline{\rho}_{*}U_{2}^{*}.$ %Furthermore, in this case we have
%Considering the
%norm map $Nm:\textrm{Pic}(\overline{S})\rightarrow \textrm{Pic}(\Sigma)$, the determinant bundles of $V$ and $W$ can be expressed as follows \cite[Section 4]{BNR}:
%\begin{equation}
% \Lambda^{p}V&=&\Lambda^{p}\overline{\rho}_{*}U_{i}= \textrm{Nm}(U_{i})\otimes K^{-p(p-1)} ~{~\rm~for~}~i=1,2.
%\end{equation}
%One should note that we are identifying divisors of $\Sigma$ and their
%corresponding line bundles. 
%%
%The condition $\Lambda^{p}V\cong
%\Lambda^{p}W^{*}$ is equivalent to requiring
%\begin{equation}\textrm{Nm}(U_{1})= -\textrm{Nm}(U_{2})+2p(p-1)K,\label{mapaUU}\end{equation}
%which is compatible with the degree calculations done in previous sections giving
%$$\deg(U_{1})=-\deg(U_{2})+4p(p-1)(g-1).$$

\item[ii.] For $p=2$ and $q=0$ the quotient curve $\bar S$ of \textbf{Problem 9.i.} is a double cover of the Riemann surface $\Sigma$. Describe any additional geometric structure appearing from the fact that the Higgs bundle has low rank. 
\item[iii.] \textbf{((*))} How can the structure of an $SO(2,4)$-Higgs bundle appear through the data described in \textbf{Problem 9.ii.}?
\item[iv.] \textbf{((*))} Consider $p=2$ and $q=2$ and give a description of spectral data that could be associated to $U(2,4)$-Higgs bundles -- see \cite{cayley} for details on spectral data for orthogonal Higgs bundles with signature. 
\end{itemize}

\end{problem}
\begin{problem} \textbf{Parabolic Hitchin fibration} As explained in \cite{davidp}, the moduli space of  parahoric Higgs bundles, and in particular of parabolic Higgs bundles as defined in the previous chapter, carry a Hitchin map which is a Poisson map whose generic fibres are abelian varieties. Using the paper \cite{davidp}, think of parabolic Higgs bundles in terms of parahoric Higgs bundles:
\begin{itemize}
\item[i.] Explain the relation between parahoric Higgs bundles and parabolic Higgs bundles; 
\item[ii.]  By considering the parahoric global nilpotent cone, describe the fibre over $0$ for parabolic $\mathcal{G}$-Higgs bundles in terms the cotangent bundle of $\textrm{Bun}_{\mathcal{G}}$.
\item[iii.] Give a summery of how to establish that the Hitchin map on the moduli of polystable parabolic Higgs bundles is proper.
\item[iv.] Consider parabolic $SL(4,\C)$-Higgs bundles, and give a comparison of the Hitchin maps defined in \cite{marinap} and \cite{davidp}.

\end{itemize}

\end{problem}
  
%%%%%%%%%%%%%%

\section{Branes in the moduli space of Higgs bundles}\label{cuarto}\label{chap-branes}

\epigraph{\textit{If people do not believe that mathematics is simple, it is only because they do not realize how complicated life is.}}{John von Neumann}

%\bigskip

   The appearance of Higgs bundles (and flat connections) within string theory and the geometric Langlands program has led researchers to study the \textit{derived category of coherent sheaves} and the \textit{Fukaya category} of these moduli spaces. Therefore, it has become fundamental to understand Lagrangian submanifolds of the moduli space of Higgs bundles supporting holomorphic sheaves ($A$-branes), and their dual objects ($B$-branes).    
       The space of solutions to Hitchin's equations \eqref{2.1}--\eqref{hit2} is a  hyperk\"ahler manifold, and thus there is a family of complex structures from which we shall fix  $I,J,K$ obeying    quaternionic equations, following the notation of \cite{real,N1,Kap}. In particular, under this convention  the smooth locus of
 $\mathcal{M}_{G_{\mathbb{C}}}$ corresponds to the space of solutions to Hitchin's equations  with structure $I$. 
Throughout this notes we shall adopt the physicists' language in which a Lagrangian submanifold supporting a flat connection is called an {\em A-brane}, and a complex submanifold supporting a complex sheaf is a {\em B-brane}. By considering the support of branes, we shall refer to a submanifold of a  hyperk\"ahler manifold as being of type $A$ or $B$ with respect to each of the   structures, and hence one may consider branes of type 
\begin{equation}(B,B,B), ~(B,A,A), ~(A,B,A) ~\textrm{and}~ (A,A,B).\end{equation} 

We shall dedicate this chapter to the  construction, description,  and study of such branes, by considering additional structure on the Riemann surface $\Sigma$ and on the complex Lie group $G_\C$:
\begin{itemize}
\item \textbf{Branes through finite group actions on Riemann surface $\Sigma$}.  Finite group actions \linebreak $\Gamma\times\Sigma\to\Sigma$ 
 on   compact connected Riemann surfaces  have long been considered, and through these actions we shall consider $(B,B,B)$-branes appearing through $\Gamma$-equivariant representations following \cite{cmc,cmc1}.
\item \textbf{Branes through anti-holomorphic involutions}. We have seen before that anti-holomorphic involutions can be used to define real Higgs bundles, as initiated in \cite{N5}. By considering real structures on the Riemann surface and their compositions with group involutions, we shall define $(B,A,A),$ $(A,B,A)$ and $(A,A,B)$-branes following \cite{branes,real} (see also \cite{obw}). These constructions will play a special role later on when considering hyperpolygons following \cite{poli} in the final chapter of these notes.
\end{itemize}

\subsection{Branes through finite group actions.}

As shown in \cite{cmc}, a natural way to define $(B,B,B)$-branes in the moduli spaces of complex Higgs bundles is though the action of a finite group $\Gamma$ on the base Riemann surface of genus $g$.   
When the genus is 2 or 3, a complete classification of all finite group actions appears in \cite[Tables 4, 5]{B91}, 
 which allows one to perform explicit calculations in those cases. 
 Moreover, in the case of actions induced on rank 2 bundles through automorphisms of $\Sigma$, a very concrete description of the fixed points in terms of parabolic structures is given in \cite{AG}.

Within this setting, a much studied question is the appearance of   flat $\Gamma$-equi-variant $G_\mathbb{C}$-connections on $\Sigma$, for which one needs to   fix a $C^\infty$ trivialization $\underline{\mathbb{C}}^{n}=\Sigma\times\mathbb{C}^{n}\to\Sigma$ of the underlying vector bundle.  
%  In what follows we shall restrict our attention to the groups $G_\mathbb{C}=\textrm{GL}(n,\mathbb{C}),\, \SL(n,\mathbb{C})$ and thus in the case of $\SL(n,\mathbb{C})$ require the trivialization  to preserve the determinant.
% 
%   
% \begin{definition}\label{hola1}
There are a few different perspectives on  $\Gamma$-equivariant flat connection on a Riemann surface $\Sigma$, and here we shall consider the definition of \cite{cmc}, through which a $\Gamma$-equivariant flat connection is a flat connection $\nabla$ on $\underline{\mathbb{C}}^n\to\Sigma$
 such that for every $\phi\in\Gamma$ there exist a $\textrm{GL}(n,\mathbb{C})$-gauge transformation $g_\phi\colon\Sigma\to \textrm{GL}(n,\mathbb{C})$ for which 
 \begin{itemize}
 \item
 $\phi^*\nabla=\nabla \cdot g_\phi,$
 \item and  
 $\phi\mapsto g_\phi$
  a {\em generalized group homomorphism}, i.e.,  such that 
\begin{align}g_{id}&=id\\g_{(\phi\circ\tau)}(p)&=g_{\tau(p)}\circ g_{\phi}(\tau(p)).\end{align}
\end{itemize}

% \end{definition}
% Among other properties one can show through the above definition, is this useful comparison of equivariant connections on the surface and flat connections on a quotient space:
% \begin{proposition}[\cite{cmc}]  Let $\Gamma\times\Sigma\to\Sigma$ be a group action by holomorphic automorphisms with fixed points, and let $B$ be the image of the fixed points in $\Sigma/\Gamma$. Then, any $\Gamma$-equivariant flat $G_\mathbb{C}$-connection is gauge equivalent to the pull-back of a flat $\textrm{GL}(n,\mathbb{C})$-connection on $\Sigma/\Gamma-B.$ \end{proposition}
% 

Consider $G_\C=\textrm{GL}(n,\C),\textrm{SL}(n,\C)$, and $\Gamma\times\Sigma\to\Sigma$   a finite group action by holomorphic automorphisms without fix points.
 Then,       any $\Gamma$-equivariant flat $G_\C$-connection is given by the pull-back of a flat $G_\C(n,\C)$-connection on $\Sigma/\Gamma$ (e.g. see \cite[Proposition 1]{cmc}). When the action has fixed points,    the image $B\subset \Sigma/\Gamma$ of the fixed points gives the  branch points of the ramified cover
 \begin{equation}\pi_{\Gamma} : \Sigma \rightarrow \Sigma/\Gamma, \end{equation}
 and thus  any $\Gamma$-equivariant flat $G_\C$-connection is given by the pull-back of a flat $G_\C$-connection on $\Sigma/\Gamma-B$  (e.g. see \cite[Proposition 2]{cmc}). From the definition of $\Gamma$-equivariant flat connections it follows that      a $\Gamma$-equivariant flat irreducible $G_\C$-connection  $\nabla$ on $\Sigma$, gives rise to a $\Gamma$-equivariant connection $\nabla\cdot g$ for $g\colon \Sigma\to G_\C$  a gauge transformation, which corresponds to the same point in the moduli space of flat (possibly singular) $G_\C$-connections on $\Sigma/\Gamma$ (e.g. see \cite[Lemma 1]{cmc}).

 \begin{example}
 To illustrate the appearance of equivariant connections (and thus of equivariant Higgs bundles) consider the following example from \cite{cmc}. Let $\Sigma$ be a compact Riemann surface of genus $3$   admitting a fixed point free involution $\phi\colon\Sigma\to\Sigma$. Through this action, one has    a double cover to the quotient  Riemann surface  $M:=\Sigma/\mathbb{Z}_2$ of genus $2.$
 
Let $\nabla$ be a flat unitary line bundle connection which is not self-dual,  and
such that $\phi^*\nabla=\nabla^*$. 
Then, \[\nabla\oplus\nabla^*\] is a $\mathbb{Z}_2$-equivariant flat  connection on $\Sigma$,
with a corresponding irreducible flat connection on $\Sigma/\mathbb{Z}_2$.
\end{example}

By considering different finite group actions of a group $\Gamma$ on a Riemann surface $\Sigma$  of genus  $g\geq2$, in  \cite[Theorem 8]{cmc} it is shown that the connected components of the space of gauge equivalence classes of irreducible $\Gamma$-equivariant flat $G_\C$-connections are hyperk\"ahler submanifolds
of the moduli space of flat irreducible $G_\C$-connections on $\Sigma$, and hence give $(B,B,B)$-branes in the moduli space  of $G_{\C}$-Higgs bundles. 

When $G_\C=SL(2,\C)$,  from the above, any component of the moduli space of flat $\Gamma$-equivariant $\SL(2,\C)$-connections on $\Sigma$ can be identified with the moduli space of flat $\SL(2,\C)$-connections or  $\textrm{PSL}(2,\C)$-connections on a $n$-punctured compact Riemann surface of genus $\gamma$, for $n\in \mathbb{N}$ satisfying some compatibility conditions. furthermore, more can be said about the type of fixed point sets of the induced involution on the moduli spaces of Higgs bundles by the finite group:

\begin{theorem}[\cite{cmc} Theorem 11]  Let $\Sigma$ be a compact Riemann surface of genus $g\geq2$ and $\Gamma$ be a finite group   acting on  $\Sigma$ by holomorphic automorphisms
 such that  a component of the moduli space of $\Gamma$-equivariant flat $\SL(2,\C)$-connections on $\Sigma$ has half the dimension
of the moduli space $\mathcal M_g$.
Then one  of the following holds:
\begin{itemize}
\item[(I)]  \textit{$\Gamma=\mathbb{Z}_2$ acts by a fix-point free involution on $\Sigma$, or}
\item[(II)]  \textit{$\Sigma$ is hyperelliptic of genus $3$ and $\Gamma=\mathbb{Z}_2\times \mathbb{Z}_2.$}
\end{itemize}
In the later case (II), one of the $\mathbb{Z}_2$-factors corresponds to the hyperelliptic involution, whilst the other $\mathbb{Z}_2$-factor corresponds to an involution with 4 fixed points.\end{theorem}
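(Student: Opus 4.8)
The plan is to leverage the dimension-counting constraint together with the Riemann–Hurwitz formula applied to the ramified covering $\pi_\Gamma\colon\Sigma\to\Sigma/\Gamma$, and then invoke the identification (described earlier via \cite[Propositions 1,2]{cmc}) of $\Gamma$-equivariant flat $\SL(2,\C)$-connections on $\Sigma$ with flat $\SL(2,\C)$- or $\PSL(2,\C)$-connections on a punctured quotient surface. The key numerical input is that the moduli space $\mathcal M_g$ of flat $\SL(2,\C)$-connections on $\Sigma$ has complex dimension $6g-6$, while a component of the $\Gamma$-equivariant moduli space, being identified with flat connections on an $n$-punctured genus-$\gamma$ surface, has dimension governed by $6\gamma-6+2n$ (up to small corrections from the projective versus linear group and from the residue data at punctures). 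Setting this equal to $\tfrac12(6g-6)=3g-3$ and combining with Riemann–Hurwitz gives a Diophantine constraint on $|\Gamma|$, $\gamma$, $n$, and the ramification data, which I expect to force $|\Gamma|=2$.

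\textbf{The steps, in order.} First I would write down Riemann–Hurwitz for $\pi_\Gamma$: $2g-2=|\Gamma|(2\gamma-2)+\sum_{p\in B}(|\Gamma|-|\Gamma_p|)$, where $\Gamma_p$ is the stabilizer of a point over $p$ and the branch locus $B$ has $n$ points. Second, I would record the dimension of the relevant component of the equivariant moduli space in terms of $\gamma$ and the punctures: for flat $\SL(2,\C)$-connections on an $n$-punctured genus-$\gamma$ surface with generic semisimple conjugacy-class data at the punctures the dimension is $(2\gamma-2)\dim\SL(2,\C)+2 + n(\dim\SL(2,\C)-\mathrm{rk})= 6\gamma-6+2n$, and similarly for $\PSL(2,\C)$. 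Third, I would impose the hypothesis that this equals $\tfrac12\dim\mathcal M_g = 3g-3$. Fourth, substitute the expression for $2g-2$ from Riemann–Hurwitz and clear denominators; because $|\Gamma|\ge 2$ the quotient $3g-3$ grows much faster than $6\gamma-6+2n$ unless $|\Gamma|$ is as small as possible, and a short case analysis (treating $|\Gamma|=2$, $|\Gamma|=3$, $|\Gamma|\ge4$ separately, and within $|\Gamma|=2$ distinguishing $n=0$ from $n>0$) isolates exactly the two possibilities: either $\Gamma=\mathbb Z_2$ acting freely ($n=0$, giving case (I)), or a genus-$3$ hyperelliptic situation with $\Gamma=\mathbb Z_2\times\mathbb Z_2$. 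Fifth, in the latter case I would use the classification of finite group actions on genus-$3$ surfaces (e.g. \cite[Tables 4,5]{B91}) to pin down that one $\mathbb Z_2$-factor must be the hyperelliptic involution (with $8$ fixed points, producing a genus-$0$ quotient) and the complementary factor an involution with exactly $4$ fixed points, which is the content of case (II).

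\textbf{The main obstacle} I anticipate is getting the dimension formula for the equivariant component exactly right, including the corrections coming from whether the quotient group is $\SL(2,\C)$ or $\PSL(2,\C)$ and from the precise conjugacy classes forced at the punctures by the monodromy of $\Gamma$ around the branch points (these residues are not free parameters — they are constrained by the structure of the generalized homomorphism $\phi\mapsto g_\phi$, so the naive count $6\gamma-6+2n$ may need adjustment). Once that formula is nailed down the arithmetic is elementary, so the real care goes into the bookkeeping of the punctured moduli space; I would cross-check the final numerics against the known genus-$2$ and genus-$3$ tables to make sure the two surviving cases are precisely (I) and (II) and that no sporadic solution has been dropped. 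A secondary, purely expository step is to confirm the fixed-point counts in case (II): the hyperelliptic involution on a genus-$3$ curve has $2g+2=8$ fixed points, and the remaining generator, acting on the genus-$3$ curve so that the full $\mathbb Z_2\times\mathbb Z_2$ quotient has the right genus, must have $4$ fixed points by a further Riemann–Hurwitz computation on the intermediate quotients.
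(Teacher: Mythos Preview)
Your proposal is correct and follows essentially the same approach the paper indicates: the paper does not prove this theorem in the body (it is quoted from \cite{cmc}), but Problem~12 in Problem Set~III outlines precisely your strategy---combine the dimension formula $\dim\mathcal M_{\gamma,n}=6\gamma-6+2n$ with Riemann--Hurwitz for $\Sigma\to\Sigma/\Gamma$ and the constraint $6\gamma-6+2n=3g-3$. The one tactical difference is that the paper's exercise suggests first proving $|\Gamma|=2^k$ (by applying Riemann--Hurwitz to a cyclic $\mathbb{Z}_p$-subgroup for each prime $p\mid|\Gamma|$ and ruling out odd $p$), whereas you propose a direct case split on $|\Gamma|=2,3,\ge 4$; both routes reach the same endpoint, and your identification of the dimension-formula bookkeeping (the $\SL$ vs.\ $\mathrm{PSL}$ issue and the low-$(\gamma,n)$ degenerations the paper flags explicitly) as the only real subtlety is exactly right.
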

  
  It is interesting to note that  all genus 3 Riemann surfaces with
fixed point free actions must be hyperelliptic (see \cite[Proposition 21]{cmc}) and thus    a hyperelliptic Riemann surface of genus 3 with an additional involution with 4 fixed points is the same as Riemann
surface of genus 3 with a fixed point free involution. 

From the above, fixed point free involutions become of particular interest. In particular, for $\SL(2,\C)$-Higgs bundles, it is shown in \cite[Proposition 23]{cmc} that given a fixed point free involution $\tau$, a flat and irreducible $\tau$-invariant $\SL(2, \C)$-connection on a Riemann surface of genus 3 is
equivariant with respect to the hyperelliptic involution. It should be noted that irreducibility is not a necessary condition to be in an equivariant brane, since there exist flat reducible connections on Riemann surfaces which correspond to irreducible connections on the hyperelliptic genus 2 quotient Riemann surface surface.

The geometry of these branes can be studied, in particular, by considering their intersection with the Hitchin fibration -- as was done in the last chapter when considering real Higgs bundles. For the branes within the moduli space of $\SL(2,\C)$-Higgs bundles, one can see them expressed in terms of Prym varieties:

\begin{theorem}[\cite{cmc} Theorem 14]
Let $\tau$ be a fix point free involution. The $\tau$-equivariant $(B,B,B)$-brane intersects a generic fibre  of the corresponding Hitchin fibration over a point defining the spectral curve $S$
 in the abelian variety  $ \textrm{Prym}(S/\tau,\Sigma/\tau)/\mathbb{Z}_2.$
\end{theorem}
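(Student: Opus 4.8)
The plan is to run the spectral correspondence simultaneously on $\Sigma$ and on the quotient $\bar\Sigma:=\Sigma/\tau$, and to match the two pictures by étale descent along the unramified double cover $\pi_\tau\colon\Sigma\to\bar\Sigma$. Write $\mathcal{A}=H^0(\Sigma,K^2)$ for the $\SL(2,\C)$ Hitchin base and fix $a_2\in\mathcal{A}$ defining a smooth spectral curve $\rho\colon S\to\Sigma$, $\{\eta^2+a_2=0\}\subset\textrm{Tot}(K)$, so that $h^{-1}(a_2)\cong\PriM$ and a spectral line bundle $M\in\PriM$ recovers $(E,\Phi)$ via $E=\rho_*(\rho^*K^{1/2}\otimes M)$, $\Phi=\rho_*\eta$ (as recalled above, with answer independent of $K^{1/2}$). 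A Higgs bundle in the brane is $\tau$-equivariant, so its coefficient $a_2=\det\Phi$ is $\tau$-invariant and equals $\pi_\tau^*\bar a_2$ for a unique $\bar a_2\in H^0(\bar\Sigma,K_{\bar\Sigma}^2)$ (using $K=\pi_\tau^*K_{\bar\Sigma}$); it therefore suffices to treat fibres over this $\tau$-invariant locus of $\mathcal{A}$, still generic enough that $S$ is smooth. Then $\tau$ lifts to a fixed-point-free involution $\hat\tau$ of $S$, $\bar S:=S/\hat\tau$ is a smooth curve, one checks $S=\Sigma\times_{\bar\Sigma}\bar S$ with $\bar\pi\colon\bar S\to\bar\Sigma$ the spectral cover of $\bar a_2$, and the spectral involution $\sigma\colon\eta\mapsto-\eta$ commutes with $\hat\tau$ and descends to $\bar\sigma$ on $\bar S$.

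\smallskip\noindent\emph{Step 1 (translate the brane condition).} Using base change for $\rho_*$ along the Cartesian square relating $\hat\tau,\tau,\rho$, together with $\hat\tau^*\eta=\eta$ (the tautological section is the cotangent Liouville form, preserved by cotangent lifts) and a $\tau$-equivariant choice $K^{1/2}=\pi_\tau^*K_{\bar\Sigma}^{1/2}$, show that an $\SL(2,\C)$-equivariant structure on $(E,\Phi)$ corresponds under the spectral correspondence to a lift of $\hat\tau$ to $M$, i.e. an isomorphism $\hat\tau^*M\cong M$; conversely any such $M$ carries a $\mathbb{Z}_2$-linearization (rescale the isomorphism so its square is the identity) producing a genuine $\tau$-equivariant polystable Higgs bundle. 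Hence the brane meets $h^{-1}(a_2)$ inside $\{M\in\PriM\,:\,\hat\tau^*M\cong M\}$.

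\smallskip\noindent\emph{Step 2 (étale descent and the Prym).} For the étale $\mathbb{Z}_2$-cover $p\colon S\to\bar S$, the pullback $p^*\colon\textrm{Pic}(\bar S)\to\textrm{Pic}(S)$ has image exactly the $\hat\tau$-invariant line bundles (every such $M$ descends; every pullback is invariant) and kernel $\langle\kappa\rangle\cong\mathbb{Z}_2$ with $\kappa=\bar\pi^*\kappa_0$, $\kappa_0\in\textrm{Jac}(\bar\Sigma)[2]$ the class defining $\pi_\tau$. Since $p\circ\sigma=\bar\sigma\circ p$, one gets $\sigma^*p^*\bar M=p^*\bar\sigma^*\bar M$, so $M=p^*\bar M\in\PriM$ (equivalently $\sigma^*M\cong M^{-1}$) forces $\bar\sigma^*\bar M\otimes\bar M=\bar\pi^*\textrm{Nm}_{\bar S/\bar\Sigma}(\bar M)\in\ker p^*=\{\mathcal{O},\kappa\}$, i.e. by injectivity of $\bar\pi^*$ on $\textrm{Pic}$ that $\textrm{Nm}_{\bar S/\bar\Sigma}(\bar M)\in\{\mathcal{O},\kappa_0\}$, the first case being exactly $\bar M\in\Pri(\bar S,\bar\Sigma)$. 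Conversely $p^*$ carries $\Pri(\bar S,\bar\Sigma)$ into $\{M\in\PriM:\hat\tau^*M\cong M\}$, and since $\kappa$ is $2$-torsion and $\bar\sigma$-invariant, hence $\kappa\in\Pri(\bar S,\bar\Sigma)$, the map $p^*$ descends to an injection $\Pri(\bar S,\bar\Sigma)/\langle\kappa\rangle\hookrightarrow\PriM$ onto a translate of an abelian subvariety. The $(B,B,B)$ property of the brane (a hyperk\"ahler submanifold, holomorphic for all of $I,J,K$) upgrades this set-theoretic identification to an isomorphism of abelian varieties onto the relevant component of the brane--fibre intersection, namely $\Pri(S/\tau,\Sigma/\tau)/\mathbb{Z}_2$.

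\smallskip\noindent\emph{Sanity check and main obstacle.} With $g=2\gamma-1$ (Riemann--Hurwitz for $\pi_\tau$, $\gamma$ the genus of $\bar\Sigma$) one computes $g_{\bar S}=4\gamma-3$ and $\dim\Pri(\bar S,\bar\Sigma)=3\gamma-3$, which is half the dimension of the $\SL(2,\C)$ moduli space over $\bar\Sigma$ and hence the expected fibre dimension of the brane, consistent with the claimed answer. The main obstacle is the $2$-torsion bookkeeping: one must pin down \emph{which} connected component(s) of the equivariant brane meet a given fibre, i.e. decide the fate of the locus $\textrm{Nm}_{\bar S/\bar\Sigma}(\bar M)\cong\kappa_0$ (a translate of $\Pri(\bar S,\bar\Sigma)$ by a class of norm $\kappa_0$), showing either that it sits in a different component of the brane or that it is excluded by the $\SL(2)$-determinant/polystability constraint; and to check that the descent is compatible with (poly)stability and with the three complex structures so that the bijection is biholomorphic. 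Subsidiary facts to record cleanly are the injectivity of $\bar\pi^*$ on $\textrm{Pic}$, surjectivity of $\textrm{Nm}_{\bar S/\bar\Sigma}$, and the identification $\ker p^*=\langle\bar\pi^*\kappa_0\rangle$.
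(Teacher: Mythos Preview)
The paper does not supply its own proof of this theorem; it is quoted verbatim from \cite{cmc} as a survey result, so there is no in-paper argument to compare against. Your route---lift $\tau$ to the spectral curve, translate equivariance into $\hat\tau$-invariance of the spectral line bundle, then descend along the \'etale double cover $S\to\bar S$ and intersect with the Prym condition---is exactly the natural one and is almost certainly what \cite{cmc} does.

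That said, the gap you yourself flag in the ``main obstacle'' paragraph is real and must be closed before the argument is complete. You have shown that the $\hat\tau$-invariant locus in $\PriM$ is $p^*$ of $\{\bar M:\textrm{Nm}_{\bar S/\bar\Sigma}(\bar M)\in\{\mathcal O,\kappa_0\}\}$, which is a disjoint union of \emph{two} translates of $\Pri(\bar S,\bar\Sigma)$, not one. The resolution is exactly the component structure the notes describe immediately before the theorem: the equivariant brane decomposes into components according to whether the $\tau$-equivariant Higgs bundle descends to an $\SL(2,\C)$-Higgs bundle on $\bar\Sigma$ or only to a $\textrm{PSL}(2,\C)$ one. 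The $\textrm{Nm}(\bar M)=\mathcal O$ piece is precisely the component coming from genuine $\SL(2,\C)$ descent (the Hitchin fibre of $\mathcal M_{\SL(2,\C)}(\bar\Sigma)$ over $\bar a_2$ \emph{is} $\Pri(\bar S,\bar\Sigma)$, and pullback by $\pi_\tau$ has kernel $\langle\kappa_0\rangle$ acting by tensoring), while the $\textrm{Nm}(\bar M)=\kappa_0$ piece corresponds to the other component. So the theorem, read as a statement about a single component of the brane, is exactly your identification $\Pri(\bar S,\bar\Sigma)/\mathbb Z_2$; you should make that restriction explicit rather than leaving it as an obstacle.

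A minor point: in Step~1 you assert that any isomorphism $\hat\tau^*M\cong M$ can be rescaled to a genuine $\mathbb Z_2$-linearization. This is true here because $\hat\tau$ is fixed-point free (so the obstruction class in $H^2(\mathbb Z_2,\C^*)$ vanishes), but it is worth saying why; this is also where the two possible linearizations (differing by the sign character) feed into the component count above.
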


Following \cite{cmc} and \cite{cmc1}, one may define \textit{anti-equivariant Higgs bundles $(\bar\partial,\Phi)$} with respect to a fix point free involution $\tau$, as pairs    given by a $\tau$-equivariant holomorphic structure $\bar\partial$ 
and a Higgs field $\Phi$
which satisfies \[\tau^*\Phi=-g^{-1}\circ \Phi\circ g,\]
where $g$ is the gauge transformation such that $\tau^*\bar\partial=\bar\partial.g$
and $(g\circ\tau) g=\textrm{id}.$  These Higgs bundles form examples of Lagrangians subspaces in the moduli space of Higgs bundles, and whilst they can be defined for compact Riemann surfaces of any genus, we shall restrict our attention here to odd genus following \cite{cmc} (the more general setting is treated in \cite{cmc1}).
Indeed, it is shown in \cite[Proposition 15]{cmc} that anti-equivariant $\SL(2,\C)$-Higgs bundles for connected components in $\mathcal{M}_{\SL(2,\C)}$ which are complex submanifolds  with respect to the complex structure  $I$ and Lagrangian with respect to the complex structures $J$ and $K$, and thus define $(B,A,A)$-branes. Finally, we should mention that interesting research directions arise from \cite{cmc} when considering generalizations of the results appearing there for more general groups.

\subsection{Branes through anti-holomorphic involutions.}
Branes of different types can be obtained by considering anti-holomorphic involutions both on the compact Riemann surface as well as on the complex Lie group, and this is the perspective of \cite{branes,real} and which we shall describe in what remains of the chapter. 
Following \cite{N5} and recalling the previous analysis in the first chapter about   Higgs bundles for real groups, one has the following description of $G$-Higgs bundles considered in \cite{thesis}:
\begin{proposition}\label{invrelation}
Let $G$ be a real form of a complex semi-simple Lie group $ G_\C$, whose real structure is $\sigma$. Then,  $G${\em -Higgs bundles} are given by the fixed points in $\mathcal{M}_{ G_\C}$ of the involution $\Theta_G$ acting by
\[\Theta_G: ~(P,\Phi)\mapsto (\theta(P),-\theta(\Phi)),\]
where $\theta=\rho\sigma$, for $\rho$ the compact real form of $ G_\C$.     
\end{proposition}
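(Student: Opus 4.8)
The plan is to first verify that $\Theta_G$ is a well-defined holomorphic involution of $\mathcal{M}_{G_\C}$, and then to prove the two inclusions ``$G$-Higgs bundles $\subseteq \mathrm{Fix}(\Theta_G)$'' and ``$\mathrm{Fix}(\Theta_G)\subseteq G$-Higgs bundles''. For the first point one uses that $\theta=\rho\sigma$ is a \emph{complex-linear} automorphism of $G_\C$, being the composite of the two antilinear involutions $\rho$ and $\sigma$; after choosing $\rho$ compatibly with $\sigma$ it commutes with $\rho$ and squares to the identity, so that its fixed subgroup is $H^\C$ and $\mathfrak{g}_\C=\mathfrak{h}^\C\oplus\mathfrak{m}^\C$ is its $(+1,-1)$-eigenspace decomposition. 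Then $\theta$ carries a holomorphic $G_\C$-bundle $P$ to a holomorphic $G_\C$-bundle $\theta(P)$ and $\mathrm{ad}P\otimes K$ to $\mathrm{ad}(\theta P)\otimes K$, so $\Theta_G(P,\Phi):=(\theta P,-\theta\Phi)$ is again a $G_\C$-Higgs bundle; since $\theta$ is a group automorphism it preserves (poly)stability and $S$-equivalence, and $\Theta_G^2=\mathrm{id}$ because $\theta^2=\mathrm{id}$ and negation is linear.

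For the easy inclusion, start from a polystable $G$-Higgs bundle in the sense of Definition~\ref{real}: a holomorphic principal $H^\C$-bundle $P_{H^\C}$ with $\Phi\in H^{0}(\Sigma,P_{H^\C}\times_{\mathrm{Ad}}\mathfrak{m}^\C\otimes K)$. Extend the structure group to get a $G_\C$-Higgs bundle $(P,\Phi)$ with $P:=P_{H^\C}\times_{H^\C}G_\C$. Because $H^\C=(G_\C)^{\theta}$, the automorphism $\theta$ induces a canonical isomorphism $\theta(P)\cong P$, and on the associated bundle it acts by $-1$ on the $\mathfrak{m}^\C$-component; hence $\theta\Phi=-\Phi$, i.e. $-\theta\Phi=\Phi$, so $\Theta_G(P,\Phi)\cong(P,\Phi)$. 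That the relevant polystability notions match is part of the Hitchin--Kobayashi correspondence for real forms \cite{GP09}.

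The substantive direction is the converse. Given a polystable $(P,\Phi)$ with $\Theta_G(P,\Phi)\cong(P,\Phi)$, invoke the Hitchin--Simpson theorem \cite{N2,simpson88}: there is a harmonic reduction $h$ of $P$ to the compact real form $G_c=(G_\C)^{\rho}$, unique up to automorphisms, for which the Chern connection $\nabla_A$ and $\Phi$ satisfy Hitchin's equations \eqref{2.1}--\eqref{hit2}, with $\rho(\nabla_A)=\nabla_A$ by construction. One checks that $\Theta_G$ intertwines Hitchin's equations: $\theta$ preserves $G_c$ (since $\theta$ commutes with $\rho$), $F_{\theta A}=\theta F_A$, and $[-\theta\Phi,(-\theta\Phi)^{*}]=\theta[\Phi,\Phi^{*}]$ as $\theta$ is a Lie algebra automorphism, so $\theta\cdot h$ is again a harmonic reduction for $(P,\Phi)$. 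By uniqueness, $h$ can be chosen $\Theta_G$-invariant; this is immediate in the stable case, and in the polystable case needs a short descent argument through the reductive automorphism group, which is precisely the delicate point. A $\theta$-invariant $h$ means exactly that $\nabla_A$ is $\theta$-invariant and $\Phi$ is $\theta$-anti-invariant; together with $\rho(\nabla_A)=\nabla_A$ and $\sigma=\rho\theta$ these are the conditions \eqref{invol1}--\eqref{invol2}, and the computation reproduced before Definition~\ref{real} then shows the flat connection $\nabla=\nabla_A+\Phi-\rho(\Phi)$ has holonomy in $G$. Equivalently, the $\theta$-invariance of $\nabla_A$ reduces $P$ holomorphically to $H^\C=(G_\C)^{\theta}$ while $\Phi$ becomes a section of the associated $\mathfrak{m}^\C$-bundle, recovering a $G$-Higgs bundle as in Definition~\ref{real}.

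An alternative route would work entirely on the Betti side via the non-abelian Hodge correspondence: one identifies the transform of $\Theta_G$ with the post-composition $[\varrho]\mapsto[\sigma\circ\varrho]$ on the character variety of $\pi_1(\Sigma)$ in $G_\C$, whose fixed points are the conjugacy classes of representations conjugable into $G$, again matching Definition~\ref{real} through \cite{GP09}; the effort is then shifted onto computing that transform (the interaction of $\Phi\mapsto-\Phi$ with Riemann--Hilbert, where the $\theta$ versus $\sigma$ bookkeeping reappears) and onto the same reducibility subtleties for non-stable points. Either way, the main obstacle is identical: upgrading the abstract fixed-point condition in the moduli space to a genuinely $\theta$-invariant solution of Hitchin's equations (or an invariant representative), which relies on the uniqueness in the Hitchin--Kobayashi correspondence together with a careful treatment of automorphisms of polystable objects.
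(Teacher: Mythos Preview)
Your proposal is correct and follows the same line of reasoning the paper relies on. The paper does not give a self-contained proof of the proposition; it is stated as a consequence of the computation in the ``Real Higgs bundles'' subsection leading to equations \eqref{invol1}--\eqref{invol2} (together with the references \cite{N5,thesis}), and that is precisely the mechanism you invoke: use the Hitchin--Simpson harmonic reduction so that $\rho(\nabla_A)=\nabla_A$, and then identify the fixed-point condition for $\Theta_G$ with $\theta$-invariance of $\nabla_A$ and $\theta$-anti-invariance of $\Phi$, which via $\sigma=\rho\theta$ forces the flat connection $\nabla=\nabla_A+\Phi-\rho(\Phi)$ to have holonomy in $G$.

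Your write-up is in fact more careful than the paper's sketch in two respects. First, you separate out and justify the ``easy'' inclusion via extension of structure group from $H^{\mathbb C}$ to $G_{\mathbb C}$, which the paper leaves implicit. Second, you flag the descent issue for strictly polystable points (upgrading an isomorphism $\Theta_G(P,\Phi)\cong(P,\Phi)$ in the moduli space to a genuinely $\theta$-invariant harmonic solution); the paper does not address this in the proof but acknowledges exactly this subtlety in the paragraph immediately following the proposition, noting that the map $\mathcal{M}_G\to\mathrm{Fix}(\Theta_G)$ need not be an embedding because conjugation is by the normalizer of $G$ in $G_{\mathbb C}$. So the statement should really be read, as you implicitly do, up to that caveat. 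Your alternative Betti-side route is not used in the paper but is a legitimate and standard variant.
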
 
 
  One should note that  
a fixed point of $\Theta_{G}$ in $\mathcal{M}_{G_\C}$ in Proposition \ref{invrelation} gives a representation of $\pi_1(\Sigma)$ into the real form $G$ up to the equivalence of conjugation by the normalizer of $G$ in $G_\C$. This may be bigger than the Lie group $G$ itself, and thus two distinct classes in $\mathcal{M}_{G}$ could be isomorphic in $\mathcal{M}_{G_\C}$ via a complex map. 
Hence, although there is a map from $\mathcal{M}_{G}$ to the fixed point subvarieties in $\mathcal{M}_{G_\C}$, this might not be an embedding.    
A description of this phenomena in the case of rank 2 Higgs bundles is given in \cite[Section 10]{mono}, where one can see how the $SL(2,\mathbb{R})$-Higgs bundles which have different topological invariants lie in the same connected component as $SL(2,\mathbb{C})$-Higgs bundles. 

The point of view of Proposition \ref{invrelation}, which is considered throughout \cite{thesis}, fits into a more global picture described in \cite{branes, real},  through which families of $(B,A,A)$, $(A,B,A)$ and $(A,A,B)$ branes in $\mathcal{M}_{ G_\C}$ appear as the fixed point sets involutions arising from a real form $\sigma$ on a complex Lie group and a real structure $f$ on a Riemann surface.  
 A real structure\footnote{Anti-involutions on $\Sigma$ are also found in the literature as an anti-conformal maps  (e.g. see \cite{acc2})} on a compact connected Riemann surface $\Sigma$ is an 
 anti-holomorphic involution $f : \Sigma \to \Sigma$.  
The classification of real structures on compact Riemann surfaces  is a classical result of Klein, who showed that all such involutions on $\Sigma$ may be characterised by two integer invariants $(n,a)$: the fixed point set of $f$ is known to be a disjoint union of  $n$ copies of the circle embedded in $\Sigma$; the complement of the fixed point set has one or two components, corresponding to  $a=1$ or $a=0$ respectively. 
A real structure $f$ on the Riemann surface $\Sigma$ induces  involutions on the moduli space of representations $\pi_1(\Sigma) \to G_\C$, of flat connections and  of $G_\C$-Higgs bundles on $\Sigma$.

\begin{itemize} 
\item  From the Cartan involution $\theta$ of  $G$ one obtains a $(B,A,A)$-brane fixed by  
\begin{equation}
i_1(\bar \partial_A, \Phi)=(\theta(\bar \partial_A),-\theta( \Phi)).\label{i11}
\end{equation}
\item  From a real structure  $f : \Sigma \to \Sigma$   one obtains an $(A,B,A)$-brane fixed by  
\begin{equation}
i_2(\bar \partial_A, \Phi)=(f^*(  \partial_A),f^*( \Phi^*  ))= (f^*(\rho(\bar \partial_A)), -f^*( \rho(\Phi) )).
\label{i2}
\end{equation}
\item  Lastly, by looking at $i_3 = i_1 \circ i_2$, one obtains an $(A,A,B)$-brane fixed by
\begin{equation}
i_3(\bar \partial_A, \Phi)=(f^* \sigma(\bar \partial_A),f^*\sigma( \Phi)).
\end{equation}
\end{itemize}

 In what follows we shall give an overview of the above branes in terms of their geometry and topology, and describe how they can be studied though the Hitchin fibration. Generalizations of the above methods from \cite{branes,real} have now been obtained for many other moduli spaces -- e.g. see \cite{emilio1,emilio2,vicky,poli} among others.

\subsubsection{The $(B,A,A)$-brane of $G$-Higgs bundles.}\label{secinvo}

As mentioned in the first Lecture, the moduli spaces $\mathcal{M}_{ G_\C}$ have a natural symplectic structure, which we  denoted by $\omega$. Moreover, following \cite{N2}, an involution $\Theta_{G}$ from Proposition \ref{invrelation}, which corresponds to $i_1$ in \eqref{i11}, 
 sends $\omega\mapsto -\omega$. Thus, at a smooth point, the fixed point set must be Lagrangian and thus the expected dimension of $\mathcal{M}_{G}$ is half the dimension of $\mathcal{M}_{ G_\C}$. These branes of real $G$-Higgs bundles can be seen  sitting inside $\mathcal{M}_{ G_\C}$  as fixed points of $\Theta_{G}$ in the preserved fibres over   the Hitchin base $\mathcal{A}_{ G_\C}$.

By considering Cartan's classification of classical Lie algebras, we can describe the different involutions $i_1$ that can be considered on the moduli spaces of $G_\C$-Higgs bundles. For this,   we denote by $I_{p,q},~ J_{n}$  and $K_{p,q}$ the matrices
 \begin{equation}I_{p,q}=\left(
\begin{array}
 {cc}
-I_{p}&0\\
0&I_{q}
\end{array}\right)
,~~
J_{n}=\left(
\begin{array}
 {cc}
0&I_{n}\\
-I_{n}&0
\end{array}
\right)
,~~  
K_{p,q}=\left(
\begin{array}
 {cccc}
-I_{p}&0&0&0\\
0&I_{q}&0&0\\
0&0&-I_{p}&0\\
0&0&0&I_{q}
\end{array}
\right).\label{MatricesIJK}\end{equation}
  
 Recalling that the cartan involution $\Theta_G$ of Proposition \ref{invrelation} is obtained  $\theta=\rho\sigma$, for $\rho$ the compact real form of $ G_\C$ and $\sigma$ an anti-holomorphic involution of  $G_\C$ fixing a real form $G$,  we consider the following compact and split real forms: 
  \begin{table}[H]
\begin{center}
\begin{tabular}{c|c|c|c|c|c}
  $\mathfrak{g}_{c}$   $ G_\C$ & Split form &Compact  form $\mathfrak{u}$& anti-involution $\rho$ fixing $\mathfrak{u}$ & dim $\mathfrak{u}$  \\
%\vspace{-0.2 cm} &&&\\
\hline
%\vspace{-0.2 cm}
%&&&\\
%\vspace{0.2 cm}
$\mathfrak{a}_{n}$  &$\mathfrak{sl}(n,\mathbb{R})$	&$\mathfrak{su}(n)$  &$\rho(X)=-\overline{X}^{t}$& $n(n+1)$    \\
%\vspace{0.2 cm}
$\mathfrak{b}_{n}$  &$\mathfrak{so}(n,n+1)$		&$\mathfrak{so}(2n+1)$ &$\rho(X)=\overline{X}$& $n(2n+1)$   \\
%\vspace{0.2 cm}
$\mathfrak{c}_{n}$   	&$\mathfrak{sp}(2n,\mathbb{R})$	&$\mathfrak{sp}(n)$    & $\rho(X)=J_{n}\overline{X}J_{n}^{-1}$&$n(2n+1)$   \\
%\vspace{0.2 cm}
$\mathfrak{d}_{n}$   	&$\mathfrak{so}(n,n)$	&$\mathfrak{so}(2n)$ &  $\rho(X)=\overline{X}$& $n(2n-1)$   
\label{compact table}
 \end{tabular}
\caption{Compact forms $\mathfrak{u}$ of classical Lie algebras}
\end{center}
\end{table}

Then,  the following table allows one to see the different possible values that $\theta$ may take in terms of the anti-holomorphic involutions on the complex Lie algebras.
 \begin{table}[H]
\begin{center}
 \begin{tabular}{c|c|c|c|c}
$\mathfrak{g}_{c}$  &  Real form  $G$&  $\sigma$ fixing $G$ & $\theta=\rho\sigma$  \\
%\vspace{-0.2 cm} &&&\\
\hline
%\vspace{-0.2 cm}
%&&&\\
%\vspace{0.2 cm}
$\mathfrak{a}_{n}$ 	&$SL(n,\mathbb{R})$ &$\sigma(X)=\overline{X}$ &$\theta(X) =  -X^{t} $ \\
  &$SU^{*}(2m)$&$\sigma(X)=J_{m}\overline{X}J_{m}^{-1}$&$\theta(X) = -J_{m}X^{t}J_{m}^{-1}$\\
  &$SU(p,q)$& $\sigma(X)=-I_{p,q}\overline{X}^{t}I_{p,q}$ &$\theta(X) = I_{p,q}X I_{p,q}$\\ \hline
%\vspace{0.2 cm}
  $\mathfrak{b}_{n}$     &$SO(p,q)$ &$ \sigma(X)=I_{p,q}\overline{X}I_{p,q}.$&$ \theta(X) = I_{p,q}X I_{p,q} $\\\hline
%\vspace{0.2 cm} 
 $\mathfrak{c}_{n}$ 	      &$Sp(2n,\mathbb{R})$&$\sigma(X)=\overline{X}$& $\theta(X) = J_{n}X J_{n}^{-1} $\\
  & $Sp(2p,2q)$ & $\sigma(X)=-K_{p,q}X^{*}K_{p,q}.$& $\theta(X) =K_{p,q} X K_{p,q}$\\
\hline
%\vspace{0.2 cm}
$\mathfrak{d}_{n}$ 	&$SO(p,q)$ &$ \sigma(X)=I_{p,q}\overline{X}I_{p,q}.$&$ \theta(X) = I_{p,q}X I_{p,q} $\\
 &$SO^{*}(2m)$  &$
 \sigma(X)=J_{m}\overline{X}J_{m}^{-1}.
$ &  $\theta(X) = J_{m}X J_{m}^{-1}$
\label{compact table2}
 \end{tabular}
\caption{Non-compact forms $G$ of classical Lie algebras $ G_\C$}\label{table1}\label{invotable}
\end{center}
\end{table}

\begin{example} One can see   $SO(p,p+q)$-Higgs bundles as fixed points of the involution  $\Theta_{SO(p,p+q)}:(E,\Phi)\mapsto (E, -\Phi)$
on the moduli space of $SO(2p+q,\mathbb{C})$ corresponding to vector bundles $E$ which have an automorphism $f$ conjugate to $I_{p,p+q}$ sending $\Phi$ to $-\Phi$ and whose $\pm 1$ eigenspaces have dimensions $p$ and $p+q$.
%
%Equivalently, an $SO(p,q)$ Higgs bundle is a pair $(E,\Phi)$ where $E=V_{p}\oplus V_{q}$ for $V_{p}$ and $V_{q}$ complex vector spaces of dimension $p$ and $q$ respectively, with orthogonal structures, and the Higgs field is a section in $H^{0}(\Sigma, (\text{Hom}(V_{q},V_{p})\oplus \text{Hom}(V_{p},V_{q}))\otimes K)$  given by
%\[\Phi=\left(\begin{array}{cc}
%              0&\beta\\
%\gamma&0
%             \end{array}
%\right)~\text{~ for~}~\gamma \equiv -\beta^\text{T},~ \text{and~} \beta^\text{T} \text{the orthogonal transpose of }\beta.\] 
\end{example}

In what follows we shall give a summary of different appearances of the $(B,A,A)$-brane of $G$-Higgs bundles as the fixed point set  of the induced action of $i_1$ from \eqref{i11} on the Hitchin fibration. In particular, these branes may intersect the regular fibres of the Hitchin fibration in very different ways:
\begin{itemize}
\item{ \bf Finite intersection with the regular fibres}. As mentioned before, from \cite[Theorem 4.12]{thesis}, the $(B,A,A)$-brane of $G$-Higgs bundles for $G$ a split real form of $G_\C$ intersects the fibres of the Hitchin fibration in points of order two. Since we have given an overview of this case in previous chapters, we shall not consider them any further here (see \cite{thesis,mono,ortlau,cayley,isog,mas10} for papers treating the spectral data for split real forms). 
% su(q, q), su(q, q + 1) and so(q, q + 2). quasi split but not split

\item { \bf Positive dimensional intersection with the regular fibres}. The $(B,A,A)$-brane of $\textrm{U}(p,p)$-Higgs bundles in $\mathcal{M}_{\textrm{GL}(2p,\C)}$, as seen in Exercise \textbf{II.(4)}, and of   $\textrm{SU}(p,p)$ in $\mathcal{M}_{\SL(2p,\C)}$ intersects the regular fibres in finite dimensional abelian varieties given by Jacobians or Prym varieties of quotient spectral curves, as shown in  \cite{thesis,umm}. Having mentioned the spectral data for these objects in previous chapters, we shall dedicate the reminder of this section to the third possibility described below.
\item { \bf Empty intersection with the regular fibres}. From the structure of real Higgs bundles, as shown in \cite{thesis}, the characteristic polynomial for non-split Higgs bundles for  real forms other than  $\textrm{U}(p,p)$ and $\textrm{SU}(p,p)$ will define reducible spectral curves and thus their $(B,A,A)$-brane will lie completely over the discriminant locus of the Hitchin fibration. These branes can be shown to intersect the non-regular fibres of the Hitchin fibration in different ways, and below we shall consider three different cases:
\begin{itemize}
\item\textbf{Abelian spectral data.} Examples of these come from quasi-split but not split real forms. For these groups, it is shown that the Cameral data for the groups $\textrm{SU}(p, p + 1)$ and $\textrm{SO}(p, p + 2)$ should be abelian in \cite{peon}, and the abelian spectral data for $\textrm{SO}(p, p + 2)$-Higgs bundles is described in \cite{cayley}.

\item\textbf{Non-Abelian spectral data}. Higgs bundles for certain groups will carry natural non-abelian spectral data describing the intersection of their $(B,A,A)$-brane with the non-regular fibres of the Hitchin fibration. 
To study this situation, one can consider Higgs bundles $(E,\Phi)$ whose generic eigenspace is $r$-dimensional, and $\textrm{det}(xI-\Phi)=p(x)^r$ for some polynomial $p(x)$:  since the Hitchin map is surjective, these cases certainly occur. For $r=2$, by extending the approach from \cite{thesis},  it is shown in \cite{nonabelian}  a \textit{``nonabelianization''} procedure appearing when considering Higgs bundles   corresponding  to flat connections on $\Sigma$ with holonomy in the real Lie groups $G=\textrm{Sp}(2p,2p), ~\SL(p,\mathbb{H})$ and $\textrm{SO}(2p,\mathbb{H})$ (the latter two also known as $\textrm{SU}^*(2p)$ and $\textrm{SO}^*(2p)$ respectively), spectral data is given by certain subspace of rank 2 Higgs bundles on a spectral curve. 

\item\textbf{Abelian and Non-Abelian spectral data}. Interestingly, the generic intersection of certain $(B,A,A)$-branes with the Hitchin fibration needs abelian and non-abelian data to be described. This is the case, in particular, of $\textrm{SO}(p+q,p)$-Higgs bundles and $\textrm{Sp}(2p + 2q, 2p)$-Higgs bundles described in \cite{cayley}. In this setting, the Higgs field has generically a kernel of dimension $2q$, and its characteristic polynomial has a degree $4p$ polynomial which  generically defines a smooth curve. By considering this smooth curve and abelian data on it, as well as an auxiliary spectral cover with non-abelian data on it, a geometric description   of the brane through  \textit{``Cayley and Langlands type correspondences''}   is obtained in \cite{cayley} and will be described below.  
\end{itemize}

\end{itemize}

 In what follows we shall give some examples of branes which lie completely over the discriminant locus of the Hitchin fibration, mostly following the work in   \cite{nonabelian} and \cite{cayley}. 
 
% 
%  In order to understand the associated spectral data, one notes that $SO^*(2m)$-Higgs bundles $(E,\Phi)$ are in fact $SU^{*}(2m)$\text{-Higgs bundle} with extra conditions. Hence, one may define a natural $m$ cover of the Riemann surface $\pi:S\rightarrow\Sigma$ by taking   \[\sqrt{\text{char}(\Phi)}=\eta^m+a_2\eta^{2m-2}+\ldots+a_m,\]
% and a rank 2 vector bundle $V$ on $S$ whose direct image on $\Sigma$ gives $E$. Since in this case the equation of the spectral curve  only has even coefficients, there is a natural involution $\sigma:\eta\rightarrow-\eta$ and one may consider the induced action of $\sigma$ on $V$ and on its determinant bundle. In particular, from \cite[Proposition 2]{nonabelian}   the vector bundle $V$ gives an $SO^*(2m)$-Higgs bundle if and only if it is preserved by the involution and the induced action on it satisfies some conditions:
%
%
% 
% 
%The fixed point set of $\Theta_{SO^*(2m)}$ in a smooth fibre   of the $SO(2m,\mathbb{C})$-Hitchin fibration is given by the moduli space of semi-stable rank 2 vector bundles $V$ on $S$ with fixed determinant  $\pi^*K^{2m-1}$, whose induced action by $\sigma$ on the determinant bundle is trivial.
%  
% 
%The fixed point set of $\Theta_{Sp(2p,2p)}$ in a smooth fibre   of the $Sp(4p,\mathbb{C})$-Hitchin fibration is given by the moduli space of semi-stable rank 2 vector bundles $V$ on $S$ with fixed determinant  $\pi^*K^{2p-1}$, whose induced action by $\sigma$ on $\Lambda^2V$  is $-1$.

\subsubsection{Nonabelianization.} Following the Definition \ref{real} consider the following:
\begin{itemize}
\item   $\SL(n,\mathbb{H})$-Higgs bundles  consisting  of a rank $2n$ symplectic vector bundle $(V,\omega)$ and a  Higgs field satisfying $\Phi=\Phi^T$ for the symplectic transpose. Using $\omega$  to identify $V$ and $V^*$, this means that $\Phi=\omega^{-1}\phi$ for $\phi\in H^0(\Sigma,\Lambda^2V\otimes K)$. 
\item  $\textrm{SO}(2p,\mathbb{H})$-Higgs bundle whose vector bundles $V=W\oplus W^*$ for a rank $2p$ vector bundle $W$. The inner product is defined by the natural pairing between $W$ and $W^*$. The Higgs field is of the form
$\Phi(w,\xi)=(\beta(\xi),\gamma(w))$ where $\beta:W^*\rightarrow W\otimes K$ and $\gamma:W\rightarrow W^*\otimes K$ are skew-symmetric. 
\item $Sp(2p,2p)$-Higgs bundles, which have vector bundle $W_1\oplus W_2$ for symplectic rank $2p$ vector bundles $(W_1,\omega_1),(W_2,\omega_2)$ and the Higgs field is of the form
$\Phi(v,w)=(\beta(w),\gamma(v))$ where $\beta:W_2\rightarrow W_1\otimes K$ and $\gamma:W_1\rightarrow W_2\otimes K$ with $\beta=-\gamma^T$, using the symplectic transpose. 
 \end{itemize}
 The spectral data for the above Higgs bundles is described in \cite{nonabelian} by means of rank 2 vector bundles, and can be summarised as follows by considering rank an $\SL(2n,\C)$-Higgs bundles $(V,\Phi)$ on a compact Riemann surface $\Sigma$ of genus $g\geq 2$, whose Higgs field has characteristic polynomial 
\begin{equation}\det(\Phi-\eta)=( \eta^n+a_2x^{n-2}+\dots+a_n)^2. \label{poly11}\end{equation}
It should be noted that Higgs bundles for the groups $\SL(n,\mathbb{H})$,  $\textrm{SO}(2p,\mathbb{H})$ and $Sp(2p,2p)$ lie within tis setting, by letting $n=2p$. By considering  $a_i\in H^{0}(\Sigma,K^{i})$ and $\eta$ the tautological section as before, the above polynomial defines   a generically smooth spectral curve 
\begin{equation}
S=\{\sqrt{\det(\Phi-\eta)}=0\}.\label{poly22}
\end{equation}

 \begin{theorem}[\cite{nonabelian}] 
Consider a rank an $SL(2n,\C)$-Higgs bundle $(E,\Phi)$ with characteristic polynomial as in \eqref{poly11}. Given  a rank $2$ vector bundle $E$ on the curve $S$ defined as in \eqref{poly22},  the  direct image of $\eta:E\rightarrow E\otimes \pi^*K$    defines a semi-stable $SL(n,\mathbb{H})$-Higgs bundle on $\Sigma$   if and only if  
  \begin{itemize}
  \item[\textbf{(i)}] 
  $\Lambda^2E\cong \pi^*K^{n-1}$ 
  \item[\textbf{(ii)}]
  $E$ is semi-stable.
\end{itemize}
When $n=2p$, the curve $S$ in \eqref{poly22} has a natural involution    $\sigma: \eta \mapsto -\eta$. In this case,  the  direct image of $\eta:E\rightarrow E\otimes \pi^*K$    defines a semi-stable $SO(2p,\mathbb{H})$-Higgs bundle on $\Sigma$ if and only if {\em \textbf{(i)}} and {\em \textbf{(ii)}} hold, as well as 
   \begin{itemize}
  \item[\textbf{(iii)}]
  $\sigma^*E\cong E$ where the induced action on  $\Lambda^2E=\pi^*K^{4p-1}$ is trivial.
  \end{itemize}
Finally,  the  direct image of $\eta:E\rightarrow E\otimes \pi^*K$    defines a semi-stable $Sp(2p,2p)$-Higgs bundle if and only if {\em \textbf{(i)}} and {\em \textbf{(ii)}} hold, as well as 
\begin{itemize}
  \item[{\bf(iv)}] $\sigma^*E\cong E$ where the induced action on  $\Lambda^2E=\pi^*K^{2m-1}$ is $-1$.
\end{itemize}
 \end{theorem}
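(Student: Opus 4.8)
\emph{Proof proposal.} The plan is to reduce the statement to a rank-two analogue of the Beauville--Narasimhan--Ramanan correspondence \cite{BNR} and then to read off the defining datum of each real form from the rank-two spectral data.

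First I would set up the underlying $\mathrm{GL}$-type correspondence: taking direct images gives a bijection between rank-two (torsion-free) sheaves $\CE$ on the spectral curve $\pi\colon S\to\Sigma$ of \eqref{poly22} and Higgs bundles $(V,\Phi)$ on $\Sigma$ of rank $2n$ with $\det(\Phi-\eta)=p(\eta)^2$, with inverse $\CE\mapsto(\pi_*\CE,\pi_*\eta)$. Since $\pi$ is affine and $\pi_*\CO_S=\CO_\Sigma\oplus\eta\,\CO_\Sigma\oplus\cdots\oplus\eta^{n-1}\CO_\Sigma$ is generated by $\eta$ over $\CO_\Sigma$, the functor $\pi_*$ is an equivalence onto $\CO_\Sigma$-modules equipped with an endomorphism annihilated by $p$; fibrewise over the smooth locus $V$ is the rank-two bundle of $2$-dimensional eigenspaces of $\Phi$, which is why a rank-two sheaf on $S$ is the right object. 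This is the rank-two version of \cite[Proposition 3.6]{BNR}. Along the way I would record the slope identity $\mu(\pi_*\CF)=\tfrac1n\mu_S(\CF)+(n-1)(1-g)$ for every coherent $\CF$ on $S$ (Grothendieck--Riemann--Roch, cf. \textbf{Problem 8.i.}); as $\Phi$-invariant subsheaves of $\pi_*\CE$ are exactly the $\pi_*$-images of subsheaves of $\CE$, this gives at once that $(V,\Phi)$ is semistable iff $\CE$ is semistable, which is condition \textbf{(ii)}.

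For the $\SL(n,\mathbb H)$ statement the extra structure is a nondegenerate alternating form on $V$ with $\Phi$ symplectically self-transpose. From the adjunction computation $\omega_S\cong\pi^*K^{n}$ behind \eqref{genusSclassical} one gets $\omega_{S/\Sigma}\cong\pi^*K^{n-1}$, and relative Serre duality for the finite flat map $\pi$ gives $(\pi_*\CE)^{\vee}\cong\pi_*(\CE^{\vee}\otimes\pi^*K^{n-1})$. A symplectic structure on $V$ compatible with $\Phi=\pi_*\eta$ is an isomorphism $V\xrightarrow{\sim}V^{\vee}$ commuting with multiplication by $\eta$, hence by the equivalence above it descends to an isomorphism $\CE\xrightarrow{\sim}\CE^{\vee}\otimes\pi^*K^{n-1}$ on $S$, and conversely such an isomorphism on $S$ yields a form on $V$ after applying $\pi_*$ and the trace $\pi_*\omega_{S/\Sigma}\to\CO_\Sigma$. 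The point is that the form on $\CE$ may be taken alternating --- a nondegenerate alternating pairing $\CE\otimes\CE\to\mathcal L$ on a rank-two bundle factors through $\Lambda^2\CE$ and forces $\mathcal L\cong\Lambda^2\CE$ --- so the symplectic condition on $V$ is equivalent to $\Lambda^2\CE\cong\pi^*K^{n-1}$, i.e. \textbf{(i)}; compatibility with $\Phi$ is automatic since multiplication by the function $\eta$ is self-adjoint for the trace pairing, and $\Lambda^{2n}V\cong\CO$ is automatic from the symplectic structure. This proves the first assertion, with \textbf{(i)}$+$\textbf{(ii)} necessary and sufficient.

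For $n=2p$ the polynomial $p(\eta)$ is even, so $S$ carries $\sigma\colon\eta\mapsto-\eta$, and $\SO(2p,\mathbb H)$ and $\Sp(2p,2p)$ are precisely the cases in which $\Phi$ is off-diagonal for a decomposition $V=W\oplus W^{*}$, resp. $V=W_1\oplus W_2$. On the spectral side, off-diagonality is equivalent to $\sigma$-invariance of $\CE$, $\sigma^*\CE\cong\CE$: pushing forward the two isotypic pieces of a chosen lift of $\sigma$ recovers the decomposition of $V$, and conversely. The sign with which this lift acts on $\Lambda^2\CE\cong\pi^*K^{n-1}$ (which has a canonical $\sigma$-action, being pulled back from $\Sigma$) distinguishes the two: the $+1$ sign makes $\beta,\gamma$ skew and gives $\SO(2p,\mathbb H)$, which is condition \textbf{(iii)}, while $-1$ gives the two-symplectic-summand structure of $\Sp(2p,2p)$, which is condition \textbf{(iv)}, semistability being handled exactly as before. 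The routine parts are the first two steps and the duality bookkeeping; the main obstacle is this last sign analysis --- tracking how a lift of $\sigma$ interacts with the alternating form and the fibrewise trace pairing, and checking that the two signs really match the quaternionic-orthogonal and the quaternionic-symplectic structures in the order claimed. A secondary technical point is to allow $S$ to be merely integral, replacing rank-two bundles by torsion-free rank-two sheaves and $K$-powers by the relative dualizing sheaf throughout.
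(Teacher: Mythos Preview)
The paper itself does not prove this theorem; it is quoted from \cite{nonabelian} without argument, so there is no proof in these notes to compare against. Your proposal is essentially the argument given in that reference: the rank-two analogue of the Beauville--Narasimhan--Ramanan correspondence, relative Serre duality for $\pi$ (using $\omega_{S/\Sigma}\cong\pi^*K^{n-1}$) to translate the symplectic form on $V$ into the condition $\Lambda^2 E\cong\pi^*K^{n-1}$, the equivalence of categories $\pi_*$ to match $\Phi$-invariant subsheaves with subsheaves of $E$ for semistability, and finally the sign of a lift of $\sigma$ on $\Lambda^2 E$ to distinguish $SO(2p,\mathbb{H})$ from $Sp(2p,2p)$. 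Your outline is correct and follows the same route as \cite{nonabelian}; the sign computation you single out as the main obstacle is exactly where the bulk of the work in that paper lies, and your identification of which sign gives which real form is the right one.
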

 
 Interestingly, as noted in \cite{thesis} and \cite{nonabelian},  when considering $Sp(2p,2p)$-Higgs bundle  one has an action of $\sigma$  by $-1$ on $\Lambda^2E$, and thus at a fixed point one has distinct $+1$ and $-1$ eigenspaces. Following \cite{AG} this defines a rank $2$ bundle on the quotient  curve $\pi':\bar S:=S/\sigma \rightarrow S$ with a parabolic structure at the fixed points defined by the flag given by the $-1$ eigenspace and the parabolic weight $1/2$, thus closely related to parabolic Higgs bundles as described in Chapter 1. 
The  direct image of $E$ on the quotient curve gives a rank 2 vector bundle which can be decomposed into line bundles $E^+\oplus E^-$ through the invariant and anti-invariant sections under the induced action of $\sigma$. rank $4p$ vector bundle can be recovered through the quotient curve by taking $\bar\pi(E^+)=W_1$ and $\bar\pi(E^+)=W_2$ for $\bar \pi$ the rank $2p$-projection $\bar \pi: \bar S\rightarrow \Sigma$.  Moreover, 
through the Lefschetz fixed point formula, the degrees of $W_1,W_2$ can be expressed in terms of those of $E^\pm$.

\subsubsection{Cayley and Langlands type correspondences}. Recall that $\textrm{SO}(p+q,p)$-Higgs bundles give a $(B,A,A)$-branes which for most values of $p,q\in \mathbb{N}$, require both abelian and non-abelian data to be described within the Hitchin fibration. From Definition \ref{real}, the characteristic polynomial of a generic  $\textrm{SO}(p+q,p)$-Higgs bundle $(V\oplus W, \Phi)$ can be written as 
\begin{equation}\det(\Phi-\eta)=\eta^q( \eta^{2p}+a_1x^{2p-2}+\dots+a_p), \label{poly33}\end{equation}
and defines a generically smooth spectral curve $\pi:S\rightarrow \Sigma$ given by
\begin{equation}S=\{0= \eta^{2p}+a_1x^{2p-2}+\dots+a_p\}, \label{poly44}\end{equation}
taking $a_i\in H^0(\Sigma,K^{2i})$ and $\eta$ the tautological section of $\pi^*K$.
As in other cases, since the polynomial defining $S$ in \eqref{poly44} only has even coefficients, there is a natural involution $\sigma:\eta\mapsto -\eta$ through which one can define the quotient curve $\bar \pi: \bar S:=S/\sigma \rightarrow \Sigma$.

From  \cite[Theorem 4.11]{cayley},  there is a one-to-one correspondence between $\textrm{SO}(p+q,p)$-Higgs bundles and the triple of spectral data $(L,M,\tau)$ given by:
 \begin{itemize}
 \item[(i)] A line line bundle $L\in \textrm{Jac}(\bar S)[2]$, space of torsion two line bundles of degree zero on $\bar S$;
 \item[(ii)] A rank $q$ equivariant orthogonal bundle $M$ of type $(q-1,1)$ on the 2-fold auxiliary   curve
 $
 C:=\{0=\eta^2-a_p\}\subset \textrm{Tot}(K^p),
$ 
 for   $a_p$ as in \eqref{poly44},   with a choice of orientation, and satisfying the following semistability condition:  all invariant isotropic subbundles $M'\subset M$ have  $\textrm{deg}(M')\le 0$;
 \item[(iii)] A choice of isometry $\tau$ over each zero $x$ of $a_1$, between the corresponding fibres of $W$ and those of the $-1$-eigenspace $M^-$ of $M$, giving the extension data needed between (i) and (ii).  This extension data may be identified with a collection $\{ \tau_x \}$ of unit vectors $\tau_x \in L_x$ for each ramification point $x$ of $C$.
\end{itemize} 
In the above description, the spectral curve $C$ has a sheet swapping involution $\sigma_C : C \to C$ defined by $\sigma_C(\eta)=-\eta$. Then, the equivariant orthogonal bundle $(M , Q_M , \tilde{\sigma}_C )$ is said to have type $(q-1,1)$ if the lift of the involution to $M$ has $\pm 1$ eigenspaces of dimensions  $(q-1,1)$. 

From the above, one can see the intersection of the $(B,A,A)$-brane of $\textrm{SO}(p+q,p)$-Higgs bundles with a fibre of the $\textrm{SO}(2p+q,\C)$ Hitchin fibration over a point $a$ is identified with a   covering of the product of two moduli spaces:
\begin{equation}\label{cay}
\mathcal{M}_{Cay}(a) \times  \mathcal{M}_{Lan}(a).
\end{equation}
The covering in question corresponds to  extension data $\tau$ in (iii) above. The Cayley and Langlands moduli spaces $\mathcal{M}_{Cay}(a)$ and $\mathcal{M}_{Lan}(a)$ are given as follows:
\begin{itemize} 
\item $\mathcal{M}_{Cay}(a)$ is a fiber of the Hitchin map for the moduli space of $K^2$-twisted $GL(p,\mathbb{R})$-Higgs bundles on $\Sigma$, which can be identified with $\textrm{Jac}(\bar S)[2]$ in (i) above.
\item $\mathcal{M}_{Lan}(a)$ is the moduli space of equivariant $SO(q)$-bundles $M$ on $C$  in (ii) above
\end{itemize}

As explained in \cite{cayley}, the reconstruction of the $\textrm{SO}(p+q,p)$-Higgs bundle $(E,\Phi)$   from the spectral data 
% the data in $\mathcal{M}_{Cay}(a) \times \mathcal{M}_{Lan}(a)$ 
involves taking an extension of the form
\[
0 \to V_0 \to E \to F \otimes K^{1/2} \to 0,
\]
where $F$ is the $Sp(2p,\mathbb{R})$-Higgs bundle associated to the Cayley moduli space, and $V_0$ is the invariant direct image under $\pi_C : C \to \Sigma$ of the equivariant orthogonal bundle $M$ on $C$. The \textit{extension data $\tau$} is given by the extension class defining the above extension, leading to the following conjecture. 

\begin{conjecture}The extra components in the moduli space of $SO(p+q,q)$-Higgs bundles conjectured to contain positive representations by  Guichard and Wienhard \cite[Conjecture 5.6]{anna2} are those connected components containing Higgs bundles whose spectral data $(L , M , \tau )$  has the form $( \overline{\pi}^* I , \mathcal{O}^{\oplus q} , \tau)$, where $I$ is a non-trivial $\mathbb{Z}_2$-line bundle on the Riemann surface $\Sigma$.  
\end{conjecture}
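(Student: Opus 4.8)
The plan is to reduce the conjecture to a comparison of two labellings of the connected components of $\mathcal{M}_{\mathrm{SO}(p+q,p)}$: on one side the components predicted by Guichard--Wienhard to carry $\Theta$-positive representations, and on the other the components that contain a Higgs bundle whose Cayley--Langlands spectral data $(L,M,\tau)$ of \cite[Theorem 4.11]{cayley} has the special shape $(\overline{\pi}^*I,\mathcal{O}^{\oplus q},\tau)$ with $I$ a non-trivial $\mathbb{Z}_2$-line bundle on $\Sigma$. One then has to show the two labellings agree.

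First I would set up the topological bookkeeping. Components of $\mathcal{M}_{\mathrm{SO}(p+q,p)}$ are separated by the Stiefel--Whitney classes $(sw_1,sw_2)$ of the two orthogonal summands of the maximal compact, except that on the locus $sw_1=0$ there is a further \emph{Cayley} invariant; the \emph{extra components} of the statement are exactly those that the $(sw_1,sw_2)$ data does \emph{not} separate and which the Cayley invariant does. Using that the Hitchin map $\mathcal{M}_{\mathrm{SO}(p+q,p)}\to\mathcal{A}_{\mathrm{SO}(2p+q,\C)}$ is proper, every component meets the locus over the smooth part of the ambient base, so the triple $(L,M,\tau)$ is defined on a dense subset of each component; and via the identifications recalled above ($\mathcal{M}_{Cay}(a)=\mathrm{Jac}(\bar S)[2]$, and $\mathcal{M}_{Lan}(a)$ the equivariant $SO(q)$-bundles on $C$) I can read the topological invariants off the spectral data.

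Next the ``only if'' direction. Starting from a Higgs bundle with data $(\overline{\pi}^*I,\mathcal{O}^{\oplus q},\tau)$ I would run the reconstruction via the extension $0\to V_0\to E\to F\otimes K^{1/2}\to 0$, where $V_0$ is the invariant direct image under $\pi_C:C\to\Sigma$ of $M$ and $F$ is the $\mathrm{Sp}(2p,\mathbb{R})$-Higgs bundle attached to $L$. Triviality of $M$ forces $V_0$ to be a sum of copies of a single line bundle pulled back from $\Sigma$; this pins $sw_1$ to $0$ and exhibits the Cayley invariant as the class of $I$, so the Higgs bundle sits in an extra component precisely when $I$ is non-trivial (if $I$ is trivial one lands in the component adjacent to the Hitchin section). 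For the ``if'' direction I would show each extra component is met by this locus: the family of such $(\overline{\pi}^*I,\mathcal{O}^{\oplus q},\tau)$, fibred over the smooth Hitchin base and over the admissible non-trivial classes $I\in\mathrm{Jac}(\Sigma)[2]\setminus\{0\}$ with the $\tau$'s forming a connected cover, breaks into one piece per admissible $I$; a dimension and invariant count matches these pieces with the full list of extra components, and a Morse-theoretic argument on the Hitchin function $\|\Phi\|^2$ (each component retracts onto its minimal submanifold, which in the extra components has exactly this shape) completes the identification. Comparing with \cite[Conjecture 5.6]{anna2}, equivalently with the characterisation of $\Theta$-positive $\mathrm{SO}(p+q,p)$-representations, one then checks that the components so produced are the positive ones.

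The hard part is the bridge between $\Theta$-positivity as a representation-theoretic condition and the spectral picture: one needs, independently of the spectral correspondence, a complete and correct count of the connected components of $\mathcal{M}_{\mathrm{SO}(p+q,p)}$ and of which of them carry positive representations, and then must show the special-spectral-data locus realises exactly that sublist --- neither inclusion is formal. Two subsidiary obstacles are the passage from the generic (smooth ambient fibre) spectral description to statements about entire components --- since the Hitchin-function minima live over the discriminant, not over the smooth locus --- and proving connectedness of the $\tau$-cover, so that the special loci are counted correctly rather than over- or under-counted.
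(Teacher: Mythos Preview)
The statement you are attempting to prove is presented in the paper as a \emph{conjecture}, not a theorem: the paper gives no proof, only the spectral reconstruction via the extension $0\to V_0\to E\to F\otimes K^{1/2}\to 0$ as motivation, and then immediately states the conjecture. There is therefore nothing in the paper to compare your proposal against.

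What you have written is a reasonable outline of how one might \emph{attack} the conjecture --- set up the component count via Stiefel--Whitney plus Cayley invariants, read those invariants off the spectral triple, and match against the Guichard--Wienhard list --- and you are honest that the essential difficulty is unproven: the bridge between $\Theta$-positivity and the spectral labelling, and the control of components via Morse minima that live over the discriminant rather than the smooth locus. But this is a research programme, not a proof; the steps you flag as ``hard'' are precisely the content of the conjecture, and nothing in your proposal resolves them. In particular, the claim that the Morse minima in the extra components have exactly the shape $(\overline{\pi}^*I,\mathcal{O}^{\oplus q},\tau)$ is itself a substantial assertion that would require the full component analysis of \cite{cayley} (or its successors) and is not something you have established.
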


\subsubsection{The $(A,B,A)$-brane from a real structure on the Riemann surface.}\label{secinvobis}
As mentioned before, given a real structure  $f : \Sigma \to \Sigma$  there is a natural  $(A,B,A)$-brane of Higgs bundles described in \cite{branes,real} appearing as the fixed point set of the induced involution   
\begin{equation}
i_2(\bar \partial_A, \Phi)=(f^*(  \partial_A),f^*( \Phi^*  ))= (f^*(\rho(\bar \partial_A)), -f^*( \rho(\Phi) )).
\end{equation}

These branes can be studied through the spectral data associated to $G_{\C}$-Higgs bundles introduced in \cite{N2}, and considering the $(A,B,A)$-brane as sitting inside the corresponding Hitchin fibration. By doing so,  new examples of real integrable systems are obtained in \cite{branes} over a real subspace  of the base of the  Hitchin fibration:

\begin{theorem}[\cite{branes} Theorem 17] \label{teolan}
If $(A,B,A)$-brane contains smooth points then the intersection of the brane with the Hitchin fibration  gives a Lagrangian fibration with singularities. The generic fibre is smooth and consists of a finite number of tori.
\end{theorem}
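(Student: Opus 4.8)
The plan is to exhibit the $(A,B,A)$-brane $\mathcal{B}:=\mathrm{Fix}(i_2)$ as a real form of the algebraically completely integrable system $h\colon\mathcal{M}_{G_\C}\to\mathcal{A}_{G_\C}$ and then to read off the structure of $\mathcal{B}\cap h^{-1}(a)$ from the classical description of the real points of an abelian variety. First I would show that the Hitchin map intertwines $i_2$ with an antiholomorphic involution of its base: the antiholomorphic involution $f\colon\Sigma\to\Sigma$ induces an involution $\hat{\imath}_2$ of $\mathcal{A}_{G_\C}=\bigoplus_i H^{0}(\Sigma,K^{d_i})$ by $s\mapsto\overline{f^{*}s}$ (pulling back a holomorphic $d_i$-differential by $f$ yields an antiholomorphic one, and conjugation returns a holomorphic section), and this is antiholomorphic for the complex structure $I$ underlying the base. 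Since the invariant polynomials $p_i$ are real with respect to the real form cut out by $\sigma$ and $\rho$, a direct computation from \eqref{i2} gives $h\circ i_2=\hat{\imath}_2\circ h$ (possibly twisted by the signs $(-1)^{d_i}$, which is harmless). Hence $h(\mathcal{B})$ lies in the totally real linear subspace $\mathcal{A}^{\R}:=\mathrm{Fix}(\hat{\imath}_2)$, with $\dim_{\R}\mathcal{A}^{\R}=\dim_{\C}\mathcal{A}_{G_\C}=\tfrac12\dim_{\C}\mathcal{M}_{G_\C}$.

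Next I would produce the symplectic form on the brane and the surjectivity onto the real base. Since $\mathcal{B}$ is a $B$-brane for $J$, its smooth locus $\mathcal{B}^{\mathrm{sm}}$ is a complex submanifold for $J$, so the K\"ahler form $\omega_J$ restricts there to a symplectic form $\omega_{\mathcal{B}}$; being also an $A$-brane for $\omega_I$ and $\omega_K$ forces $\dim_{\C}\mathcal{B}^{\mathrm{sm}}=\dim_{\C}\mathcal{A}_{G_\C}=:g$. The hypothesis that $\mathcal{B}$ contains smooth points guarantees $\mathcal{B}^{\mathrm{sm}}\neq\varnothing$; it is open and dense in $\mathcal{B}$, and at any point of $\mathcal{B}^{\mathrm{sm}}$ lying in a fibre of the expected dimension $g$ the restricted Hitchin map has rank $2g-g=g=\dim\mathcal{A}^{\R}$, hence is a submersion onto $\mathcal{A}^{\R}$. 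Thus $h(\mathcal{B})$ contains an open dense subset $\mathcal{A}^{\R}_{\mathrm{reg}}\subseteq\mathcal{A}^{\R}$, which we may take to lie over the locus of smooth spectral curves.

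Then I would analyse a generic fibre. For $a\in\mathcal{A}^{\R}_{\mathrm{reg}}$ the fibre $h^{-1}(a)$ is a smooth abelian variety and, by the integrable system structure recalled earlier, a complex Lagrangian for $\Omega_I=\omega_J+i\omega_K$; in particular both $\omega_J$ and $\omega_K$ vanish on it. Since $\hat{\imath}_2(a)=a$ and $i_2$ is antiholomorphic for $I$, it restricts to an antiholomorphic involution of the abelian variety $h^{-1}(a)$, and $\mathcal{B}\cap h^{-1}(a)$ is its fixed-point set --- the real points of $h^{-1}(a)$ as an abelian variety over $\R$. By the classical structure theorem, this set, when nonempty, is a disjoint union of a finite number (at most $2^{g}$) of real tori each of real dimension $g$; in particular it is smooth, of dimension exactly $\tfrac12\dim_{\R}\mathcal{B}$, and $\omega_{\mathcal{B}}=\omega_J|_{\mathcal{B}}$ vanishes on it because $\omega_J$ already vanishes on $h^{-1}(a)$. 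Hence each component is Lagrangian in $(\mathcal{B}^{\mathrm{sm}},\omega_{\mathcal{B}})$. Assembling these facts, $h|_{\mathcal{B}}\colon\mathcal{B}\to\mathcal{A}^{\R}$ is a Lagrangian fibration over $\mathcal{A}^{\R}_{\mathrm{reg}}$ whose generic fibre is a smooth finite disjoint union of tori, degenerating over the complement (the real discriminant together with the non-smooth locus of the brane) --- that is, a Lagrangian fibration with singularities.

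The hard part will be making the fibre analysis uniform: one must ensure that over $\mathcal{A}^{\R}_{\mathrm{reg}}$ the real locus $\mathcal{B}\cap h^{-1}(a)$ is genuinely nonempty (this is exactly what the hypothesis ``the brane contains smooth points'' buys, via the submersion argument above) and that the number and isomorphism type of the toral components remain constant on connected components of $\mathcal{A}^{\R}_{\mathrm{reg}}$, which requires a continuity/monodromy argument for the resulting family of real tori analogous to the Gauss--Manin analysis of the Hitchin fibration recalled earlier. A secondary technical step, needed to make ``antiholomorphic involution of $h^{-1}(a)$'' concrete and to count components, is to describe $i_2$ on spectral data: for $G_\C\subset GL(n,\C)$ it should be pullback by the real structure that $f$ induces on the spectral curve $S$, composed with conjugation on the spectral line bundle, so that the count of toral components is governed by a group of the form $H^{1}(\,\cdot\,,\Z_2)$, precisely as in the split-real-form computation recalled earlier.
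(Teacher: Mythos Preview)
The paper does not actually prove this theorem: it is quoted verbatim from \cite{branes} (as Theorem 17 there) and no argument is supplied in these lecture notes, which immediately pass to the more refined component counts for $GL(2,\C)$ and $SL(2,\C)$ in the subsequent theorems. So there is no proof in the paper to compare against.

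That said, your outline is the correct one and is essentially the argument given in the original reference. The three ingredients you identify are exactly right: (i) $i_2$ covers an antiholomorphic involution $\hat{\imath}_2$ of the Hitchin base, so the brane maps to the real subspace $\mathcal{A}^{\R}$; (ii) over a point of $\mathcal{A}^{\R}$ with smooth spectral curve, $i_2$ restricts to an antiholomorphic involution of the abelian variety $h^{-1}(a)$, whose fixed locus is classically a finite disjoint union of real $g$-tori; (iii) the Lagrangian property for $\omega_J$ is automatic because $\omega_J$ already vanishes on each Hitchin fibre. Your dimension count and the use of the smooth-points hypothesis to get nonemptiness and submersion onto an open set of $\mathcal{A}^{\R}$ are the right way to package this.

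Two small sharpenings you might make. First, rather than saying ``possibly twisted by $(-1)^{d_i}$, which is harmless'', it is cleaner to note that $f^*$ on $H^0(\Sigma,K^{d_i})$ is $\C$-antilinear, so $\hat{\imath}_2$ is a genuine real structure on the vector space $\mathcal{A}_{G_\C}$ and no sign ambiguity needs to be carried along. Second, your remark that $i_2$ on spectral data is pullback by an induced real structure on the spectral curve combined with conjugation of the line bundle is exactly how \cite{branes} makes the count of toral components explicit in the $GL(n,\C)$ and $SL(2,\C)$ cases quoted immediately after this theorem in the notes; you could cite that description to make the last paragraph of your proposal more concrete.
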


 In the case of $G_\C=\textrm{GL}(n,\mathbb{C})$, the number of components that each fibre in Theorem \ref{teolan} has can be expressed in terms of the number of components of the fixed point set of an induced involution on the spectral curves (\cite[Proposition 32]{branes}). Moreover in the rank $2$ case the calculations lead to an exact formula in terms of the associated quadratic differential:
  
 \begin{theorem}[\cite{branes}] 
Let $q \in H^0(\Sigma , K^2)$ be a quadratic differential which has only simple zeros and such that $f^*(\overline{q}) = q$, and consider the spectral curve
\[S=\{0=\eta^2 -q\}\subset \textrm{Tot}(K).\] Let $n_+$ be the number of fixed components of $f$ on which $q$ is non-vanishing and positive, and $u$ the number of zeros of $q$ which are fixed by $f$. Then the  number of components of the induced involution on the generic fibre $Jac(S)$ of the $GL(2,\C)$ Hitchin fibration, giving the components of the $(A,B,A)$-brane, is $2^d$, where 
\begin{equation}
d = \left\{
\begin{array}{ccc}2n_+ + \frac{u}{2} - 1&\textrm{if}& 2n_+ + \frac{u}{2} > 0\\
 1& & \textrm{otherwise}\end{array}.\right.\end{equation}
 \label{teo111}
\end{theorem}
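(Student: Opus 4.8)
The plan is to realize the $(A,B,A)$-brane inside a generic Hitchin fibre $\mathrm{Jac}(S)$ as the fixed-point set of an explicit anti-holomorphic involution on the spectral curve, and then to count connected components of the fixed locus of the induced involution on the Jacobian. Since $q$ has simple zeros and $f^*(\bar q)=q$, the spectral curve $S=\{\eta^2-q=0\}$ carries a natural anti-holomorphic involution $\tilde f$ covering $f$, given in the tautological coordinate by $\eta\mapsto \overline{\eta}$ (combined with $f$ on the base); I would first check this is a well-defined real structure on the smooth curve $S$ and compute its fixed locus $S^{\tilde f}$ using Klein's invariants $(n,a)$ for real curves. The involution $i_2$ on $\mathcal M_{GL(2,\C)}$ restricts on the fibre over $q$ to the map $L\mapsto \overline{\tilde f^* L}$ (push-pull of the spectral line bundle), whose fixed points are exactly the real points of $\mathrm{Jac}(S)$ with respect to $\tilde f$. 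So the theorem reduces to the classical problem: how many connected components does the real locus of $\mathrm{Jac}(S)$ have, in terms of the topology of $(S,\tilde f)$?

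Next I would invoke the standard result (going back to Comessatti, and used in this exact form in \cite{branes}) that for a real curve $(S,\tilde f)$ the real Jacobian $\mathrm{Jac}(S)(\R)$ has $2^{\,\#\{\text{ovals}\}-1}$ connected components when the real locus is non-empty and dividing-type corrections apply otherwise, i.e. the count is governed by the number of connected components (ovals) of the fixed curve $S^{\tilde f}$ and whether $S\setminus S^{\tilde f}$ is connected. The remaining, and genuinely computational, task is to determine the number of ovals of $\tilde f$ on $S$ from the data $(n_+, u)$ on the base. Here one analyzes the $2:1$ cover $\pi:S\to\Sigma$ fibrewise over each fixed oval of $f$: over a fixed circle where $q>0$ the equation $\eta^2=q$ has two real solutions, so the circle lifts to two ovals in $S$; where $q<0$ there are no real lifts; and over a fixed circle where $q$ vanishes one must track the behaviour near the fixed zeros of $q$ (there are $u$ of them), where two sheets of $S$ come together and the local picture of $S^{\tilde f}$ is that of $\eta^2=\pm t$ — contributing roughly $u/2$ extra ovals after the branch points are resolved into the real curve. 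Assembling: the fixed real locus of $S$ has about $2n_+ + u/2$ ovals, giving $2^{\,2n_+ + u/2 - 1}$ components for the brane, with the degenerate case $2n_+ + u/2 = 0$ (empty real locus, or the compensating dividing case) yielding a single component — matching the stated formula for $d$.

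The main obstacle I anticipate is the careful local analysis at the $u$ fixed zeros of $q$: one must understand precisely how an oval of $f$ passing through a simple zero of $q$ lifts to $S^{\tilde f}$, how the two branches of the nodal-looking real picture reconnect across the branch point, and how to treat a fixed zero at which the oval of $q$ changes sign. Relatedly, one has to pin down the global combinatorics — whether each fixed circle of $f$ on which $q$ does not vanish contributes lifts that are separate ovals or get joined through branch points — and to verify the non-vanishing/dividing dichotomy that distinguishes the generic formula from the exceptional value $d=1$. This is exactly the content of \cite[Proposition 32]{branes} specialized to rank $2$, so the strategy is to follow that argument, being attentive to the signs and the parity hidden in $u/2$ (note $u$ must be even, since fixed zeros of a real quadratic differential with $f^*\bar q = q$ come in a constrained pattern along the ovals). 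Once the oval count is established, the component count for the brane is immediate from the Comessatti-type theorem, and no further delicate analysis is needed.
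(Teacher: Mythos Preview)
The paper itself gives no proof of this statement: it is a survey/lecture-notes result quoted verbatim from \cite{branes}, with no argument supplied beyond the surrounding discussion and the hint in Problem~15 that the lifted involution on the spectral curve is $\tilde f(\eta,z)=(f^*(\overline\eta),f(z))$. Your outline---lift $f$ to a real structure $\tilde f$ on $S$, identify the brane in the fibre with the real locus of $\mathrm{Jac}(S)$ under $L\mapsto \overline{\tilde f^*L}$, invoke the Comessatti-type count $2^{\#\text{ovals}-1}$, and then compute the ovals of $(S,\tilde f)$ from $(n_+,u)$---is exactly the strategy of the cited reference (and is what \cite[Proposition~32]{branes}, which you correctly flag, sets up in general rank). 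So there is nothing in these notes to compare against beyond that, and your plan matches the intended argument; the only genuine work, as you note, is the bookkeeping at the $u$ fixed branch points and on the mixed-sign ovals, which is where the $u/2$ term and the exceptional case arise.
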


 Moreover,   for $\textrm{SL}(2,\mathbb{C})$ can be specified further when considering the Prym variety $\textrm{Prym}(S,\Sigma)$, as seen in \cite[Theorem 38]{branes}.

\begin{theorem}[\cite{branes}]
Considering the setting of Theorem \ref{teo111} and suppose that at least one branch point of $S$ is fixed by $f$. Then the fixed point set of the action of $\tilde{f}$ on   $\textrm{Prym}(S,\Sigma)$ has \[2^{n_0 + \frac{u}{2} - 1}\] connected components, where $n_0$ is the number of fixed components of $f : \Sigma \to \Sigma$ which do not contain branch points, and $u$ is the number of branch points which are fixed by $f$.
\end{theorem}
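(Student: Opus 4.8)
The plan is to regard $\mathrm{Prym}(S,\Sigma)$ as a complex torus carrying an anti-holomorphic involution and to count the connected components of its real locus by Galois cohomology of the lattice, using the description of the real topology of $S$ that already underlies Theorem~\ref{teo111}.

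First I would lift $f$ to the spectral curve. Because $f^{*}(\overline{q})=q$, the anti-holomorphic self-map of $\mathrm{Tot}(K)$ built from $df$ and fibrewise complex conjugation preserves the curve $\{\eta^{2}=q\}$ and descends to an anti-holomorphic involution $\tilde f\colon S\to S$ covering $f$ and commuting with $\sigma\colon\eta\mapsto-\eta$; of the two possible lifts $\tilde f$ and $\sigma\tilde f$, the hypothesis that $S$ has an $f$-fixed branch point is what selects the normalisation for which the stated count is valid (and rules out the degenerate situation, as the constraint $2n_{+}+\tfrac{u}{2}>0$ does in Theorem~\ref{teo111}). Since $\tilde f$ normalises $\langle\sigma\rangle$, the induced anti-holomorphic involution of $\mathrm{Jac}(S)$, namely $M\mapsto\overline{\tilde f^{*}M}$, carries $\mathrm{Prym}(S,\Sigma)=\{M:\sigma^{*}M\cong M^{-1}\}$ to itself; keep the name $\tilde f$ for the restriction. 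Writing $\mathrm{Prym}(S,\Sigma)=V/\Lambda$ with $\Lambda$ the Prym lattice (the $(-1)$-eigenlattice of $\sigma_{*}$ on $H_{1}(S,\mathbb{Z})$, i.e.\ the lattice appearing in the split-real-form discussion earlier in these notes), the trivial bundle lies in $\mathrm{Prym}(S,\Sigma)^{\tilde f}$, so this real locus is non-empty, hence a disjoint union of real tori, and the standard description of the real points of a real torus gives
\[
\pi_{0}\!\left(\mathrm{Prym}(S,\Sigma)^{\tilde f}\right)\;\cong\;\frac{\ker\!\left(1+\tilde f_{*}\colon\Lambda\to\Lambda\right)}{\operatorname{im}\!\left(1-\tilde f_{*}\colon\Lambda\to\Lambda\right)},
\]
a finite elementary abelian $2$-group. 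It remains to show this group has order $2^{\,n_{0}+u/2-1}$.

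Next I would compute this group. One efficient route uses the exact sequence $0\to\mathrm{Prym}(S,\Sigma)\to\mathrm{Jac}(S)\xrightarrow{\mathrm{Nm}}\mathrm{Jac}(\Sigma)\to0$ coming from the norm map of the ramified double cover: take $\mathbb{Z}/2$-cohomology for the compatible pair $(\tilde f,f)$ and extract $\pi_{0}(\mathrm{Prym}(S,\Sigma)^{\tilde f})$ from the known inputs $\pi_{0}(\mathrm{Jac}(S)^{\tilde f})=(\mathbb{Z}/2)^{\,2n_{+}+u/2-1}$ (Theorem~\ref{teo111}), $\pi_{0}(\mathrm{Jac}(\Sigma)^{f})=(\mathbb{Z}/2)^{\,n-1}$ (the classical count for the real Jacobian of a real curve with $n$ ovals, $n$ being Klein's invariant of $f$), and the behaviour of the connecting homomorphism; surjectivity of $\mathrm{Nm}$ on connected components, which again rests on the existence of an $f$-fixed branch point, forces the relevant $H^{1}$-terms to contribute in a controlled way. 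The combinatorial heart of the argument is then to match exponents by tracking how the $f$-fixed circles of the three types --- $q$ everywhere positive, $q$ everywhere negative, and $q$ changing sign (hence carrying branch points) --- together with the $u$ fixed branch points, distribute among the three $\pi_{0}$-counts, which is what converts $2n_{+}+\tfrac{u}{2}-1$ into $n_{0}+\tfrac{u}{2}-1$. Equivalently one can bypass the exact sequence and build a symplectic basis of $H_{1}(S,\mathbb{Z})$ simultaneously adapted to $\sigma_{*}$, to $\tilde f_{*}$, and to the decomposition of the real locus of $S$ into ovals, putting $\tilde f_{*}|_{\Lambda}$ in a block normal form and evaluating $\ker(1+\tilde f_{*})/\operatorname{im}(1-\tilde f_{*})$ directly.

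The main obstacle is precisely that last step: a careful analysis of the real topology of the double cover $S\to\Sigma$ near the $f$-fixed circles of $\Sigma$ --- which lift to $\tilde f$-fixed ovals of $S$ in a way governed by the sign of $q$ along them --- and near the $u$ fixed branch points, where pairs of lifted ovals are glued; together with the verification that the connecting maps in the $\mathbb{Z}/2$-cohomology sequence vanish as claimed. This is where the branch-point hypothesis is indispensable and where the final exponent, in particular the ``$-1$'', must be checked against the smallest cases. Everything preceding it --- the construction and normalisation of $\tilde f$, the identification of $\Lambda$, and the reduction to Galois cohomology --- is formal.
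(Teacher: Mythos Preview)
The paper does not supply a proof of this theorem: it is stated as Theorem~38 of the cited reference \cite{branes} and simply quoted, so there is no argument in these notes to compare your proposal against.

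On its own terms, your outline is the right architecture. The reduction of the component count of a non-empty real abelian variety to the Tate cohomology group $\ker(1+\tilde f_{*})/\operatorname{im}(1-\tilde f_{*})$ on the lattice is standard and correct, and using the norm exact sequence for the ramified double cover $S\to\Sigma$ to relate the Prym lattice to the two Jacobian lattices is a legitimate strategy. You are also right that the hypothesis of a fixed branch point is what pins down the lift $\tilde f$ (versus $\sigma\tilde f$) and guarantees that the real locus is non-empty.

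Where your write-up falls short of a proof is exactly where you flag it: the passage from the Jacobian exponent $2n_{+}+\tfrac{u}{2}-1$ of Theorem~\ref{teo111} to the Prym exponent $n_{0}+\tfrac{u}{2}-1$ is asserted but not carried out. This step is not bookkeeping; it requires an honest analysis of how each $f$-fixed oval of $\Sigma$ lifts to $S$ under $\tilde f$ (positive circles lift to two ovals, negative circles lift to none for $\tilde f$ but to one for $\sigma\tilde f$, and circles with branch points lift to $u_i/2$ ovals joined at the ramification locus), together with control of the connecting maps in the $\mathbb{Z}/2$-cohomology long exact sequence. In particular, one must explain why the negative circles, which are invisible to $m_S$ and hence to Theorem~\ref{teo111}, nevertheless contribute to $n_0$ on the Prym side; this comes from the interaction of $\sigma_{*}$ and $\tilde f_{*}$ on the anti-invariant lattice and is not immediate from the Jacobian count alone. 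Until that computation is written out and the ``$-1$'' verified, what you have is a sound plan rather than a proof.
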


The study of these $(A,B,A)$-branes allows one to consider the following problem: given a 3-manifold $M$ with boundary $\Sigma$, by restricting representations $\pi_1(M)$ to the boundary gives a natural map \[\textrm{Rep}(\pi_1(M), G_\C)\rightarrow\textrm{Rep}(\pi_1(\Sigma), G_\C),\] and one can show that representations that extend from the boundary to the three manifold always define an $(A,B,A)$-brane inside $\MGC$. 
Whilst a general description of these $(A,B,A)$-branes appearing though 3-manifolds is not available, a first  step towards understanding them is to consider the $(A,B,A)$-branes constructed in \cite{real} as part of a triple through \eqref{i2}, fixed by the involution $i_2$ in \eqref{i2}. Indeed, consider the   the 3-manifolds  \begin{equation}M= \Sigma \times [-1,1]/ (x,t)\mapsto (f(x),-t),\end{equation}  a form which any handle body can be put in,  one can see that the quotient $M$ is a $3$-manifold with boundary $\partial M = \Sigma$.  
 
\begin{theorem}[\cite{real}] \textit{Let $(E, \Phi)$ be a fixed point of $i_2$ with simple holonomy. Then the associated connection extends over $M$ as a flat connection iff the class $[E] \in \tilde{K}^0_{\mathbb{Z}_2}(\Sigma)$ in the reduced equivariant $K$-theory is trivial.}
\end{theorem}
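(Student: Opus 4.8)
The plan is to convert the extension question into the topology of $M$ and then into $\mathbb{Z}_2$-equivariant $K$-theory, using the $i_2$-invariance to manufacture an equivariant structure and the simplicity of the holonomy to make all auxiliary choices canonical. First I would pin down the homotopy type of $M$. The relation $(x,t)\sim(f(x),-t)$ identifies $(x,t)$ with $(f(x),-t)$ for all $t$, so $M$ is the mapping cylinder of the quotient map $p\colon\Sigma\to\Sigma/f$ (concretely, $\Sigma\times[0,1]$ with $\Sigma\times\{0\}$ folded onto itself by $f$); hence $M$ deformation retracts onto $\Sigma/f$, $\partial M\cong\Sigma$, and the inclusion $\partial M\hookrightarrow M$ is homotopic to $p$. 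In particular $\pi_1(M)\cong\pi_1(\Sigma/f)$, and in the handlebody case (where $f$ has fixed circles and $\Sigma/f$ is a planar surface) the induced map $\pi_1(\Sigma)\to\pi_1(M)$ is surjective. Consequently a flat $G_\C$-connection on $\Sigma$ with holonomy $\rho$ extends over $M$ if and only if $\rho$ factors through $p_*$; equivalently, if and only if the associated flat bundle $(E,\nabla)$ descends along $p$, i.e. admits an $f$-equivariant flat structure by a \emph{linear} involutive automorphism that is trivial on the fibres over $\mathrm{Fix}(f)$.

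Next I would extract such a lift from the $i_2$-fixed-point data. Being fixed by $i_2$ in \eqref{i2} means exactly that $f^*\overline{(E,\nabla)}\cong(E,\nabla)$, so there is an \emph{anti}-linear flat bundle isomorphism $\widetilde f\colon E\to E$ covering $f$. Here simple holonomy enters decisively: it forces $\widetilde f$ to be unique up to a nonzero scalar, so $\widetilde f^{\,2}$ is a central flat automorphism, hence a scalar; after rescaling $\widetilde f$ by a unit one gets $\widetilde f^{\,2}=\pm\mathrm{id}$, and the resulting sign $\epsilon\in\{\pm1\}$ is a well-defined invariant of the $i_2$-fixed point. If $\epsilon=-1$ the pair $(E,\widetilde f)$ is a quaternionic flat bundle: a hypothetical linear involutive flat lift $g$ of $f$ would satisfy $g\circ\widetilde f=\mu\,\widetilde f\circ g$ with $(g\widetilde f)^2=-|\mu|^2\widetilde f^{\,2}$, and chasing this against $g^2=\mathrm{id}$ and $\widetilde f^{\,2}=-\mathrm{id}$ produces a contradiction, so the connection cannot extend; one checks this is precisely the case in which the class of $E$ is nontrivial in $\tilde{K}^0_{\mathbb{Z}_2}(\Sigma)$ (it represents the quaternionic rather than the real type). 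If $\epsilon=+1$, then $(E,\widetilde f)$ is a Real bundle in Atiyah's sense and defines a class $[E]\in\tilde{K}^0_{\mathbb{Z}_2}(\Sigma)$, the reduced $K$-theory of $\Sigma$ with the involution $f$.

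I would then show that, for $\epsilon=+1$, the bundle $E$ extends over $M\simeq\Sigma/f$ if and only if $[E]=0$. Using the retraction one has $K^0(M)\cong K^0(\Sigma/f)$, and because $p\colon\Sigma\to\Sigma/f$ is a fold map, equivariant bundles on $\Sigma$ that are ``even'' on the fibres over $\mathrm{Fix}(f)$ are exactly the ones pulled back from $\Sigma/f$; feeding this into the restriction sequence for the pair $(M,\partial M)$ identifies the image of $K^0(M)\to K^0(\Sigma)$ with the subgroup coming from $\tilde{K}^0_{\mathbb{Z}_2}(\Sigma)$ via the forgetful map, and in the handlebody case $\Sigma/f$ has vanishing reduced $K$-theory, so ``extends'' simply means ``trivial class''. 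To upgrade the topological extension to a flat one, I would use that $\pi_1(\Sigma)\twoheadrightarrow\pi_1(M)$ is onto: any extension of $\rho$ over $M$, if it exists, is unique, so it is enough to turn equivariant triviality of $(E,\widetilde f)$ into a genuine linear flat involutive lift $g$ of $f$ with $g^2=\mathrm{id}$ and $g$ trivial on the fixed fibres; simplicity of the holonomy then guarantees $g$ intertwines parallel transport with the scalar already fixed, so $(E,\nabla)$ descends to a flat bundle on $M$ restricting to $(E,\nabla)$ on $\partial M=\Sigma$. The converse is immediate: an extension over $M$ restricts on the fixed circles of $f$ to the required ``even'' equivariant structure, forcing $[E]=0$.

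The main obstacle is the middle step: nailing down precisely which equivariant $K$-theory group is in play — reconciling the anti-linear lift $\widetilde f$ and its sign $\epsilon$ with the stated group $\tilde{K}^0_{\mathbb{Z}_2}(\Sigma)$ — and, above all, proving that vanishing of the reduced class is not merely necessary but \emph{sufficient} for $E$, and hence for the connection, to extend. This requires careful bookkeeping at $\mathrm{Fix}(f)$, where the covering $\Sigma\to\Sigma/f$ degenerates and the equivariant bundle can acquire nontrivial fibre representations; everything else is standard covering-space and $K$-theory manipulation, or a direct consequence of the simplicity hypothesis.
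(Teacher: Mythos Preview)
The lecture notes do not prove this theorem: it is quoted verbatim from \cite{real} with no argument supplied, and the text moves directly from the statement to the problem set. So there is no ``paper's own proof'' to compare your proposal against.

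On the substance of your sketch: the broad architecture is the right one and matches what is actually carried out in \cite{real} --- identify $M$ homotopically with $\Sigma/f$ via the obvious retraction, translate extendability of the flat connection into descent along $p:\Sigma\to\Sigma/f$, use the $i_2$-fixed-point condition to produce an antilinear lift $\widetilde f$ covering $f$, and invoke simplicity to normalise $\widetilde f^{\,2}=\pm 1$. Your observation that the $+1$ case gives a Real bundle in Atiyah's sense, and that this is what feeds into $\tilde K^0_{\mathbb Z_2}(\Sigma)$, is also correct.

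That said, two places in your write-up are loose. First, the quaternionic ($\epsilon=-1$) obstruction argument as written does not actually produce a contradiction: you posit a linear flat involution $g$ and then try to derive an inconsistency from $g\widetilde f=\mu\,\widetilde f g$, but no such $g$ is given to you a priori --- what you need to show is that $\epsilon=-1$ forces the equivariant class to be nonzero, not that it obstructs some hypothetical $g$. Second, and you flag this yourself, the passage from ``$[E]=0$ in $\tilde K^0_{\mathbb Z_2}(\Sigma)$'' to ``$(E,\nabla)$ descends flatly to $\Sigma/f$'' is the heart of the matter and is only gestured at. Vanishing of the reduced equivariant class tells you the Real bundle is stably equivariantly trivial; getting from there to an honest linear involutive flat lift that is the identity on the fixed fibres requires an argument (in \cite{real} this goes through the identification of the relevant equivariant $K$-group with $KO$ of the quotient and an explicit analysis at the fixed circles). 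Without that step the sufficiency direction is not established.

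\bigskip
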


\subsection{Problem set III.}
%\addcontentsline{toc}{section}{Problem set III}
This problem set is meant to make the  reader be acquainted with actions on Riemann surfaces and on groups, and their induced actions on the moduli space of Higgs bundles. They provide a mixture of problems from the lectures as well as some additional ones.  

\begin{problem} textbf{Topological invariants}. The topological invariants associated to $(W_1\oplus W_2, \Phi)$ in the $(B,A,A)$-brane of $U(p,q)$-Higgs bundles are the degrees $\text{deg}(W_1)$ and   $\text{deg}(W_2)$. In the case of $p=q$, from \cite{umm} the invariants  can be seen in terms of the degree of the line bundle $L$ on $S$ and the number of ramification points of $S$ over which the linear action of $\sigma$ on the fibre of $L$ is $-1$. These invariants satisfy several properties we shall see here:
\begin{itemize}
 \item[(i)] Use the Lefschetz fixed point formula in \cite[Theorem 4.12]{at}  to see that the parity of the degree of $L$ and the number of points over which $\sigma$ acts as $-1$ needs to be the same. \label{exumm}

\item[(ii)] Following \cite{brad}, the Toledo invariant $\tau(\text{deg}(W_1),\text{deg}(W_2))$ associated to $U(p,p)$-Higgs bundles is defined as $\tau(\text{deg}(W_1),\text{deg}(W_2)):=\text{deg}(W_1)-\text{deg}(W_2)$. Use Exercise \ref{exumm} to express the invariant in terms of fixed points of $\sigma$ and obtain natural bounds.

\item[(iii)] \textbf{((*))} In the case of $SU(p,p)$-Higgs bundles with maximal Toledo  Toledo invariant    the spectral data is given by  a covering of $\text{Prym}(S/\sigma, \Sigma)$. For $SU(p,p+1)$-Higgs bundles, the Cameral data can be used to obtain the spectral data as seen in \cite{ana1}.
Use the tools of \cite{umm} together with those of \cite{xia}  to obtain a description of the $SU(p,1)$ spectral data.
\end{itemize}

\end{problem}\begin{problem}
\textbf{$(B,B,B)$-branes from finite group.} Recall that the dimension of the moduli space of flat  space is $\SL(2,\C)$-connections on a Riemann surface $\Sigma$ of genus $g$ is 
%\begin{equation}\label{dim1}
$\dim \mathcal M_g=6g-6,$ %\end{equation}
and the dimension of  the moduli space $\mathcal M_{\gamma,n}$ of flat $\SL(2,\C)$-connections on a $n$-punctured Riemann surface of genus $\gamma$ 
with fixed local monodromies (with simple eigenvalues) is
$
\dim \mathcal M_{\gamma,n}=6\gamma-6+2n,
$  provided that either $\gamma\geq 2$, or that $\gamma\geq1$  and $n\geq1$,  or finally that  $\gamma=0$ and $n\geq4$. Otherwise, the dimension of $\mathcal M_{\gamma,n}$  is either $0$ or $2$. 
 
 \begin{itemize}
 \item[i.]  Use Riemann-Hurwitz to give a relation between the genus $g$ and $\gamma$ of the Riemann surfaces $\Sigma$ and $\Sigma/\mathbb{Z}_p$.
 
\item[ii.] Let  $g\geq2$, and $\Gamma$ be a  finite group of order $h$ acting on  $\Sigma$ by holomorphic automorphisms. Using \textbf{Problem 12.i.}, show that if a component of the moduli space of equivariant flat $\SL(2,\C)$-connections on $\Sigma$ has half the dimension of  $\mathcal M_g$, then $h=2^k$ for some  natural number $k\in\mathbb{N}.$

  \item[iii.]  What can you say about the genus $g=1$ case in \textbf{Problem 12.ii.}?
 \end{itemize}

\end{problem}\begin{problem} \textbf{Equivariant and anti-equivariant spectral data.}
Given a compact Riemann surface $\Sigma$,  double covers of $\Sigma$ can be constructed through holomorphic sections $s\in H^0(\Sigma,L^2)$, for $L$ a holomorphic line bundle on $\Sigma$. Restricting to sections which have simple zeros, there is a  double cover 
$\pi\colon S\to \Sigma$ branched over the zeros of $s$ 
such that a square root $t\in H^0( S,\pi^*L)$ satisfying $t^2=s$ exists. 

 \begin{itemize}
 \item[i.] When considering the Hitchin fibration,   the smooth loci in the Hitchin base is given by sections in $H^0(\Sigma,K^2)$ with simple zeros. Consider instead $L=K^2$ and prove the there is a (unique) double cover $\bar \pi: \bar S\to \Sigma$ branched over the zeros of $s\in H^0(\Sigma,L^2)$ such that a square root $t\in H^0( S,\pi^*L)$ satisfying $t^2=s$ exists. 
 \item[ii.] Note that for $\Phi$ an $SL(2,\C)$-Higgs field with spectral line bundle $L$, the Higgs field $\Phi$ has spectral line bundle $\sigma^* L$, where $\sigma$ is the natural involution switching the sheets of the 2-fold spectral curve.  Show that anti-equivariant  $SL(2,\C)$-Higgs bundles with respect to a fixed point free involution $\tau$ acting on $\Sigma$ have spectral data  determined by 
 \begin{equation}\label{non-sym-prym}
\mathcal{P}^{\vee}:=\{L\in Jac(S)\mid \sigma^*L=L^*,\tau^*L=L^*\}\subset \textrm{Prym}(S,\Sigma).
\end{equation}
% proof in  Thus, following similar steps as in the proof of Theorem \ref{Aabelian-variety},
% one can see that an anti-equivariant Higgs pair is determined (after choosing a square root of its determinant and after fixing a $\tau$-invariant base point $P_2$ in the   Prym variety)
 \item[iii.] Recalling that Higgs bundles for split real forms correspond to torsion two points in the Hitchin fibres for complex Higgs bundles, show a similar result to \textbf{Problem 13.ii.} for $SL(2,\R)$-Higgs bundles.    
 \end{itemize}

\end{problem}\begin{problem} \textbf{$(B,A,A)$-branes from real structures.} We have seen how real Higgs bundles can be defined through involutions on the Lie algebras. These actions can also be considered to understand properties of the spectral data description of the brane, as well as of the topological invariants labelling components of the brane. 
\begin{itemize}
\item[i.] The associated involution $\theta$ from Table \ref{invotable} decomposes $\mathfrak{u}=   \mathfrak{h}\oplus i \mathfrak{m}$. Give an explicit description of $\mathfrak{m}$ and $\mathfrak{h}$ and of the real form $\mathfrak{g}=   \mathfrak{h}\oplus  \mathfrak{m}$.
\item[ii.]  \textbf{((*))}
Nonabelianization of the fixed point set of $\Theta_{Sp(2p,2p)}$ can also be seen through Cameral covers, as shown in \cite{ana1}. Realise the action of $\sigma$ in terms of Cameral covers. 

\item[iii.]  \textbf{((*))} Considering the notion of ``strong real form'' from \cite{adam}, describe the corresponding Higgs bundles and give a definition of $\Theta_{G}$ for which one does not have the problem described in the above paragraph. 
\end{itemize}
\end{problem}\begin{problem}\textbf{$(A,B,A)$-branes from real structures.} Let  $\Sigma$ be of genus 2 with a real structure $f$ whose  invariants are  $(n,a)=(3,0)$.% and fixed circles as the blue paths in Figure \ref{fixed} below. 
%
%\begin{figure}[H]
%\centering
%\includegraphics[width=0.2\textwidth]{g2n3.png}%\caption{Blue circles giving the fixed point sets of the real structure $f$.}
%\label{fixed} \end{figure}

 Consider an $SL(2,\C)$-Higgs bundle with  spectral curve $S =\{0=\eta^2 - q\}$ for $q$ a quadratic differential with simple zeros. For $z\in \Sigma$, the involution $f$ induces an involution 
$\tilde{f}(\eta , z) = (f^*(\overline{\eta}) , f(z) ).$

\begin{itemize}
\item[i.] What is the effect of replacing $q$ by $\alpha q$ for $\alpha\in \mathbb{R}_+$? What about replacing $q$ by $-q$?
%Consider the spectral curve $S \to \Sigma$ given by the characteristic equation $\lambda^2 = q$ and with induced involution $\tilde{f}(\lambda , x) = (f^*(\overline{\lambda}) , f(x) )$, where $x \in \Sigma$ and $\lambda^2 = q(x)$. Replacing $q$ by any positive multiple of itself will simply require a corresponding rescaling of $\lambda$, so the pair $(S , \tilde{f})$ is essentially independent of such rescalings. On the other hand replacing $q$ by $-q$ will leave $S$ unchanged but replace $\tilde{f}$ by $\pi \circ \tilde{f}$, where $\pi : S \to S$ is the involution which exchanges sheets of $S \to \Sigma$.\\

\item[ii.] Suppose $\Sigma$ is hyperelliptic and describe the system of quadratic differentials on $\Sigma$ through two points $a_1,a_2 \in \mathbb{P}^1$.
\item[iii.] What are the conditions on $a_{1}$ and $a_{2}$ from \textbf{Problem 15.ii.} to obtain a quadratic differential $q$ with $f^*(\overline{q}) = q$?
%Thus to obtain a quadratic differential $q$ with $f^*(\overline{q}) = q$ we must either have $a_1,a_2$ are real or conjugates. 
\end{itemize}

\end{problem}

\section{Higgs bundles and correspondences}\label{polii}
 
\epigraph{\textit{In recent times, a good number of the most striking resolutions of longstanding problems have come about from an unexpected combination of ideas from different areas of mathematics.}}{Sir Michael Atiyah}

\begin{comment}
%	\hfill{\vbox{%
%%	\textit{%
%	\hbox{ In recent times, a good number of the most striking 
%	} \hbox{resolutions of longstanding problems have come about 
%	} \hbox{from an unexpected combination of ideas 
%	} \hbox{from different areas of mathematics.} 
%	\hbox{\it\noindent\rule{6cm}{0.4pt}}}}
%	}%\textit
%
%	 \hfill{\vbox{\hbox{{Sir Michael Atiyah}}}
%	%\bigskip
\end{comment}

%\hfill{\vbox{\hbox\textit{ Proof is the end product of a long interaction between
%} \hbox\textit{creative imagination and critical reasoning. Without
%} \hbox\textit{proof the program remains incomplete, but without the
%} \hbox\textit{imaginative input it never gets started.} \hbox{\it\noindent\rule{6cm}{0.4pt}}}}

When considering the moduli space of Higgs bundles for complex groups, one may ask which correspondences it has with other moduli spaces (of Higgs bundles or of other mathematical objects), and whether any of its subspaces have interesting correspondences and interpretations. In what fallows we shall describe a few of these for the whole moduli spaces of Higgs bundles, as well as for  branes of Higgs bundles as defined  in Chapter \ref{cuarto}. 

\subsection{Group homeomorphisms.}
Given a fixed Riemann surface $\Sigma$ and a homomorphism between two Lie
groups \begin{equation}\Psi : G_\C \rightarrow G_\C',\label{iso1}\end{equation} there is a naturally induced map between representation spaces
\[\Psi : Rep(\Sigma, G_\C) \rightarrow Rep(\Sigma, G_\C'),\]
for $Rep(\Sigma, G)$  the space of representations modulo conjugation. As mentioned in the beginning of the notes,  there is a correspondence between surface group representations and Higgs bundles, and hence there should be similar induced morphism of $G_\C$-Higgs bundle moduli spaces.  Therefore, one may ask which correspondences there are between Higgs bundles from group homomorphism, and which of those maps descend to correspondences between subgroups.  

%In particular, one can see that if the morphism $\Psi$ preserves real forms as well as the Cartan decomposition of the complex Lie algebras, then it descends to a correspondence between real Lie algebras. 

Through Definition \ref{complex}, one can see that the map in \eqref{iso1} induces a map on principal  $G_\C$-Higgs bundles $(P_{G_\C},\Phi)$ given by
\begin{equation}
\Psi: (P_{G_\C},\Phi)\mapsto (P_{G_\C}\times_\Psi {G'_\C}, d\Psi(\Phi)),\label{iso2}
\end{equation}
where $P_{G'_\C}:=P_{G_\C}\times_\Psi {G'_\C}$, and the image of the Higgs field is $\Phi':=d\Psi\Phi$ for \[d\Psi: ad(P_{G_\C})\rightarrow ad(P'_{G_\C})\]  defined by the derivative at the identity of the map $\Psi$.%, hence acting on Lie algebras.

\subsubsection{Induced maps from isogenies.} A first step towards tackling the above question is to consider the case where the group homomorphism is given by the  isogenies of Lie groups, as we shall see next.
Through  correspondences between low rank Lie algebras, one can discover some interesting relations between different Higgs bundles. Some of the low rank isogenies between complex are  \begin{equation}\begin{array}{ccl}
\SO(3,\C)&\cong&\Sp(2,\C)~~\cong~~ \SL(2,\C);\nonumber\\
\SO(4,\C)&\cong& \SL(2,\C)\times \SL(2,\C);\nonumber\\
\SO(5,\C)&\cong& \Sp(4,\C);\nonumber\\
 \SO(6,\C) &\cong&\SL(4,\C).\nonumber
% \SO(8,\C) &\cong&\SO(8,\C)\nonumber\\
\end{array}
\end{equation}
 These correspondences induce maps between real Lie algebras, which can be described as follows: 
\begin{equation}\begin{array}{lclcccccccl}
(a)~\mathfrak{so}(2,1)&\cong&\mathfrak{sl}(2,\R), 
 &&(f)~\mathfrak{so}(3,3)&\cong&\mathfrak{sl}(4,\R),
\nonumber \\
(b)~\mathfrak{so}(2,2)&\cong&\mathfrak{sl}(2,\R)\times \mathfrak{sl}(2,\R),
&&(g)~\mathfrak{so}(2,4)&\cong&\mathfrak{su}(2,2),
\nonumber\\
(c)~~\mathfrak{so}^*(4)&\cong&\mathfrak{su}(2)\times \mathfrak{sl}(2,\R),
&&(h)~\mathfrak{so}(1,5)&\cong&\mathfrak{su}(4),
 \nonumber\\
 (d)~\mathfrak{so}(1,3)&\cong&\mathfrak{sl}(2,\C),
  &&(i)~  \mathfrak{so}(2,3)&\cong&\mathfrak{sp}(4,\R).
  \nonumber\\
 (e)~ \mathfrak{so}(1,4)&\cong&\mathfrak{sp}(2,2),
\end{array}
\end{equation}

 The rank 2 and 3 isogenies in (b) and (f) inducing maps between branes were described in \cite{isog} through the regular fibres of the Hitchin fibration,  whereas \cite{mas10} will take care of the correspondences in (c) and (d) between branes inside singular fibres. All of the other correspondences have never been studied through spectral data, and thus they provide a rich list of open questions one may ask.  In what follows, we shall give an overview of this area and the questions one may ask, based on \cite{isog} and  \cite{mas10}. For further background, the reader may want to refer to \cite{ap} where some further explanations about isogenies in terms of Higgs bundles are given.

From \eqref{iso2}, it is not hard to see that the correspondence between Higgs bundles at the level of the vector bundle and Higgs field can be deduced naturally (see \cite{ap,isog}). Recall that  an $SL(n,\C)$-Higgs bundle can be seen as is a pair $(E,\Phi)$  for  $E$ a rank $n$ holomorphic bundle on $\Sigma$ with trivial determinant, and $\Phi$   a traceless Higgs field. Moreover,  an $\SO(n,\C)$-Higgs bundle can be written as a pair $(E,\Phi)$  where $E$ is a rank $n$ holomorphic bundle on $\Sigma$ with an orthogonal structure $Q$ and a compatible trivialization of its determinant bundle, where a trivialization $\delta:\det(E)\simeq\mathcal{O}_{\Sigma}$ is compatible with $Q$ if $\delta^2$ agrees with the trivialization of $(\Lambda^nE)^2$ given by the discriminant $Q:\Lambda^nE\rightarrow\Lambda^nE^*$, and the Higgs field $\Phi$ satisfies  {$Q(u,\Phi v)=-Q(\Phi u, w)$}.

% (see Remark 2.6 in \cite{ramanan81}),

\begin{proposition}\label{mapa1}
The  isogenies between Lie groups of rank 2 and 3 given by
 \begin{align}\label{isogenies}
\mathcal{I}_2&:\SL(2,\C)\times\SL(2,\C)\rightarrow\SO(4,\C);\\
\mathcal{I}_3&:\SL(4,\C)\rightarrow\SO(6,\C);
%\label{isogenies3}
\end{align} 
induce maps on $\SL(2,\C)$-Higgs bundles  and $\SL(4,\C)$-Higgs bundles given by
 \begin{align}
 \mathcal{I}_2&: (E_1,\Phi_1), (E_2,\Phi_2) \mapsto \left[E_1\otimes E_2, \Phi_1\otimes 1+1\otimes\Phi_2\right]\in \mathcal{M}_{SO(4,\C)} ,\\
 \mathcal{I}_3 &:(E,\Phi)\mapsto \left[\Lambda^2 E, \Phi\otimes 1+1\otimes\Phi\right]\in \mathcal{M}_{SO(6,\C)}.
\end{align} 

\end{proposition}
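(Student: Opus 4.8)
The plan is to deduce both correspondences directly from the functoriality formula \eqref{iso2}: a homomorphism $\Psi$ sends a Higgs bundle $(P,\Phi)$ to $(P\times_\Psi G_\C',\,d\Psi(\Phi))$, so everything reduces to identifying $\mathcal{I}_2$ and $\mathcal{I}_3$ explicitly as linear representations and then unravelling the associated-bundle construction. First I would recall that $\mathcal{I}_2$ is induced by the action of $\SL(2,\C)\times\SL(2,\C)$ on $\C^2\otimes\C^2\cong\C^4$, and that the tensor product $\omega_1\otimes\omega_2$ of the two invariant symplectic forms is a non-degenerate \emph{symmetric} bilinear form preserved by this action; since $\det(g_1\otimes g_2)=1$ and the kernel is $\{\pm(I,I)\}$, this exhibits $\SL(2,\C)\times\SL(2,\C)$ as the double cover of $\SO(4,\C)$. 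Applying the associated-bundle functor to the fibre product of the two principal bundles then yields exactly $E_1\otimes E_2$ with orthogonal structure $Q=Q_1\otimes Q_2$, the compatible trivialisation of $\det(E_1\otimes E_2)\cong(\det E_1)^{2}\otimes(\det E_2)^{2}$ being inherited from those of $\det E_1,\det E_2$; I would check that the discriminant-compatibility condition stated just before the Proposition holds automatically here.

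On Lie algebras one has
\[ d\mathcal{I}_2(X_1,X_2)=X_1\otimes 1+1\otimes X_2, \qquad d\mathcal{I}_3(X)=X\otimes 1+1\otimes X, \]
where in the second equality $X\otimes 1+1\otimes X$ denotes the canonical derivation extending $X\in\mathfrak{sl}(4,\C)$ to $\Lambda^2\C^4$. Hence $d\mathcal{I}_2(\Phi_1,\Phi_2)=\Phi_1\otimes 1+1\otimes\Phi_2$ as a section of $\mathrm{End}(E_1\otimes E_2)\otimes K$, and its $Q$-skew-symmetry is immediate from $\Phi_i$ being $\omega_i$-skew, which confirms the output is a genuine $\SO(4,\C)$-Higgs bundle. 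The case of $\mathcal{I}_3$ is entirely parallel: it is induced by the representation of $\SL(4,\C)$ on $\Lambda^2\C^4\cong\C^6$, the wedge pairing $\Lambda^2\C^4\times\Lambda^2\C^4\to\Lambda^4\C^4\cong\C$ being the invariant symmetric form and $\{\pm I\}$ the kernel; the associated bundle is $\Lambda^2 E$ with its induced orthogonal structure and compatible trivialisation of $\det(\Lambda^2 E)\cong(\det E)^{3}$, and the Higgs field becomes $\Phi\otimes 1+1\otimes\Phi$ acting on $\Lambda^2 E$, again skew for the wedge pairing because $\Phi$ is traceless.

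The remaining point — and the one I expect to require the most care — is well-definedness at the level of moduli, i.e. that the image of a polystable $\SL$-Higgs bundle is a polystable $\SO$-Higgs bundle, so that $\mathcal{I}_2,\mathcal{I}_3$ descend to maps $\mathcal{M}_{\SL(2,\C)}\times\mathcal{M}_{\SL(2,\C)}\to\mathcal{M}_{\SO(4,\C)}$ and $\mathcal{M}_{\SL(4,\C)}\to\mathcal{M}_{\SO(6,\C)}$, and not merely a Higgs pair. Since tensor and exterior powers of stable bundles need not be stable, I would avoid a direct slope estimate and instead invoke the fact that the associated-bundle construction through a homomorphism of reductive groups preserves polystability — most cleanly by passing through the non-abelian Hodge correspondence, where $\mathcal{I}_2$ and $\mathcal{I}_3$ act on solutions of Hitchin's equations \eqref{2.1}--\eqref{hit2} simply by post-composing the holonomy with the homomorphism, manifestly sending reductive representations to reductive representations. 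Finally, I would observe that the bracketed formulas in the statement are independent of the chosen representatives of the $S$-equivalence classes, because the graded object $\mathrm{Gr}$ is compatible with $\otimes$ and with $\Lambda^2$.
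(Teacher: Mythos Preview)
Your proposal is correct and follows the same approach as the paper, which gives only a one-line proof: ``This follows directly from \eqref{iso1} and \eqref{iso2} applied to the particular $G_\C$-Higgs bundles.'' You have simply unpacked that sentence in full detail---identifying the representations explicitly, computing the derivatives, verifying the orthogonal structures, and addressing polystability preservation---none of which the paper spells out.
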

\begin{proof} This follows directly from \eqref{iso1} and \eqref{iso2} applied to the particular $G_\C$-Higgs bundles. 
\end{proof}
 
Our interest is in the understanding of the maps described in Proposition \ref{mapa1} through spectral data. For this, one needs to study the correspondence between eigenvalues of the Higgs field, giving the correspondence between spectral curves, and between the eigenspaces, giving the line bundles on the spectral curves.

 In order to understand the relation between eigenvalues for Higgs bundles through group homomorphisms, it is useful to recall that   the characteristic polynomial of a Higgs field $\Phi$ defines a curve in the total space of the canonical bundle $K$ whose equation can be obtained by considering
\begin{equation}p(x)=x^n + \textrm{tr} (\Phi) x^{n-1}+ \sum_{k=2}^{n-1} x^{n-k}~(-1)^k ~\textrm{Tr}(\Lambda^k \Phi) +(-1)^n~\textrm{det}(\Phi)\label{uno} \end{equation}
where $\textrm{Tr}(\Lambda^k \Phi) $ is the trace of the $k$th exterior power of $\Phi$, with dimension $\binom{n}{k}$, as in \eqref{uno1} before.
 
 As seen before, and explained in \cite{N2}, the spectral curve $\pi: S\rightarrow \Sigma$ associated to an $\SL(n,\mathbb{C})$-Higgs bundle   $(E,\Phi)$ has equation 
 \begin{equation}
S:=\{0=\text{det}(I \eta-\Phi)\}=\{ 0=\eta^{n}+a_{2}\eta^{n-2}+\ldots + a_{n-1}\eta+a_{n}\},\label{spectralsl}
\end{equation}
for $a_{i}\in H^{0}(\Sigma,K^{i})$, and  $\eta$  the tautological section of $\pi^*K$.  The spectral curve $S$ is generically smooth and the fiber over a generic point in the Hitchin base can be identified   with  the Prym variety  $\textrm{Prym}(S,\Sigma)$.%,  the  subset of $\text{Jac}(S)$  of line bundles  whose direct image sheaf has trivial determinant. 
For a fixed choice of $K^{1/2}$, the vector bundle $E$ is recovered  from $L\in  \mathrm{Prym}(S,\Sigma)$ as
$E:=\pi_*(L\otimes\pi^*(K^{(n-1)/2})).
$

For $SO(2n,\C)$-Higgs bundles, the Higgs field $\Phi$ defines a  cover $\pi:S\rightarrow \Sigma$ given by \eqref{spectralsl} where $a_n=p_{n}\in H^{0}(\Sigma, K^{n})$ is the Pfaffian of $\Phi$. 
 The curve is singular  at   $\eta=p_n=0$, which are the fixed points of  a natural involution $\sigma:\eta\mapsto -\eta$. The normalization  
 $\hat{\pi}:\hat{S}\rightarrow \Sigma$ of $S$, is what we shall refer to as the spectral curve.  This curve $\hat{S}$ is generically  smooth, and  the  involution $\sigma$ extends to an involution $\hat{\sigma}$  which does not have fixed points.   The generic $\SO(2n,\C)$ fibers  by a connected component of  $\textrm{Prym}(\hat{S},\hat{S}/{\sigma})$. Similarly to the previous case, for a choice of $K^{1/2}$,  the vector bundle $E$ is recovered from $L\in \textrm{Prym}(\hat{S},\hat{S}/{\sigma})$ as 
 \begin{equation}\label{EfromL-SO}
E:=\pi_*(L\otimes (K_{\hat{S}}\otimes\pi^*(K^*))^{1/2}).
\end{equation}

Through the above spectral data, together with the correspondence given in Proposition \ref{mapa1}, the isogenies induce the following  map on the Hitchin fibrations from \cite[Theorem 1 $\&$ Theorem 2]{isog}.
Moreover, the morphisms from Theorem \ref{mapa2} restrict to real (e.g. see   Problems \ref{prob1} and \ref{prob2}). 
\begin{theorem}[\cite{isog}]
\label{mapa2}
 Consider  two generic $SL(2,\C)$-Higgs bundles $(E_i,\Phi_i)$, whose spectral data is $S_i$ and  $L_i\in Prym(S_i,\Sigma)$. Then $\mathcal{I}_2$ is a $2^{2g}$ fold-covering onto its images and the spectral data for the image   $\mathcal{I}_2((E_1,\Phi_1), (E_2,\Phi_2))$ is given by $(\hat{S}_4,\mathcal{L}_4)$ where 

\begin{itemize}
\item $\hat S_4:=S_1\times_{\Sigma}S_2$ is a smooth ramified fourfold cover, and
\item $\mathcal{L}_4:= p_1^*(L_1)\otimes p_2^*(L_2)$ where $p_i:S_1\times_{\Sigma}S_2\rightarrow S_i$ are the projection maps.
\end{itemize}
Moreover, for a generic $SL(4,\C)$-Higgs bundle $(E',\Phi')$ with spectral data $(S',L')$, the $\SO(6,\C)$-Higgs bundle given by the image $\mathcal{I}_3(E,\Phi)$ has spectral data    $(\hat{S}_6,\mathcal{I}_3(L))$, where
\begin{itemize}
\item $\hat{S}_6$ is the symmetrization of the non-diagonal component in   $S'\times_{\Sigma} S'$; 
\item $\mathcal{I}_3(L)$ is a canonical twist  generated by 
local sections of $\mathcal{L}':=p_1^*(L)\otimes p_2^*(L)$ that are anti-invariant with respect to the symmetry of $S'\times_{\Sigma} S'$.\end{itemize}
\end{theorem}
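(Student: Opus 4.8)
The plan is to read off the spectral data of the images directly from the explicit formulas in Proposition~\ref{mapa1}, by analysing the eigenvalues and eigenspaces of the induced Higgs fields fibrewise over $\Sigma$ and then globalising. For $\mathcal{I}_2$, over a generic point of $\Sigma$ the Higgs field $\Phi_1\otimes 1+1\otimes\Phi_2$ on $E_1\otimes E_2$ has the four eigenvalues $\pm\eta_1\pm\eta_2$, where $\pm\eta_i$ are the eigenvalues of $\Phi_i$, that is the values of the tautological section on the fibre of $S_i$. First I would show that for generic $(E_i,\Phi_i)$ the branch divisors of $S_1$ and $S_2$ on $\Sigma$ are disjoint, so that $\hat S_4:=S_1\times_\Sigma S_2$ is smooth and the map $(\eta_1,\eta_2)\mapsto\eta_1+\eta_2$ into $\mathrm{Tot}(K)$ is finite and generically injective. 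Comparing with the $\SO(4,\C)$-characteristic polynomial $\det(\eta-\Phi_1\otimes 1-1\otimes\Phi_2)$, which one computes to be $\eta^4+2(a_2^{(1)}+a_2^{(2)})\eta^2+(a_2^{(2)}-a_2^{(1)})^2$ with Pfaffian $a_2^{(2)}-a_2^{(1)}$, one sees that this map separates the nodes sitting over the locus $a_2^{(1)}=a_2^{(2)}$; hence it identifies $\hat S_4$ with the normalisation of the singular $\SO(4,\C)$-spectral curve, and the involution $\sigma\colon\eta\mapsto-\eta$ corresponds to $\sigma_1\times\sigma_2$. The same fibrewise computation for $\mathcal{I}_3$ gives the six eigenvalues $\eta_i+\eta_j$, $i<j$, of $\Lambda^2\Phi'$ on $\Lambda^2E'$ (among the eigenvalues $\eta_1,\dots,\eta_4$ of $\Phi'$, which satisfy $\sum\eta_i=0$), so the spectral curve is the quotient by the sheet-swap of the non-diagonal component of $S'\times_\Sigma S'$, and $\sigma\colon\eta\mapsto-\eta$ corresponds to sending a pair $\{\eta_i,\eta_j\}$ to its complement, as claimed.

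For the spectral line bundles, the $(\eta_1+\eta_2)$-eigenspace of $\Phi_1\otimes 1+1\otimes\Phi_2$ is the tensor product of the $\eta_1$- and $\eta_2$-eigenlines of $\Phi_1$ and $\Phi_2$; on the covers these eigenlines are the spectral line bundles $L_1,L_2$, so the eigenline bundle on $\hat S_4$ is $p_1^*L_1\otimes p_2^*L_2$ up to a twist pulled back from $\Sigma$. To pin down that twist I would use the Cartesian square: flat base change gives $\pi_2^*\pi_{1*}A\cong p_{2*}p_1^*A$, and two applications of the projection formula give $\pi_{1*}A\otimes\pi_{2*}B\cong\pi_*(p_1^*A\otimes p_2^*B)$; with $A=L_1\otimes\pi_1^*K^{1/2}$ and $B=L_2\otimes\pi_2^*K^{1/2}$ this yields $E_1\otimes E_2\cong\pi_*(p_1^*L_1\otimes p_2^*L_2\otimes\pi^*K)$. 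On the other side, computing $K_{\hat S_4}$ from the Cartesian square --- the relative dualising sheaf of $p_1$ is the pullback of $\omega_{\pi_2}=K_{S_2}\otimes\pi_2^*K^{-1}=\pi_2^*K$, so $K_{\hat S_4}=p_1^*K_{S_1}\otimes p_2^*\pi_2^*K=\pi^*K^3$ --- the $\SO(4,\C)$-recovery formula of the excerpt becomes $E=\pi_*(L\otimes(K_{\hat S_4}\otimes\pi^*K^{-1})^{1/2})=\pi_*(L\otimes\pi^*K)$; matching the two expressions gives $\mathcal{L}_4=p_1^*L_1\otimes p_2^*L_2$ on the nose. The Prym condition is then immediate: $\sigma^*\mathcal{L}_4=(\sigma_1\times\sigma_2)^*(p_1^*L_1\otimes p_2^*L_2)=p_1^*L_1^*\otimes p_2^*L_2^*=\mathcal{L}_4^*$, so $\mathcal{L}_4\in\mathrm{Prym}(\hat S_4,\hat S_4/\sigma)$. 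For $\mathcal{I}_3$ the $(\eta_i+\eta_j)$-eigenspace of $\Lambda^2\Phi'$ is the wedge of the two eigenlines of $\Phi'$, which changes sign under the swap $i\leftrightarrow j$; hence the relevant line bundle on $\hat S_6$ is the one generated by the local sections of $p_1^*L'\otimes p_2^*L'$ that are anti-invariant under the symmetry, that is the canonical twist $\mathcal{I}_3(L)$ of the statement, and the analogous base-change and projection-formula computation recovers $\Lambda^2E'$ as its (appropriately twisted) direct image.

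Finally, for the covering degree of $\mathcal{I}_2$: tensoring both $(E_1,\Phi_1)$ and $(E_2,\Phi_2)$ by a fixed $\alpha\in\mathrm{Jac}(\Sigma)[2]$ leaves $E_1\otimes E_2$ and its Higgs field unchanged, giving an action of $\mathrm{Jac}(\Sigma)[2]\cong\mathbb{Z}_2^{2g}$ on the fibres of $\mathcal{I}_2$; I would show it is transitive on the generic fibre by recovering the source from the spectral data $(\hat S_4,\mathcal{L}_4)$ --- the curves $S_1,S_2$ are the targets of the two projections of $\hat S_4=S_1\times_\Sigma S_2$, and restricting $\mathcal{L}_4$ to a fibre $S_1\times_\Sigma\{\mathrm{pt}\}$ recovers $L_1$ (and symmetrically $L_2$) up to exactly a $2$-torsion twist pulled back from $\Sigma$ --- so that $\mathcal{I}_2$ is a $2^{2g}$-fold covering onto its image; as a consistency check, $\dim\mathcal{M}_{\SL(2,\C)}\times\mathcal{M}_{\SL(2,\C)}=12(g-1)=\dim\mathcal{M}_{\SO(4,\C)}$, and likewise in rank $3$, so both morphisms are finite onto their images. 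The main obstacle I expect is precisely this line-bundle and twist bookkeeping: getting $K_{\hat S}$ right for the fibre products (which are \emph{not} generic spectral curves, so the formula $K_S=\pi^*K^n$ does not apply), making the $\SO$-recovery twist match the product of the two source recovery twists so that $\mathcal{L}_4$ and $\mathcal{I}_3(L)$ come out without parasitic factors, and checking that the orthogonal form induced on $E_1\otimes E_2$ from the symplectic forms on the $E_i$ (and on $\Lambda^2E'$ from $\Lambda^4E'\cong\CO$) is the one encoded by $\mathrm{Prym}(\hat S,\hat S/\sigma)$ in the recovery.
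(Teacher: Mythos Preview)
The paper does not actually prove this theorem: it is quoted as \cite[Theorem 1 \& Theorem 2]{isog}, and the surrounding text only indicates the strategy in one sentence (``one needs to study the correspondence between eigenvalues of the Higgs field, giving the correspondence between spectral curves, and between the eigenspaces, giving the line bundles on the spectral curves''). Your proposal follows exactly that strategy and carries it out in full, so there is nothing in the paper to compare it against beyond that sentence --- your eigenvalue/eigenspace fibrewise analysis, followed by the base-change/projection-formula globalisation, is precisely the natural implementation of the hint and is correct.

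A few remarks on the details you flagged. Your characteristic-polynomial computation for $\mathcal{I}_2$ is right, and your identification of $\hat S_4$ with the normalisation of the $\SO(4,\C)$-spectral curve is correct: over a point with $a_2^{(1)}=a_2^{(2)}\neq 0$ the two preimages $(\eta_1,-\eta_1)$ and $(-\eta_1,\eta_1)$ in $\hat S_4$ both map to $\eta=0$, which is exactly the node being separated. Your computation $K_{\hat S_4}=\pi^*K^3$ via the relative dualising sheaf of the Cartesian square is correct and matches the $\SO(4,\C)$-recovery twist on the nose, so the ``parasitic factors'' you worried about do cancel. Your argument for the $2^{2g}$ degree via the $\mathrm{Jac}(\Sigma)[2]$-action and the reconstruction of $(L_1,L_2)$ from $\mathcal{L}_4$ up to a common $2$-torsion twist is also the right one; it is worth noting that this is the kernel of the isogeny $\mathcal{I}_2$ itself (the diagonal $\mathbb{Z}/2$ in $\SL(2,\C)\times\SL(2,\C)$ acting on associated bundles), which gives an a priori reason the answer must be $|\mathrm{Jac}(\Sigma)[2]|$. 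For $\mathcal{I}_3$ your wedge-of-eigenlines identification of the anti-invariant sections is correct, and the same base-change argument applied to $S'\times_\Sigma S'$ (decomposed under the swap) recovers $\Lambda^2 E'$; you might want to be slightly more explicit there about why the diagonal component of $S'\times_\Sigma S'$ contributes $\mathrm{Sym}^2$ rather than $\Lambda^2$ under pushforward, but that is a routine equivariant splitting.
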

It should be noted that the Higgs bundles considered in Theorem \ref{mapa2} lie all within the generic fibres of the Hitchin fibration. However, the isogeny descends to all the real forms of those complex Lie groups, and thus one should be able to obtain the morphisms between real Higgs bundles which lie completely over the singular locus of the Hitchin fibration. The first cases studied, which involve the geometry of ribbons, appear in \cite{mas10}.
 \begin{remark}
The approach of \cite{isog} was also of use when studying problems in $F$-theory related to degenerations of elliptically fibred CY 3-folds as done  in \cite{tbrane}.  
 \end{remark}
\subsection{Polygons and Hyperpolygons.}
An interesting relation between mathematical objects appears when considering \textit{parabolic Higgs bundles} (as defined in Section \ref{secpara}) and \textit{quiver varieties} (representation varieties of quivers, for which the base manifold is a point), as studied in \cite{31,33} and subsequent work of many researchers.  

\begin{definition}
A \textrm{quiver}   $Q=(V,A,h,t)$ is an oriented graph, consisting of a finite vertex set $V$, a finite arrow set $A$, and head and tail maps $h,t:A\rightarrow V$.
\end{definition}

\begin{example}Quivers may be represented through oriented graphs, and an example that will appear useful in coming sections is that of a star shaped quiver as bellow, where the set of vertices is $V=\{v_1,\ldots,v_3\}$, and the set of arrows is $A=\{a_i:v_i\rightarrow v_0~|~1\leq i\leq3\}$:
\begin{equation}\xymatrixcolsep{0.3pc}\xymatrixrowsep{0.3pc}
\xymatrix{ & & & \: \: \: \: \bullet v_1 \ar[ddd]_{a_1}& & & 
\\ &  & & & &  & \\ \\ v_{3} \bullet \ar[rrr]_{a_{3}} & & &  \bullet v_0 & & & \bullet v_2\ar[lll]^{a_2}  }\label{exquiv}\end{equation}
Here, the head map is given by $h(a_i)=v_0$, and the tail map by $t(a_i)=v_i$. 

\label{ex1}
\end{example}
The \emph{dimension vector} of $Q$ is the $(n+1)$-tuple of positive integers $d=(r,1,\dots,1)$, and this vector determines $Q$ uniquely as a star quiver. Moreover, quivers carry naturally defined spaces of representations:\begin{equation}
  \textrm{Rep}(Q) = \bigoplus_{i=1}^n \textrm{Hom}(\C, \C^r) \cong  \textrm{Hom}(\C^n, \C^r) = \C^{n\times r}.\nonumber
\end{equation} Of particular interest for us are star-shaped quivers $Q_n$,  with one central vertex $v_0$ and $n$ outer vertices $v_0, \dots , v_n$ and arrows $a_i : v_i \rightarrow v_0$, as the one in Example \ref{ex1} which has $n=3$. In what follows we shall consider  fixed  dimension vector $d=(r,1,\cdots,1) \in \mathbb{N}^{n+1}$, and recall the value by writing  the star-shaped quiver as $Q^r_n$.  

\begin{remark} A representation of a quiver $Q$ can also be written as 
\[W:=((W_v)_{v\in V}, (\phi_a)_{a\in A})\] where $W_v$ is finite dimensional vector space; and 
   $\phi_a: W_{t(a)}\rightarrow W_{h(a)}$ is a linear map for all $a\in A$. \end{remark}

    Similar   to when working with parabolic Higgs bundles, one may define the slope and stability conditions of quivers. 
For a stability parameter $\alpha:=(\alpha_v)_{v\in V}$ in $\mathbb{Z}^V$,  the $\alpha$-slope of $W$ is 
 \begin{equation}\mu_\alpha(W) := \frac{\sum_{v\in V} \alpha_v\dim W_v}{\sum_{v\in V} \dim W_v} \in \mathbb{Q}.\end{equation}
Then, the representation $W$ is said to be:
  \textit{$\alpha$-semistable} if $\mu_\alpha(W') \leq \mu_\alpha(W)$ for all  sub-representation $0 \neq W'\subset W$;
  \textit{$\alpha$-stable} if $\mu_\alpha(W') < \mu_\alpha(W)$ for all  subrepresentations $0 \neq W' \subsetneq W$.
There are natural notions of Jordan-H\"{o}lder filtrations (as those considered in Section \ref{complex}), and  two $\alpha$-semistable  representations of $Q$ are said to be $S$-equivalent if their associated graded objects for their Jordan--H\"{o}lder filtrations are isomorphic. For more details, the reader may refer, for example, to \cite{vicky} whose notation we will follow for most of this section. 

Our interest in quiver varieties comes from their relation to vector bundles and Higgs bundles which can be obtained, in particular, considering the space of polygons and hyperpolygons (e.g., see  \cite{ginzburg} and the references therein).

 The polygon space $\mathcal{M}_{\text{poly}}(n,\ell)$ consisting of $n$-gons in the Euclidean space $\mathbb{E}^3$ with lengths given by $\ell=(\ell_1,\ldots,\ell_n)$ modulo orientation-preserving isometries is homeomorphic  the moduli space of $\ell$-semistable $n$ ordered points on $\mathbb{P}^1$  modulo the automorphisms of the projective line   (see \cite{vicky} and references therein). Moreover, it is also  isomorphic to a moduli space of representations of a star-shaped quiver $Q^\ell_n$.
 
%  with one central vertex $v_0$ and $n$ outer vertices $v_0, \dots , v_n$ and arrows $a_i : v_i \rightarrow v_0$, as the one in Example \ref{ex1} which has $n=3$, and with fixed  dimension vector $d=(r,1,\cdots,1) \in \mathbb{N}^{n+1}$ and    stability parameter \[\theta_\ell:=(- \sum_{i=1}^n \ell_i, r\ell_1, r\ell_2, \cdots , r\ell_n).\]

 In order to define the moduli space of hyperpolygons, we shall consider a quiver $Q = (V,A,h,t)$ and define the doubled quiver as
$\overline{Q}:=(V,\overline{A}, h,t)$ where $\overline{A} = A \cup A^*$ for \[ A^* := \{ a^* : h(a) \rightarrow t(a)\}_{a \in A}.\] Through the double quiver of a star shaped quiver,   the authors of \cite{steve} constructed the hyperpolygon spaces for any value of $r$,
% $\CX^r_n(\alpha)$  where $\alpha\in \mathbb{R}^{n}$ is a K\"ahler modulus, 
 which are the hyperk\"ahler analogues of polygon spaces (see references therein for previous work appearing in a less general manner).

{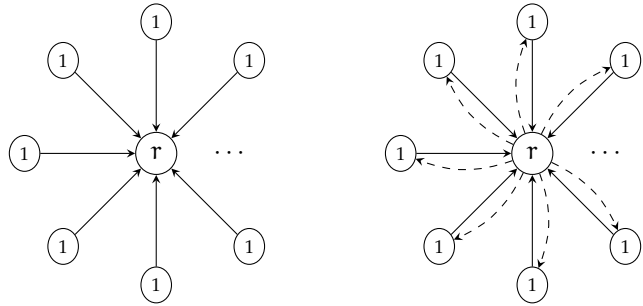
\begin{figure}[H]\centering\begin{tikzpicture}
  \node[draw, ellipse] (Lsink) at (0,0) {$r$};
  \node[draw, ellipse] (Rsink) at (5,0) {$r$};
  % \dotnodebelow{sink}{c}{$2$};
   \foreach \a in {45}%,90,135,180, 225,270,315 
   \foreach \b in {90 }
     \foreach \c in {135 }
       \foreach \d in {180 }
     \foreach \e in {225 }
            \foreach \f in {270 }
     \foreach \g in {315 }
 {
    \node[draw, ellipse, scale=0.7] (Lv) at ($(Lsink)+(\a:1.75)$) {$1$};
        \path[-stealth] (Lv) edge (Lsink);
        \node[draw, ellipse, scale=0.7] (Lv) at ($(Lsink)+(\b:1.75)$) {$1$};
    \path[-stealth] (Lv) edge (Lsink);
            \node[draw, ellipse, scale=0.7] (Lv) at ($(Lsink)+(\c:1.75)$) {$1$};
    \path[-stealth] (Lv) edge (Lsink);
            \node[draw, ellipse, scale=0.7] (Lv) at ($(Lsink)+(\d:1.75)$) {$1$};
    \path[-stealth] (Lv) edge (Lsink);
            \node[draw, ellipse, scale=0.7] (Lv) at ($(Lsink)+(\e:1.75)$) {$1$};
    \path[-stealth] (Lv) edge (Lsink);
            \node[draw, ellipse, scale=0.7] (Lv) at ($(Lsink)+(\f:1.75)$) {$1$};
    \path[-stealth] (Lv) edge (Lsink);
           \node[draw, ellipse, scale=0.7] (Lv) at ($(Lsink)+(\g:1.75)$) {$1$};
    \path[-stealth] (Lv) edge (Lsink);

   \node[draw, ellipse, scale=0.7] (Rv) at ($(Rsink)+(\a:1.75)$) {$1$};
        \path[-stealth] (Rv) edge (Rsink);
            \path[-stealth, dashed] (Rsink) edge[bend left=20] (Rv);

        \node[draw, ellipse, scale=0.7] (Rv) at ($(Rsink)+(\b:1.75)$) {$1$};
    \path[-stealth] (Rv) edge (Rsink);
        \path[-stealth, dashed] (Rsink) edge[bend left=20] (Rv);

            \node[draw, ellipse, scale=0.7] (Rv) at ($(Rsink)+(\c:1.75)$) {$1$};
    \path[-stealth] (Rv) edge (Rsink);
        \path[-stealth, dashed] (Rsink) edge[bend left=20] (Rv);

            \node[draw, ellipse, scale=0.7] (Rv) at ($(Rsink)+(\d:1.75)$) {$1$};
    \path[-stealth] (Rv) edge (Rsink);
        \path[-stealth, dashed] (Rsink) edge[bend left=20] (Rv);

            \node[draw, ellipse, scale=0.7] (Rv) at ($(Rsink)+(\e:1.75)$) {$1$};
    \path[-stealth] (Rv) edge (Rsink);
        \path[-stealth, dashed] (Rsink) edge[bend left=20] (Rv);

            \node[draw, ellipse, scale=0.7] (Rv) at ($(Rsink)+(\f:1.75)$) {$1$};
    \path[-stealth] (Rv) edge (Rsink);
        \path[-stealth, dashed] (Rsink) edge[bend left=20] (Rv);

           \node[draw, ellipse, scale=0.7] (Rv) at ($(Rsink)+(\g:1.75)$) {$1$};
    \path[-stealth] (Rv) edge (Rsink);
        \path[-stealth, dashed] (Rsink) edge[bend left=20] (Rv);

%    \node[draw, ellipse, scale=0.7] (Rv) at ($(Rsink)+(\a:1.75)$) {$n_1$};
%    \path[-stealth] (Rv) edge (Rsink);
%    \path[-stealth, dashed] (Rsink) edge[bend left=20] (Rv);
  };
    \node (Ld) at ($(Lsink)+(0:1)$) {$\cdots$};
    \node (Rd) at ($(Rsink)+(0:1)$) {$\cdots$};

%  \foreach \a in {45}%,90,135,180, 225,270,315 }
%  \foreach \b in {90 }
%    \foreach \d in {135 }
%  {
%    \node[draw, ellipse, scale=0.7] (Lv) at ($(Lsink)+(\a:1.75)$) {$n_1$};
%        \path[-stealth] (Lv) edge (Lsink);
%
%        \node[draw, ellipse, scale=0.7] (Lv) at ($(Lsink)+(\b:1.75)$) {$n_2$};
%    \path[-stealth] (Lv) edge (Lsink);
%
%    \node[draw, ellipse, scale=0.7] (Rv) at ($(Rsink)+(\a:1.75)$) {$n_1$};
%    \path[-stealth] (Rv) edge (Rsink);
%    \path[-stealth, dashed] (Rsink) edge[bend left=20] (Rv);
%  };
%
%    \node (Ld) at ($(Lsink)+(0:1)$) {$\cdots$};
%    \node (Rd) at ($(Rsink)+(0:1)$) {$\cdots$};
\end{tikzpicture}
\caption{Left, the quiver whose moduli of representations is given by polygon space; Right, the doubled quiver used to construct hyperpolygon space.} 
\label{fig-polygon-quiver}
\end{figure} }
Carrying on with the work of \cite{steve}, through the natural action of   the group $G := P(U(r) \times (S^1)^n)$ on the star-shaped quivers in Figure \ref{fig-polygon-quiver}, there are two natural moment maps for the action of $G$ on $T^\ast \textrm{Rep}(Q)$ whose scalar multiples can be written as follows:
\begin{align}
 {Real~ moment~ map~}~  \mu_r(x,y) &= ((xx^\ast - y^\ast y)_0, |{x_1}|^{2}-|{y_1}|^{2}, \dots, |{x_n}|^{2}-|{y_n}|^{2}) \nonumber\\
 {Complex~ moment~ map~}~   \mu_c(x,y) &= ((xy)_0, y_1 x_1, \dots, y_n x_n)\nonumber
\end{align}
for $(\cdot)_0$   the trace-free part of a matrix. The \emph{polygon space} $\CP_n^r(\alpha)$ is the symplectic quotient
\begin{equation} \CP^r_n(\alpha) = \textrm{Rep}(Q) //_{\alpha} G. \nonumber\end{equation}
Similarly, the \emph{hyperpolygon space} $\CX_n^r(\alpha)$ can be defined  be the hyperk\"ahler quotient
\begin{equation} \CX^r_n(\alpha) = T^\ast \textrm{Rep}(Q) ///_{(\alpha,0)} G. \nonumber\end{equation}

Nakajima showed that polygon and
hyperpolygon spaces are examples of K\"ahler and hyperk\"ahler quiver varieties, which can be then used to understand parabolic Higgs bundles.  Although relations between quiver varieties and parabolic Higgs bundles have been long studied, many interesting  problems are still open. 
Building on a natural analogy between
  hyper-k\"ahler versions of moduli spaces of semistable
$n$-gons in complex projective space and certain parabolic Higgs bundles, Rayan and Fisher  identified in \cite{steve} the spaces $\CX^r_n(\alpha)$
with  degenerate Hitchin systems: for a hyperpolygon  $(x,y)$ in $\CX^r_n(\alpha)$, and $D = \sum_{i=1}^n p_i$   a fixed divisor on $\Sigma$ for which no two $p_i$  are the same,  
 a Higgs field on $\mathbb{P}^1$ with   vector bundle $E = \mathbb{C}^r \times \mathbb{P}^1$ is given by
\begin{equation}  
  \phi(z) = \sum_{i=1}^n \frac{\phi_i dz}{z-p_i},\nonumber
\end{equation}
for $z$   an affine coordinate on $\mathbb{P}^1$,  and where $\phi_i = x_i y_i$. 
More precisely, for parabolic $SL(n,\C)$-Higgs bundle, one has  the following general result (see \cite[Section 4]{steve} for a clear explanation of how parabolic Higgs bundles correspond to hyperpolygons):

\begin{theorem}[\cite{steve}]The space $\mathcal{X}^r_n(\alpha)$ may be identified with a moduli space of rank $r$ parabolic Higgs bundles on $\mathbb{P}^1$, such that the residue at each marked point lies in the closure of the minimal nilpotent orbit of $\mathfrak{sl}_r$, and whose underlying bundle is trivial.
\end{theorem}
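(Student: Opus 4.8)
The plan is to set up an explicit dictionary between points of $\mathcal{X}^r_n(\alpha)$ and isomorphism classes of the parabolic Higgs bundles in question, and then to upgrade this dictionary to an isomorphism of moduli spaces by matching the two stability conditions. First I would unwind the defining equations of the hyperkähler quotient. A representative $(x,y)\in T^\ast\mathrm{Rep}(Q)$ is a collection of maps $x_i\in\mathrm{Hom}(\C,\C^r)$ and $y_i\in\mathrm{Hom}(\C^r,\C)$, and the vanishing of the complex moment map $\mu_c(x,y)=((xy)_0,y_1x_1,\dots,y_nx_n)$ means $y_ix_i=0$ for $i=1,\dots,n$ and $(\sum_i x_iy_i)_0=0$. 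Setting $\phi_i:=x_iy_i\in\mathrm{End}(\C^r)$, the relations $y_ix_i=0$ say exactly that each $\phi_i$ is a rank $\le 1$ nilpotent endomorphism, hence lies in the closure of the minimal nilpotent orbit of $\mathfrak{sl}_r$ (recall that this closure is precisely $\{AB: A\in\C^{r\times1},\ B\in\C^{1\times r},\ BA=0\}$). Since moreover $\operatorname{tr}\phi_i=y_ix_i=0$, tracelessness together with $(\sum_i\phi_i)_0=0$ forces $\sum_i\phi_i=0$, which is exactly the condition the residues must satisfy for $\phi(z)=\sum_i\tfrac{\phi_i\,dz}{z-p_i}$ to define a global section of $\operatorname{End}(E)\otimes K_{\mathbb{P}^1}(D)$, where $E=\C^r\times\mathbb{P}^1$ is the trivial bundle and $D=\sum_i p_i$. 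The parabolic structure is then obtained, at each $p_i$ with $\phi_i\ne 0$, as the line $L_i:=\operatorname{im}\phi_i\subset E_{p_i}$, which satisfies $\phi_i(L_i)=0$: this is a minimal parabolic structure in the sense of Problem I.4.iv, and the weight at $p_i$ is read off from the corresponding component of $\alpha$.

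Next I would construct the inverse correspondence. Given such a parabolic Higgs bundle $(E,\phi)$, triviality of $E$ lets us fix an isomorphism $E\cong\C^r\times\mathbb{P}^1$; writing the simple-pole Higgs field as $\phi=\sum_i\tfrac{\phi_i\,dz}{z-p_i}$ with $\phi_i=\operatorname{res}_{p_i}\phi$, the orbit-closure hypothesis yields a factorization $\phi_i=x_iy_i$ with $y_ix_i=0$, unique up to the scaling $(x_i,y_i)\mapsto(t_ix_i,t_i^{-1}y_i)$ with $t_i\in\C^\ast$ (on the locus $\phi_i\ne 0$; I would check that on the stable locus of $\mathcal{X}^r_n(\alpha)$ all $\phi_i$ are nonzero, or else allow the degenerate flags used in \cite{steve}). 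The residue theorem on $\mathbb{P}^1$ gives $\sum_i\phi_i=0$, i.e. $(xy)_0=0$, and together with $y_ix_i=0$ this places $(x,y)$ in the zero level of $\mu_c$. The only remaining ambiguities are the choice of trivialization of $E$ and the scalings $t_i$, which are exactly the action of the complexified group $G^{\C}=P(GL(r,\C)\times(\C^\ast)^n)$; by the Kempf--Ness theorem the GIT quotient of $\mu_c^{-1}(0)$ by $G^{\C}$ at parameter $\alpha$ is identified with the hyperkähler quotient $T^\ast\mathrm{Rep}(Q)///_{(\alpha,0)}G$, so both constructions descend to the level of moduli spaces.

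Finally I would match the stability conditions, which is where the genuine work lies. A subrepresentation of $(x,y)$ consists of a subspace $W\subseteq\C^r$ at the central vertex with $\phi_i(W)\subseteq W$, together with a line or zero at each outer vertex compatible with $x_i,y_i$; to each such $W$ corresponds the $\phi$-invariant subbundle $W\otimes\mathcal{O}_{\mathbb{P}^1}\subseteq E$ with its induced parabolic structure, whose parabolic slope I would compare with that of $E$. The two things to verify are: (i) under the explicit affine relation between $\alpha$ and the parabolic weights, the inequality $\mu_\alpha(W')<\mu_\alpha(W)$ is equivalent to $\operatorname{par}\mu(W'\otimes\mathcal{O})<\operatorname{par}\mu(E)$; and (ii) it suffices to test parabolic stability of $(E,\phi)$ against subbundles of the form $W\otimes\mathcal{O}_{\mathbb{P}^1}$, i.e. no $\phi$-invariant parabolic subbundle of negative degree can destabilize a semistable object. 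Point (ii) uses that $E$ is trivial and that on $\mathbb{P}^1$ a rank-$k$ subsheaf of $\mathcal{O}^{\oplus r}$ is a direct sum of line bundles of non-positive degree, so its generic fibre already spans a $k$-dimensional subspace that one can test against instead. I expect (ii), together with pinning down the exact correspondence between $\alpha$ and the weights so that $S$-equivalence on the quiver side matches $S$-equivalence of parabolic Higgs bundles, to be the main obstacle; for generic $\alpha$ (so that semistable equals stable) it reduces to the finite linear-algebra bookkeeping carried out in \cite{steve}.
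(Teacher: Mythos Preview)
The paper does not actually prove this theorem; it is quoted from \cite{steve}, and the only hint the paper gives is the formula $\phi_i=x_iy_i$ in the paragraph preceding the statement. Your proposal expands exactly that dictionary into a full argument, and it is correct in outline and in spirit matches the argument in \cite{steve}: the vanishing of the outer components of $\mu_c$ gives $y_ix_i=0$, hence each $\phi_i=x_iy_i$ has rank at most one and squares to zero, placing it in the closure of the minimal nilpotent orbit; the central component together with $\operatorname{tr}\phi_i=0$ forces $\sum_i\phi_i=0$, which is the residue condition at $\infty$; Kempf--Ness identifies the hyperk\"ahler quotient with the GIT quotient by $G^{\mathbb C}$; and the real work is the comparison of King $\alpha$-stability with parabolic $\alpha$-stability, for which your reduction via the Grothendieck splitting of subsheaves of $\mathcal O_{\mathbb P^1}^{\oplus r}$ to constant subbundles $W\otimes\mathcal O$ is the right move.

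One small correction worth making precise: the parabolic line at $p_i$ should be $L_i=\operatorname{im}x_i$ rather than $\operatorname{im}\phi_i$. For the values of $\alpha$ one cares about, stability forces each $x_i\neq 0$, so $L_i$ is always a line, whereas $\phi_i$ can vanish (when $y_i=0$); when $\phi_i\neq 0$ the two images agree. With this convention the inverse construction and the flag data are well defined on the whole stable locus without needing to treat degenerate flags separately, and the compatibility $\phi_i(E_{p_i})\subseteq L_i$, $\phi_i(L_i)=0$ is automatic from $\phi_i=x_iy_i$ and $y_ix_i=0$.
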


%From \cite{steve}:The complex moment map condition $\sum_i x_i y_i =0$ ensures that $\phi$ is well-defined as a section of $\Endo(E) \otimes K(D)$, and moreover the complex moment map conditions $y_i x_i =0$ imply that $\phi_i$ is strictly triangular with respect to the flag $0 \subset \mathrm{span} \{x_i\} \subset \BC^r$. Different choices of representative $(x,y)$ differ by $g \in GL(r)$, and this modifies $\phi$ by $\phi \mapsto g \phi g^{-1}$. Hence the equivalence class $(E, \phi)$, where $E = \BC^r \times \BP^1$ with parabolic structure defined by the $x_i$, is well-defined independent of the choice of representative $(x,y)$. Moreover, by Proposition \ref{prop-nilpotent-resolution}, we see that the pair $(E, \phi)$ determines $[x,y] \in \CX^r_n(\alpha)$ uniquely. Summarizing, we have the following.
Moreover,  under the identification above as shown in   \cite{steve}  there is a natural Hitchin map
whose base is an affine space of half the total dimension of hyperpolygon space.  
 
\begin{remark}The mathematical construction of the above general hyperpolygon spaces should lead to interesting applications in string theory. In particular, from \cite{tbrane}, these quivers  can be generated by an intersecting brane configuration in F-theory. The central node corresponds to the contribution from the 7-brane wrapped
over the gauge theory curve, and the satellite nodes    to the
flavour branes of the system.
\end{remark}

\subsection{Langlands duality.}
A final correspondence between subspaces of the moduli space of $G_\C$-Higgs bundle comes from Langlands duality.  One of the main reasons for considering branes is the connection to mirror symmetry and the geometric Langlands program. For $^L G_\C$   the Langlands dual group of $G_\C$, there is a correspondence between invariant polynomials for $G_\C$ and $^L G_\C$ giving an identification $\mathcal{A}_{G_\C} \simeq \mathcal{A}_{^L G_\C}$ of the Hitchin basis. The two moduli spaces $\mathcal{M}_{G_\C}$ and $\mathcal{M}_{^L G_\C}$ are then torus fibrations over a common base and their non-singular fibres are dual abelian varieties \cite{dopa}. Kapustin and Witten give a physical interpretation of this in terms of S-duality, using it as the basis for their approach to the geometric Langlands program \cite{Kap}. 

In this approach a crucial role is played by the various types of branes and their transformation under mirror symmetry.  Langlands duality exchanges branes' types according to 
\begin{align}(B,B,B) \leftrightarrow (B,A,A),\nonumber\\(A,B,A) \leftrightarrow (A,B,A),\nonumber\\ (A,A,B) \leftrightarrow (A,A,B),\nonumber\end{align}
   but  the exact correspondence is not yet known. Moreover, by studying branes of Higgs bundles through the Hitchin fibration, one may consider a Fourier-Mukai type transform to obtain the duality in terms of subspaces of the dual fibres, and there are a few things one can say about the branes constructed in Chapter \ref{chap-branes} and Langlands duality:

  \begin{itemize}
 
\item\textbf{3-manifold branes.} Recall that there is a natural   $(A,B,A)$-brane obtained in \cite{real}  as   fixed points of $i_2(\bar \partial_A, \Phi)=(f^*(  \partial_A),f^*( \Phi^*  ))$%= (f^*(\rho(\bar \partial_A)), -f^*( \rho(\Phi) )),\] 
where   $f : \Sigma \to \Sigma$ is a  real structure on $\Sigma$, and $\rho$ is the compact real form of $G_\C$. By means of the compact form $^L\rho$ of the Langlands dual group, these spaces have a  corresponding involution  on $\mathcal{M}_{^L G_\C}$ given by
\[^L{i}_2(\bar \partial_A, \Phi)=(f^*(  \partial_A),f^*( \Phi^*  ))= (f^*(^L\rho(\bar \partial_A)), -f^*( ^L\rho(\Phi) )),\]
whose fixed point set    in $\mathcal{M}_{^L G_\C}$ is conjectured in \cite{real} to be  the dual brane (for details see also \cite[Conjecture 1]{obw}).  
\item\textbf{Pseudo-real Higgs bundles.} 
  As seen in \cite{real}, pseudo-real Higgs bundles  in $ \mathcal{M}_{G_\C}$ form an $(A,A,B)$-brane and can be seen as the fixed point set of the involution
$i_3(\bar \partial_A, \Phi)=(f^* \sigma(\bar \partial_A),f^*\sigma( \Phi)),$
 for $\sigma$ a real form of the complex Lie group $G_\C$.
In this case,  it is conjectured in \cite{real} that the dual brane is the fixed point set of a corresponding involution defined as 
\[^L{i}_3(\bar \partial_A, \Phi)=(f^* \hat{\sigma}(\bar \partial_A),f^*\hat{\sigma}( \Phi)),\]
for a real structure $\hat{\sigma}$ on $^L {G_\C}$ which is conjectured to be the Nadler dual of $\sigma$ (see below for details).
 \item\textbf{Hyperpolygon spaces.} The branes arising  in terms of quivers may carry group structures, and Langlands duality can be considered for them. Some general results about duality between branes arising through quivers representations appear in \cite{vicky}, and geometric properties of hyperpolygon branes in the spirit of \cite{real} appear in \cite{poli}. 
\end{itemize}

One of the most interesting cases of Langlands duality for branes of Higgs bundles appears when considering   the $(B,A,A)$-brane of real Higgs bundles $\MG \subset \mathcal{M}_{G_\C}$, corresponding to real surface group  representations of the compact Riemann surface $\Sigma$. Recall that these branes can be seen through the Cartan involution $\theta$ of a real form $G$ of $G_c$, as the fixed point set of the involution 
\begin{equation}
i_1(\bar \partial_A, \Phi)=(\theta(\bar \partial_A),-\theta( \Phi)).
\end{equation}

 The dual $\MG ^L \subset ^L\mathcal{M}_{^L G_\C}$ must be of type $(B,B,B)$, a submanifold which is complex with respect to $I,J,K$, supporting a hyperholomorphic sheaf. From \cite{real}, the proposed dual brane should be taken as follows (for details see also \cite[Conjecture 2]{obw}):

  \begin{conjecture} \label{con}
   The support of the dual brane to the fixed point set of $i_1$ is the moduli space $\mathcal{M}_{\check{H}}\subset \mathcal{M}_{^LG_{c}}$ of $\check{H}$-Higgs bundles for $\check{H}$ the group associated to the Lie algebra $\check{\mathfrak{h}}$ appearing in \cite[Table 1]{LPS_Nadler}. 
 \end{conjecture}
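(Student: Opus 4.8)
Since the final statement is a conjecture, what follows is a proposal for an attack rather than a complete proof. The natural strategy is to realise the duality fibrewise, using the Hitchin fibration as the analogue of the torus fibration in the Strominger--Yau--Zaslow picture. First I would fix the sublocus $\mathcal{A}^{\circ}$ of the common Hitchin base $\mathcal{A}_{G_\C}\cong\mathcal{A}_{^{L}G_\C}$ over which both branes admit a description in terms of spectral data on smooth (or mildly singular) curves: by Theorem 4.12 of \cite{thesis} this is the full regular locus when $G$ is split, with $\mathcal{M}_G$ meeting each fibre in the $2$-torsion subgroup of the Prym; for the quasi-split cases one uses instead the abelian spectral data of \cite{cayley}, and for the $\mathrm{U}(p,p)$-type case the sub-Pryms of \cite{umm}. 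Dually, one needs the Hitchin-type description of $\check H$-Higgs bundles, noting that the invariant polynomials of $\check{\mathfrak h}$ embed into those of $^{L}\mathfrak g_\C$ so that $\mathcal{A}_{\check H}$ sits inside the common base.

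Over $\mathcal{A}^{\circ}$ the moduli spaces $\mathcal{M}_{G_\C}$ and $\mathcal{M}_{^{L}G_\C}$ are dual abelian fibrations \cite{dopa}, and mirror symmetry is implemented fibrewise by the Fourier--Mukai transform between an abelian variety $A_b$ and its dual $\widehat A_b$. A $(B,A,A)$-brane over $b$ is a complex Lagrangian $Z_b\subset A_b$ carrying a flat unitary bundle, and its Fourier--Mukai transform is a sheaf on $\widehat A_b$ supported on the complex subtorus $Z_b^{\perp}$ annihilating $Z_b$. The core computation is then to identify $Z_b^{\perp}$: for $G$ split one takes $Z_b=A_b[2]$ and the transform is supported on the corresponding $2$-torsion-type subscheme, which should be matched against the Prym occurring in the $\check H$-Hitchin fibre; in the Cayley cases of \cite{cayley} one checks directly that $Z_b^{\perp}$ coincides with $\mathrm{Jac}(\bar S)[2]$ together with the correct orientation data. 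The Lie-theoretic input at this stage is Nadler's identification of $\check{\mathfrak h}$: one must verify that the lattice and polarisation data produced by the transform reproduce exactly the group in \cite[Table 1]{LPS_Nadler}, and not merely an isogenous one.

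The final step is globalisation: assembling the fibrewise transforms into a single $(B,B,B)$-brane, showing the supports glue to $\mathcal{M}_{\check H}\cap h^{-1}(\mathcal{A}^{\circ})$ and extending by closure over the discriminant, and showing that the transforms of the flat bundles glue to a hyperholomorphic sheaf --- the dual of the structure sheaf of $\mathcal{M}_G$ being the hyperholomorphic ``automorphic'' sheaf predicted by Kapustin--Witten \cite{Kap}. Matching the topological invariants that label components of $\mathcal{M}_G$ (the Toledo- and Stiefel--Whitney-type classes appearing in Problem set III) with discrete data on $\check H$-Higgs bundles belongs here as well.

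I expect the principal obstacle to be the non-split, non-quasi-split real forms --- precisely those, such as $\mathrm{Sp}(2p,2p)$, $\mathrm{SU}^{*}(2p)$ and $\mathrm{SO}^{*}(2p)$, whose $(B,A,A)$-brane lies entirely over the discriminant and is governed by the non-abelian spectral data of \cite{nonabelian}. There is then no smooth fibre on which to run the Fourier--Mukai computation, so the duality must be established directly on singular Hitchin fibres (compactified Jacobians and fine moduli of torsion-free sheaves), where the transform is far more delicate and the dual hyperholomorphic sheaf is genuinely singular. Even in the abelian cases, the uniform bookkeeping of lattices, polarisations and component groups needed to pin down $\check H$ across Cartan's classification is not yet available, and identifying the dual \emph{sheaf} --- not just its support --- is essentially an instance of the geometric Langlands correspondence itself.
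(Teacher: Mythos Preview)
The statement is a conjecture, and the paper does not supply a proof: immediately after stating it the author writes that ``the above conjecture has not been proven yet'' and then surveys partial evidence (Branco's thesis \cite{lucas} for the non-abelian cases, Hitchin's \cite{classes} for $U(m,m)$, and the split real form computations in \cite{thesis,mono,mono1,mono2}). So there is nothing in the paper to compare your argument against line by line.

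Your proposed attack is, however, exactly the programme the paper is implicitly endorsing. The fibrewise Fourier--Mukai strategy over the common Hitchin base, feeding in the spectral descriptions of $\mathcal{M}_G$ from \cite{thesis,umm,cayley,nonabelian} and matching the annihilator subtori against the $\check H$-Hitchin fibres, is precisely what is carried out in the special cases the paper cites as evidence (notably \cite{classes}). Your identification of the main obstruction --- that the $\mathrm{Sp}(2p,2p)$, $\mathrm{SU}^*(2p)$, $\mathrm{SO}^*(2p)$ branes lie entirely over the discriminant, forcing one to work with Fourier--Mukai on singular fibres --- is also the point the paper singles out, and is exactly what Branco's thesis \cite{lucas} begins to address. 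One small sharpening: for split $G$ the fibrewise support of $\mathcal{M}_G$ is the finite group of $2$-torsion points, so the Fourier--Mukai transform of its structure sheaf is supported on the \emph{whole} dual fibre (not a proper subtorus); the content of the conjecture in that case is that the resulting sheaf is the pushforward of a hyperholomorphic sheaf from the Nadler group's Hitchin fibre, which is where the lattice bookkeeping you flag becomes essential.
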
 
 
Whilst the above conjecture has not been proven yet, a lot of work as appeared in different directions with progress towards it. In what follows we shall mention some of this work
%, and for completion, we shall include some of the entries of \cite[Table 1]{LPS_Nadler} below. 
%
%\begin{center}
%\begin{tabular}{lc|ccc|cr}
%$G_\C$& ~~&~~&$\mathfrak{g}$ & ~~&~~& $\check{\mathfrak{h}}$\\\hline&&&&&&\\
%SL(n,C)& ~~&~~&$\mathfrak{g}$ & ~~&~~& $\check{\mathfrak{h}}$\\\hline\\
%\end{tabular} \\~\\
%Table 1: 
%\end{center}
 
 \begin{itemize} \item \textbf{Branes with non-abelian data.} As mentioned before, the $(B,A,A)$-branes of $SL(m,\mathbb{H})$, $SO(2n,\mathbb{H})$ and $Sp(2m,2m)$-Higgs bundles were shown to have non-abelian intersection with the Hitchin fibration \cite{nonabelian}. The duality for these branes was studied in Branco's DPhil thesis recently, where the geometry of the whole singular fibres containing these branes was described before obtaining the dual spaces \cite{lucas}.
  \item \textbf{Quasi split real forms.} The intersection of the $(B,A,A)$-brane of $U(m,m)$-Higgs bundles with the regular fibres of the Hitchin fibration was studied in \cite{umm}, and through this description a candidate for the dual brane, agreeing with Conjecture \ref{con}, is presented in \cite{classes}. Moreover, for $m=1$, further support of the conjecture using cohomology classes and topological invariants  recently appeared in \cite{tamasnew}.
 \item\textbf{Split real forms.}  In the case of a split real form $G$ of $G_\C$, it is shown  in \cite[Theorem 4.12]{thesis} that the $(B,A,A)$-brane obtained through the involution $i_1$ lies over the  Hitchin base as 2-torsion points, and hence a 0-dimensional space in each fibre. The explicit description of the branes (as well as the monodromy action for most of the classical split real groups \cite{thesis,mono,mono1,mono2}) should present a good starting point to study mirror symmetry, noting the similarities with the case studied by Hitchin in \cite{classes}.

 \end{itemize}
%%%%%%%%%%%%%%

\subsection{Problem set IV.}
%\addcontentsline{toc}{section}{Problem set IV}

This problem set is meant to make the  reader be acquainted with the space of polygons and hyperpolygons, and their interpretation in terms of Higgs bundles. Moreover, the reader will find the problems helpful to understand the different correspondences between Lie groups and their induced correspondences between Hitchin systems.  

\begin{problem} \label{prob1} \textbf{Isogenies.} 
 To understand the  tensor product of an $\SL(4,\R)$-Higgs bundles with itself, start with an   $\SL(4,\R)$-Higgs bundle $(E,\varphi)$ where $E$ has rank 4 and trivial determinant, and $\Phi$ is trace free and symmetric with respect to a local frame $\{e_1,\dots,e_4\}$. Then the Higgs field on $E\otimes E$ is
\[\Phi=\varphi\otimes I+ I\otimes\varphi. \]
\begin{itemize}

\item[i.] Recall that for any vector bundle $\mathbf{V}$ of rank $n$, and for all $k<n$, there is an isomorphism 
\begin{equation}\Lambda^k(\mathbf{V}^*)\otimes\Lambda^n(\mathbf{V})\longrightarrow \Lambda^{n-k}(\mathbf{V})\ .\end{equation}
Which structure arises when considering $n=4$, $k=2$, the determinant  $\Lambda^4 \mathbf{V}\simeq \mathcal{O}$, and  such that $\mathbf{V}$ has an orthogonal structure given by an isomorphism $q:\mathbf{V}\simeq \mathbf{V}^*$.

%  In particular, if $n=4$, $k=2$ and $\Lambda^4 V\simeq \mathcal{O}$  we get an isomorphism
%%
%\begin{equation}\label{eqn:wedge}
%\iota: \Lambda^2 V^*\rightarrow \Lambda^2 V
%\end{equation}
%
%\noi If moreover $V$ has an orthogonal structure given by an isomorphism $q:V\simeq V^*$, then $q$ induces an isomorphism
%
%\begin{equation}
%q: \Lambda^2 V\rightarrow\Lambda^2 V^*
%\end{equation}
%
%Combining this with \eqref{eqn:wedge} we get an isomorphism
%
%\begin{equation}\label{eqn:*}
%*=\iota\circ q: \Lambda^2 V\rightarrow\Lambda^2 V
%\end{equation}
%
%\begin{lemma} \label{lemma:*}
%Let $\{e_1,e_2,e_3,e_4\}$ be an oriented, orthogonal local frame for $V$, i.e. a frame such that $q(e_i)(e_j)=1$ and $e_1\wedge e_2\wedge e_3\wedge e_4 =1$ under the isomorphism $\Lambda^4 V\simeq \mathcal{O}$. Then with respect to the local frame 
%\begin{equation}\label{eqn:F0}
%\mathcal{F}_0=\{e_1\wedge e_2, e_1\wedge e_3, e_1\wedge e_4, e_2\wedge e_3, e_4\wedge e_2, e_3\wedge e_4\}
%\end{equation}
%
%\noi  the isomorphism $*$ is represented by 
%
%\begin{equation}
%*=\begin{bmatrix}0&0&0&0&0&1\\
%0&0&0&0&1&0\\
%0&0&0&1&0&0\\
%0&0&1&0&0&0\\
%0&1&0&0&0&0\\
%1&0&0&0&0&0
%\end{bmatrix}
%\end{equation}
%
%In particular, the isomorphism \eqref{eqn:*} is an involution, i.e. $*^2=1$.  
%
%\end{lemma}

\item[ii.] With respect to the local frame for 
\begin{equation}\label{Esum}
E\otimes E=\Lambda^2E\oplus Sym^2(E),
\end{equation}
how does the restriction of $\Phi$ to $\Lambda^2E$ look like? does it have the structure of a $G$-Higgs field for some real group $G$?
%
%We know that the restriction of $\Psi$  to $\Lambda^2E$ has the correct symmetries for an $\SO_0(3,3)$-Higgs field.   

\item[iii.] Compute how the whole Higgs field $\Phi$ looks like with respect to the frame in Problem \ref{prob1}\textbf{.ii.}, this is, with respect to
\[\{e_i\wedge e_j\ | i<j\}\cup\{e_i\otimes e_i\ | i=1,2,3,4\}\cup\{e_i\otimes e_j+e_j\otimes e_i\ | i<j\}.\]
\end{itemize}
%
%It's easier to try several specific numeric examples for $\Phi$.  As explained by Steve Bradlow, this is  because the printout in SAGE is formatted in a way that makes it difficult to see patterns in a $16\times 16$ matrix. The answer to the question seems to be that 
%\begin{itemize}
%\item[(a)] the Higgs field decomposes w.r.t.  \eqref{Esum} and 
%\item[(b)] is of the form
%
%\begin{equation}\label{form}
%\Psi=\begin{bmatrix}A_1&0&0\\
%0&D&b\\
%0&b^t&A_2\end{bmatrix}
%\end{equation}
%
%\noi where 
%
%\begin{itemize}
%\item $A_1$ is an $\SO_0(3,3)$-Higgs field,
%\item $D$ is a trace free, $4\times 4$ diagonal matrix,
%\item $b$ is $4\times 6$ and
%\item $A_2$ is symmetric with respect to $I$ and symmetric with respect to the standard $Q$ (i.e. with 1's on the main opposite diagonal) except that the main diagonal is skew-symmetric with respect to $Q$. It looks like it's almost identical to $A_1$ except with a couple of sign changes.
%\end{itemize}
%\end{itemize}

\end{problem}
\begin{problem}\label{prob2}\textbf{Isogenies.} If the homomorphism $\Psi$ from \eqref{iso2} restricts to a map between real forms which respects the Cartan decompositions of the Lie algebras, then it induces a map between real Higgs bundles. Considering the morphisms between complex groups, calculate the following:
\begin{itemize}
\item[i.] Show what the isogeny between $SL(4,\C)$ and $SO(6,\C)$ implies when considered over real symplectic Higgs bundles, this over pairs $(E,\varphi)$ where  the bundle is $V\oplus V^*$ and the Higgs field has   symmetric entries \[\varphi=\left(\begin{array}{cc}0&\beta\\\gamma&0\end{array}\right).\]
% \label{spectralisom} As noted before, the moduli spaces of $SO(6,\C)$ and $SL(4,\C)$-Higgs bundles fibre over the same Hitchin base
% \[\CA=\bigoplus_{i=2}^{4}H^0(\Sigma, K^i).\] 
% By Bertini's Theorem, a generic point  $a=(a_{2}, a_3, a_4)\in \CA$ has $a_4$ with simple zeros and  defines a smooth 4-fold cover $\rho_4:S_4\rightarrow \Sigma$. The differentials $a_i$ can be seen in terms of both an $SL(4,\C)$-Higgs bundle $(V_4,\Phi_4)$, and an $SO_0(3,3)$-Higgs bundle $(V_6, \Phi_6)$ defined as in the previous sections. In particular, one has that
% \begin{equation}\begin{array}{cccccc}
% a_2&=& \textrm{Tr}(\Lambda^2 \Phi_4)  &=& \textrm{Tr}(\Lambda^2 \Phi_6) &\in H^{0}(\Sigma, K^2)  \\
%  a_3&=& \textrm{Tr}(\Lambda^3 \Phi_4)  &=& \sqrt{\textrm{Det}( \Phi_6)}& \in H^{0}(\Sigma, K^3)\\
%   a_4&=& \textrm{Det}( \Phi_4)  &=& \textrm{Tr}(\Lambda^4 \Phi_6) &\in H^{0}(\Sigma, K^4)\end{array}\nonumber
% \end{equation}
% 
\item[ii.] What are the relations between the eigenvalues of an $SL(4,\R)$-Higgs bundle and those of an $SO(3,3)$-Higgs bundle obtained thought the isogeny between complex Lie groups?
\end{itemize}

\end{problem}
\begin{problem} \textbf{Polygons.} We say that   $n$ ordered points $(p_1, \dots , p_n)$ on $\mathbb{P}^1$ are ${r}$-semistable if, for all $p_0 \in \mathbb{P}^1$ one has that 
\[\sum_{i\ |\ p_i = p_0} r_i \leq \sum_{i\ |\ p_i \neq p_0} r_i\]
\begin{itemize}
\item[i.] Show that the moduli space of ${r}$-semistable $n$ ordered points on $\mathbb{P}^1$  modulo the automorphisms of the projective line moduli space can be constructed via GIT as a quotient of the $\SL_2$-action on $(\mathbb{P}^1)^n$ with respect to an ample linearisation  associated to the weights (e.g. see \cite[Section 5.2]{vicky})

\item[ii.]  Describe the space of representations corresponding to polygon space for a fixed star shaped quiver and values of $d,n$. 

% Hence, $\rep_{Q_n,d} \cong (\AA^{2})^n$ and a point $M \in \rep_{Q_n,d}$ is a tuple $M=(M_1, \cdots , M_n) $ of $k$-linear maps $M_i: k \lra k^2$; we note that the injectivity of all of these maps $M_i$ is a necessary condition for semistability for any tuple of positive weights $r$. 
\item[iii.] Show that every automorphism of the star shaped quiver $Q_n$ must fix $v_0$, so 
$textrm{Aut}(Q_n) = \textrm{Aut}^+(Q_n)\cong S_n.$
\item[iv.] Construct two subgroups of $\textrm{Aut}(Q_n)$ and describe their actions on $Q_n$, as well as the quotient quivers under those group actions. 
\end{itemize}

\end{problem}
\begin{problem} \label{prob44}\textbf{Hyperpolygons.} Consider the quiver $Q$  given by
$\xymatrix@1{  1 \bullet \ar@/^/[r]^{x} & \bullet 2 \ar@{-->}@/^/[l]^{y}   } $
\begin{itemize}
\item[i.] For  $\sigma$ the covariant involution which sending the vertex $1$ to the vertex $2$, and sending the arrow $x$ to the arrow $y$, when is $d=(d_1,d_2)$ the dimension vector  $\sigma$-compatible in the sense of \cite{vicky}? % if and only if $d_1 = d_2$
\item[ii.] Encode the information of a $G_\C$-Higgs bundle $(E,\Phi)$ on $\mathbb{P}^1$ within the above quiver, for some particular complex or real Lie group $G_\C$.

\item[iii.] In terms of hyperpolygons, describe the moduli space of parabolic Higgs bundles associated to  Example \ref{exquiv} and its hyperk\"ahler structure?
\end{itemize}

\end{problem}
\begin{problem} \textbf{Hyperpolygons.} Having seen how polygons and hyperpolygons can be understood in terms of Higgs bundles, consider the framed Jordan quiver% $\mathcal{J}$, given by 
\begin{equation}\mathcal{J}:=~
\xymatrix{1 \bullet \ar@{-->}@(dl,dr)_{b} \ar@(ul,ur)^{a} \ar@/_/[r]_{x} & \bullet l(1) \ar@{-->}@/_/[l]_{y} }\end{equation}

\begin{itemize}
\item[i.] Give an expression of the real and complex moment maps (up to scale) associated to such quivers following \cite{steve}. 
\item[ii.] Let $\sigma:\mathcal{J}\rightarrow \mathcal{J}$ be the involution of the Jordan quiver which exchanges the vertices, and exchanges the arrows. Through the induced action on the moduli space, give a description of the fixed point set. 
\item[iii.] As mentioned before, the space of hyperpolygons studied in the previous sections inherits a hyperk\"ahler structure and thus one may consider three fixed complex structures and their associated symplectic forms, and how these relate to a given subspace. Following \cite{vicky} and \cite{poli}, decide whether the fixed point set in Problem \ref{prob44}\textbf{.ii.} is a brane, and if so, give its type. 
\end{itemize}

\end{problem}
 
\bigskip

\noindent \textbf{Acknowledgments}. The author would  like to thank the Institute of Advance Studies and the organisers of the  \textit{   2019 Graduate Summer School}, as well as the director of PCMI Rafe Mazzeo,  for providing an ideal environment for research and collaboration. We are also thankful to Steve Rayan, Charles Alley, Melissa Zhang, Mengxue Yang, Szilard Szabo, and Richard Derryberry,  for useful comments on a draft of the notes; to Siqi He and Richard Derryberry for their help running the course, and to Ian Morrison for his help with the notes.   The author also acknowledges support from the Humboldt Foundation as well as from the U.S. National Science Foundation grants   DMS-1509693 and CAREER DMS 1749013.

 %%%%%%%%%%%%%%%%%%%%%%%%%%%%%%%%%%%%%%%%%%%%%%%%%%%%%%%%%%%%%%%%%%%%%
%    
% To add references to your document, replace the two \bib commands below. 
%
%         1. You can use a list of \bib commands for the items you reference as is
%         done in our toy example here.
%
%         2. A second option is to use the command 
%             \bibselect{yourltbfile}
%         to point to a file of \bib commands that should be named 
%         yourltbfile.ltb and be placed in the same folder as your LaTeX
%         source files. 
%
%         3. A third option is to use the command 
             \bibliography{yourbibfile}
%             \bibselect{yourltbfile}

%         to point to a file of BibTeX \bib commands that should be named 
%         yourltbfile.bbl and be placed in the same folder as your LaTeX
%         source files. 
%   
% If you use option 3. above, you should comment out or delete the lines
%            \begin{bibdiv}
%                \begin{biblist}
%        before the \bib command below as well as the line
%                  \end{biblist}
%              \end{bibdiv}
%        after it. 
%
% If you use options 2. or 3. and wish to make your source file self-contained you may
%         for final submission, simply copy the \bib entries to your \LaTeX\ file and
%         wrap them, if necessary, as indicated above.
%  
%%%%%%%%%%%%%%%%%%%%%%%%%%%%%%%%%%%%%%%%%%%%%%%%%%%%%%%%%%%%%%%%%%%%%
%
\bibspread

\end{document}